\newlist{problems}{enumerate}{1}
\setlist[problems]{label={\arabic*.}, ref={\arabic{part}.\thechapter.\arabic*}}
\pgfplotsset{compat=1.11}
\renewcommand\paragraph{\@startsection{paragraph}{4}{\z@}
{-2.5ex\@plus -1ex \@minus -.25ex}
{1.25ex \@plus .25ex}
{\normalfont\normalsize\bfseries}}
\def\beqr{\begin{eqnarray}}
\def\eeqr{\end{eqnarray}}
\def\beqrs{\begin{eqnarray*}}
\def\eeqrs{\end{eqnarray*}}
\newtheorem{thm}{Theorem}
\newtheorem{assum}{Assumption}
\newtheorem{lemma}{Lemma}
\newtheorem{prop}{Proposition}
\newtheorem{example}{Example}
\crefname{thm}{theorem}{theorems}
\crefname{cond}{condition}{conditions}
\crefname{assum}{assumption}{assumptions}
\crefname{lemma}{lemma}{lemmas}
\crefname{prop}{proposition}{propositions}
\crefname{coro}{corollary}{corollaries}
\crefname{figure}{figure}{figures}
\Crefname{cond}{Condition}{Conditions}
\Crefname{thm}{Theorem}{Theorems}
\Crefname{assum}{Assumption}{Assumptions}
\Crefname{lemma}{Lemma}{Lemmas}
\Crefname{prop}{Proposition}{Propositions}
\Crefname{coro}{Corollary}{Corollaries}
\Crefname{figure}{Figure}{Figures}
\newtheoremstyle{normalfont}{0.5em}{0.5em}{\normalfont}{}{\bfseries}{.}{0.5em}{} 
\theoremstyle{normalfont} 
\newtheorem{remark}{Remark}[section]
\def\mR{\mathbb{R}}
\def\calG{\mathcal{G}}
\def\calH{\mathcal{H}}
\def\calL{\mathcal{L}}
\def\calM{\mathcal{M}}
\def\calN{\mathcal{N}}
\def\calO{\mathcal{O}}
\def\calR{\mathcal{R}}
\def\calV{\mathcal{V}}
\def\calX{\mathcal{X}}
\def\calY{\mathcal{Y}}
\def\calZ{\mathcal{Z}}
\def\1{{\bm 1}}
\def\P{{\mathbf{P}}}
\def\0{{\bm 0}}
\def\1{{\bm 1}}
\def\indep{\bot\!\!\!\bot}
\def\nindep{\not\!\!\!\bot\!\!\!\bot}
\def\wh{\widehat}
\def\wt{\widetilde}
\newcommand{\convd}{\xrightarrow{d}}
\newcommand{\amin}{\operatornamewithlimits{arg\,min}}
\newcommand{\ainf}{\operatornamewithlimits{arg\,inf}}
\newcommand{\bigbc}[1]{\left\{#1\right\}} 
\newcommand{\pprime}{^\prime}
\newcommand{\trans}{^{\mathrm{\scriptscriptstyle T}}} 
\newcommand{\Abs}[1]{\left\vert#1\right\vert}
\newcommand{\boverbar}[1]{\bm{\mkern 2mu\overline{\mkern-2mu#1\mkern-2mu}\mkern 2mu}}
\newcommand{\bbar}[1]{\bm{\bar{#1}}}
\newcommand{\bunderline}[1]{\bm{\underline{#1}}}
\newcommand{\Rlogo}{\protect\includegraphics[height=1.8ex,keepaspectratio]{./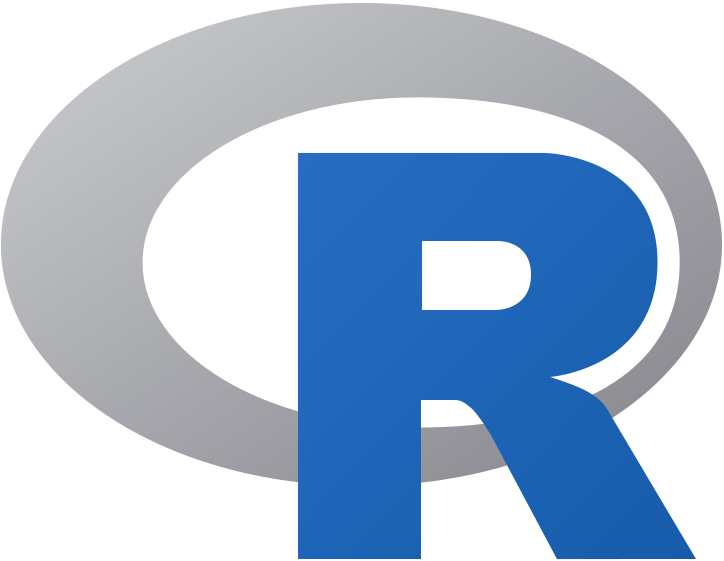}}
\def\papertitle{Nonparametric Estimation of Path-specific Effects in Presence of Nonignorable Missing Covariates}
\title{\papertitle}
\author[1]{Jiawei Shan\thanks{Both authors contributed equally to this work.}}
\author[2]{Ting Wang$^*$}
\author[3]{Wei Li}
\author[4]{Chunrong Ai}
\affil[1]{ Institute of Statistics and Big Data, Renmin University of China}
\affil[2,3]{ Center for Applied Statistics and School of Statistics, Renmin University of China}
\affil[4]{ School of Management and Economics, The Chinese University of Hong Kong, Shenzhen}
\affil[$~$]{\small \texttt{$^{1,2,3}$\{jwshan,ting.wang,weilistat\}@ruc.edu.cn, $^{4}$chunrongai@cuhk.edu.cn}}
\date{\today}
\begin{document}

\begin{bibunit}
\pdfbookmark[1]{Title}{title}
\maketitle
\begin{abstract}
The path-specific effect (PSE) is of primary interest in mediation analysis when multiple intermediate variables between treatment and outcome are observed, as it can isolate the specific effect through each mediator, thus mitigating potential bias arising from other intermediate variables serving as mediator-outcome confounders. However, estimation and inference of PSE become challenging in the presence of nonignorable missing covariates, a situation particularly common in epidemiological research involving sensitive patient information. In this paper, we propose a fully nonparametric methodology to address this challenge. We establish identification for PSE by expressing it as a functional of observed data and demonstrate that the associated nuisance functions can be uniquely determined through sequential optimization problems by leveraging a shadow variable. Then we propose a sieve-based regression imputation approach for estimation. We establish the large-sample theory for the proposed estimator, and introduce a robust and efficient approach to make inference for PSE. The proposed method is applied to the NHANES dataset to investigate the mediation roles of dyslipidemia and obesity in the pathway from Type 2 diabetes mellitus to cardiovascular disease.
\end{abstract}

\begin{center}
\textsc{Keywords:} Efficient estimation; mediation analysis; mediator-outcome confounding; missing not at random; multiple mediators. 
\end{center}

\setcounter{page}{0}
\thispagestyle{empty}
\clearpage

\section{Introduction}
\label{sec:intro}
Mediation analysis is a powerful analytic tool to explore the mechanisms of treatment-outcome relationships in both experimental and observational studies. 
While it has traditionally been studied based on structural equation models \citep{BaronKenny1986}, the model-free concepts of natural direct and indirect effects were later formalized by \cite{RobinsGreenland1992} and \cite{Pearl2001} within the potential outcomes framework. 
This formalization separates identification assumptions from parametric constraints, enabling nonparametric identification and semiparametric inference for these causal estimands \citep{ImaiKeeleTingley2010,TchetgenTchetgenShpitser2012}.

In epidemiological studies, mediation analysis has been widely employed to explore the mediated risk factors in the development of chronic diseases, for example, mechanisms by which type 2 diabetes mellitus (T2DM) causes cardiovascular disease \citep{ sharif2019mediation, wang2022role}. 
As indicated by \cite{zhang2023triglyceride}, several intermediate variables, such as obesity and dyslipidemia, play a crucial role in the relationship between T2DM and cardiovascular events.
The conventional mediation analysis approach typically regards intermediate variables collectively as a single vector-valued mediator variable, which makes it difficult to distinguish the specific mediation role of each intermediate variable.
\cite{sharif2019mediation} examined the natural indirect effects of each risk factor separately. However, despite controlling for a set of pre-treatment covariates, their results may still be subject to potential bias if one intermediate variable induced by the treatment T2DM becomes a confounder between another intermediate variable and the cardiovascular outcome. For example, dyslipidemia is a common symptom for T2DM patients, which subsequently influences both obesity and the cardiovascular outcome \citep{vekic2019obesity}. 
This is known as a mediator-outcome confounding problem in mediation analysis \citep{AvinShpitserPearl2005}, and a remedy is to consider the path-specific effect (PSE) instead of the natural indirect effect \citep{Shpitser2013,daniel2015causal,miles2017quantifying}. 
Essentially, the PSE mitigates mediator-outcome confounding by controlling for the counterfactuals of intermediate variables preceding the mediator of interest at the same treatment status when switching the potential mediator values. 
\cite{miles2020semiparametric} proposed a semiparametric approach for estimation of PSE in presence of mediator-outcome confounding. 
\cite{zhou2022semiparametric} extended the framework to accommodate multiple mediators and developed a general methodology for identifying and estimating the PSEs.

A prevalent challenge impeding the estimation and inference of the PSE in epidemiological studies is the issue of missing data.
Sensitive information related to privacy, such as lifestyle, mental health status, socioeconomic status, and other self-reported information, are often withheld by respondents. These involved variables are sometimes crucial covariates in identifying causal effects, particularly as pre-treatment covariates to eliminate the confounding bias.
For example, in empirical research exploring the causal mechanism between T2DM and cardiovascular disease using data from the National Health and Nutrition Examination Survey (NHANES), baseline covariates such as age, gender, race, hypertension, and drinking are considered, with approximately $36\%$ of the drinking data missing \citep{zhang2023triglyceride}.
Compounding the issue, this missingness is likely nonignorable, also known as missing not at random \citep[MNAR,][]{Rubin1976}, as individuals aware of the health risks associated with heavy drinking may be more reluctant to answer related questions \citep{bartlett2014improving}.
A widely adopted approach for addressing nonignorable missingness is the use of shadow variables \citep{dHaultfoeuille2010,WangShaoKim2014,ZhaoMa2022,MiaoLiuLiTchetgenTchetgenGeng2024}. In our example, alcoholic hepatitis is considered a suitable shadow variable for recovering missing information about alcohol consumption because it is associated with long-term alcohol use; besides, it is diagnosed through physiological indicators obtained from laboratory data, which is independent of the missing process related to self-reported drinking behavior. These factors imply that alcoholic hepatitis meets the necessary criteria for a shadow variable.
Several studies have utilized auxiliary or other observed variables as shadow variables to address nonignorable missing data in causal inference \citep{YangWangDing2019} or mediation analysis \citep{LiZhou2017, ZuoGhoshDingYang2024, ShanLiAi2024}. 
However, current methods for mediation analysis with potential mediator-outcome confounding and multiple mediators subject to nonignorable missing covariates are still lacking.

In this paper, we address these gaps in the current literature by providing a comprehensive methodology for identifying, estimating and making inference about the PSE when multiple intermediate variables are observed and pre-treatment covariates are subject to nonignorable missingness. 
We first establish identification of PSE by expressing it as a functional of observed data using a weighting method. The weight is identified by an integral equation with an available shadow variable, while the other nuisance functions are demonstrated to be unique solutions to specific weighted least squares optimization problems.
Building on these identification results, we propose a sieve-based regression imputation estimator and establish its asymptotic normality
despite the well-known challenges of ill-posedness in solving the integral equation \citep{Kress_LinearIntegralEquations}.
Our estimator exhibits two main advantages over classical parametric regression imputation methods: it is more flexible in accommodating nonlinear structures as it nonparametrically models and estimates nuisance functions using versatile sieves like power series and splines; and it is shown to attain the semiparametric efficiency bound. 
We also propose a novel approach for estimating the asymptotic variance and making efficient inference for the PSE. Unlike previous methods that estimate the density appearing in the influence function separately, we identify density ratios as a whole through optimization problems and employ a one-step estimation process.

The rest of the paper is organized as follows. 
\Cref{sec:setup} defines the counterfactuals, the PSE of interest and specifies main assumptions.
\Cref{sec:estimation} derives the identifiable expression of the parameter of interest and introduces our estimation approach.
\Cref{sec:inference} derives the large-sample properties and our inference methods. 
\Cref{sec:simulation} provides a simulation study, and we apply our method to NHANES in \Cref{sec:application}.
Technical proofs and additional results are provided in the supplementary materials.

\section{Setup} \label{sec:setup}

Let $A$ denote a binary treatment, and $Y$ the outcome of interest. Let $X$ be a set of pre-treatment covariates subject to nonignorable missingness, and let the binary variable $R$ be the missingness indicator with $R=1$ if $X$ is fully observed and 0 otherwise.
Suppose we observe $K$ causally ordered clusters of intermediate variables between $A$ and $Y$, denoted by $M_1, \dots, M_K$, such that no component of $M_{k^\prime}$ causally affects any component of $M_k$ for $k<k^\prime$.
Classical mediation analysis, which views all intermediate variables as a vector-valued mediator \citep{Pearl2001}, is a special case here with $K=1$.
There are $2^K$ possible causal paths from treatment to outcome through different combinations of the $K$ mediators. For example, when $K=2$, the four paths are: (a) $A\to Y$; (b) $A\to M_1\to Y$; (c) $A\to M_2\to Y$; (d) $A\to M_1\to M_2\to Y$.
Throughout, for any generic sequence $\{v_1,v_2,\ldots\}$, we write $\bbar{v}_k=(v_1,\ldots,v_k)$, and $\bbar{v}_0=\emptyset$.
We employ the potential outcome framework to define path-specific effects.
Let $Y(a, \boverbar{m}_K)$ denote the potential outcome under treatment status $A = a$ and mediator values $\boverbar{M}_K = \boverbar{m}_K$, $M_k(a, \boverbar{m}_{k-1})$ the potential value of the mediator $M_k$ under treatment status $a$ and mediator values $\boverbar{M}_{k-1} = \boverbar{m}_{k-1}$ for $k\in\{1,\ldots,K\}$.
It allows us to define the \emph{counterfactual functionals}:
$$
\psi(\bbar{a}_{K+1}) \triangleq E\left\{Y\left(a_{K+1}, \boverbar{M}_K\left(\bbar{a}_K\right)\right)\right\},
$$
where $\boverbar{M}_k\left(\bbar{a}_k\right) =\big(\boverbar{M}_{k-1}(\bbar{a}_{k-1}), M_k(a_k, \boverbar{M}_{k-1}(\bbar{a}_{k-1}))\big)$ for $k\in\{1,\ldots,K\}$, and $a_1, \ldots, a_{K+1}$ are treatment statuses with each taking values 0 or 1. For example, when $K=2$, $\psi(\bbar{a}_3)=E\{Y(a_3,M_1(a_1),M_2(a_2,M_1(a_1))\}$.
Essentially, $\psi(\bbar{a}_{K+1})$ excludes a proportion of nested counterfactuals, for example, $E\{Y(a_3,M_1(a_1),M_2(a_2,M_1(a_1^\prime))\}$ with $a_1\neq a_1^\prime$ when $K=2$, which is typically unidentifiable according to the recanting witness criterion \citep{AvinShpitserPearl2005}. 
A set of PSEs of interest are introduced by $\psi(\bbar{a}_{K+1})$, as illustrated in the following examples.

\begin{example}[Mediation Analysis with treatment-induced mediator-outcome confounding]
\label{exam:m-o-confounding}
Suppose $K=2$ and $M_2$ is the mediator of interest, as shown in \Cref{fig:mediation analisis}. 
The natural direct and indirect effects are typically not identifiable because the mediator-outcome confounder $M_1$ induced by the treatment exists. Specifically, the natural indirect effect through $M_2$ is defined following \cite{Pearl2001} as $E\{Y(1,M_2(1))\} - E\{Y(1,M_2(0))\}$, which is equal to
\begin{align}\label{eqn:nie}
E\{Y(1,M_1(1),M_2(1,M_1(1)))\} - E\{Y(1,M_1(1),M_2(0,M_1(0)))\},
\end{align}
by the composition assumption where
$M_1$ is set to what it should be in the nested counterfactual. 
The first term in \eqref{eqn:nie} corresponds to $\psi(1,1,1)$, while the second term is typically unidentifiable because it accounts for counterfactuals under conflicting treatment values \citep{RobinsRichardson2011}. 
Referring to \cite{Shpitser2013} and \cite{miles2020semiparametric}, an alternative estimand quantifying the path-specific effect along the path $A \to M_2 \to Y$, which is of primary interest, is defined via $\psi(\bbar{a}_{K+1})$ as follows:
\begin{align*}
\text{PSE}_{A \to M_2 \to Y} = \psi(1,1,1) - \psi(1,0,1).
\end{align*}

\begin{figure}[htbp]
\centering
\centering
\includegraphics[width=0.4\textwidth]{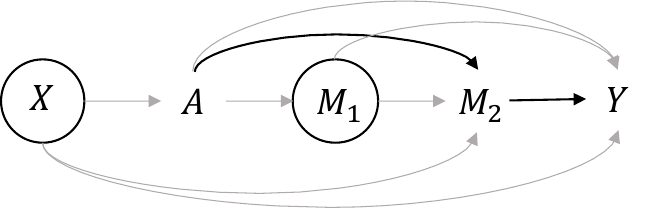}
\caption{Causal DAGs with the thick arrows indicating the PSE of interest.}
\label{fig:mediation analisis}
\end{figure}

\end{example}

\begin{example}[Mediation analysis with multiple ordered mediators]
As an extension of \Cref{exam:m-o-confounding}, 
we aim to describe the mediation effects through each intermediate variable among $M_1,\ldots,M_K$.
Referring to \cite{daniel2015causal} and \cite{zhou2022semiparametric}, the 
natural direct effect of $A$ on $Y$, and the portion of path-specific natural indirect effects that operates through the $k$th mediator $M_k$, are respectively defined as
\begin{align*}
\text{NDE}_{A \to Y} =&~ \psi(a_{K+1} = 1, \bbar{a}_K=\0) - \psi(\bbar{a}_{K+1}=\0),~~\\
\text{NIE}_{A \to M_k \rightsquigarrow Y} =&~ \psi( \bbar{a}_{k-1}=\0,\bunderline{a}_k=\1) - \psi( \bbar{a}_k=\0,\bunderline{a}_{k+1}=\1),~k\in\{1,\ldots,K\},
\end{align*}
where $\bunderline{a}_k=(a_k,\ldots,a_{K+1})$.
\Cref{fig:causal path} is a causal DAG for illustration. For instance, $\text{NIE}_{A \to M_k \rightsquigarrow Y}$ is a composite effect of the causal paths $A\to M_k\to Y$, $A\to M_k\to M_{k+1}\to Y$, $\ldots$, and $A\to M_k\to \ldots \to M_K\to Y$.
Then the total effect (TE) of $A$ on $Y$ can be decomposed as
\begin{align*}
\text{TE} \triangleq \psi(\bbar{a}_{K+1}=\1) - \psi(\bbar{a}_{K+1}=\0)
= \text{NDE}_{A \to Y} + \sum_{k=1}^{K} \text{NIE}_{A \to M_k \rightsquigarrow Y}.
\end{align*} 

\begin{figure}[htbp]
\centering
\begin{subfigure}{.3\linewidth}
\centering
\includegraphics[width=\textwidth]{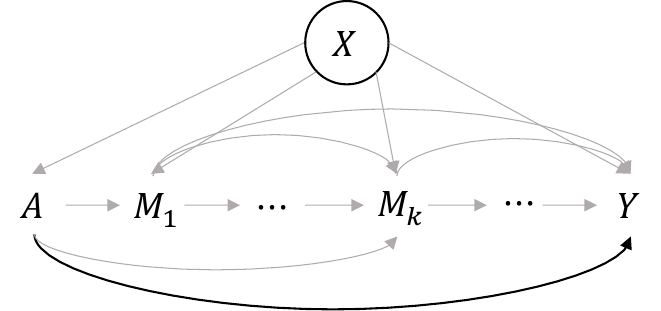}
\caption{ $\text{NDE}_{A \to Y}$. }
\end{subfigure}
\begin{subfigure}{.3\linewidth}
\centering
\includegraphics[width=\textwidth]{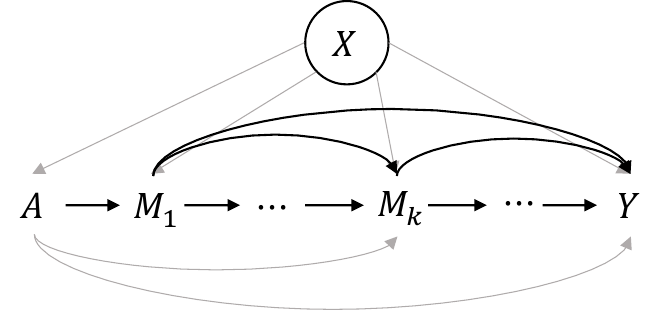}
\caption{$\text{NIE}_{A \to M_1 \rightsquigarrow Y}$.}
\end{subfigure}
\begin{subfigure}{.3\linewidth}
\centering
\includegraphics[width=\textwidth]{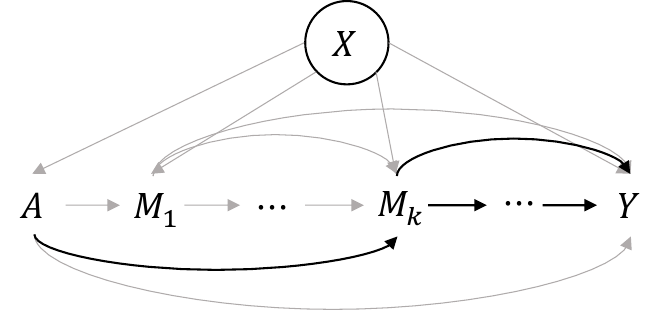}
\caption{ $\text{NIE}_{A \to M_k \rightsquigarrow Y}$.}
\end{subfigure}
\caption{Causal DAGs of some PSEs. }
\label{fig:causal path}
\end{figure}
\end{example}

In the following, we will concentrate on the estimation and inference of $\psi(\bbar{a}_{K+1})$.
The following assumptions are imposed by \cite{zhou2022semiparametric} for identifying $\psi(\bbar{a}_{K+1})$, which can be viewed as an extension of those proposed by \cite{ImaiKeeleYamamoto2010} in mediation analysis.
\begin{assum}[Consistency]\label{ass:consistency}
$M_k=M_k\left(a_k, \boverbar{m}_{k-1}\right)$ if $A=a_k$ and $\boverbar{M}_{k-1}=\boverbar{m}_{k-1}$ for $k \in\{1,\ldots,K\}$, and $Y=Y\left(a_{K+1}, \boverbar{m}_K\right)$ if $A=a_{K+1}$ and $\boverbar{M}_K=\boverbar{m}_K$.
\end{assum}

\begin{assum}[Sequential ignorability]\label{ass:ignorability}
$\left\{M_1(a_1),M_2(a_2, \boverbar{m}_1), \ldots, Y(a_{K+1}, \boverbar{m}_K)\right\} \indep A \mid X$,
and $\left\{M_{k+1}(a_{k+1}, \boverbar{m}_k), \ldots, M_K(a_K, \boverbar{m}_{K-1}), Y(a_{K+1}, \boverbar{m}_K)\right\} \indep M_k(a_k, \boverbar{m}_{k-1}^*) \mid X, A, \boverbar{M}_{k-1}$ for any $a_1,\ldots,a_{K+1}$, $m_1,\ldots,m_K$, $m_1^*,\ldots,m_{K-1}^*$, and $k \in\{1,\ldots,K\}$.
\end{assum}

\begin{assum}[Positivity]\label{ass:positivity}
$f_{A \mid X}(a \mid x)>\varepsilon>0$ whenever $f_X(x)>0 ; f_{A \mid X, \boverbar{M}_k}\left(a \mid x, \boverbar{m}_k\right)>\varepsilon>0$ whenever $f_{X, \boverbar{M}_k}\left(x, \boverbar{m}_k\right)>0$ for $k \in\{1,\ldots,K\}$.
\end{assum}

\Cref{ass:consistency,ass:positivity} are standard in causal inference literature. 
\Cref{ass:ignorability} stipulates that baseline covariates control for all potential confounding between the treatment and mediators, as well as between the treatment and the outcome. Additionally, the covariates, treatment, and preceding mediators jointly control for any confounding between subsequent mediators and between the mediators and the outcome.
Under \Crefrange{ass:consistency}{ass:positivity}, \cite{zhou2022semiparametric} showed that $\psi(\bbar{a}_{K+1})$ can be expressed as 
\begin{equation}\label{eq:iden_zhou}
\psi(\bbar{a}_{K+1})=\int_x \int_{\boverbar{m}_K} E(Y \mid x, a_{K+1}, \boverbar{m}_K)\Big\{\prod_{k=1}^K d F\left(m_k \mid x, a_k, \boverbar{m}_{k-1}\right)\Big\} d F(x).
\end{equation}

Unfortunately, the identification formula \eqref{eq:iden_zhou} is infeasible if the covariates $X$ have missing values. This becomes particularly challenging when the missingness is nonignorable, namely $R\nindep X\mid (A,\boverbar{M}_K,Y)$, because the underlying joint distribution of $(X,A,\boverbar{M}_K,Y)$ in this case is typically not identified by the observed data without additional information, even when strict parametric models are employed \citep{MiaoDingGeng2016}.
Inspired by \cite{dHaultfoeuille2010} and \cite{MiaoTchetgenTchetgen2018}, we consider the shadow variable framework for identification. Let $Z$ be a shadow variable satisfying the following assumptions. 

\begin{assum}[Shadow variable]\label{ass:shadow}
(i) $Z\nindep X\mid A,\boverbar{M}_K,Y$, ~ (ii) $Z\indep R\mid X,A,\boverbar{M}_K,Y$.
\end{assum}

\begin{assum}[Completeness]\label{ass:completeness}
For any squared-integrable function $g$ and for any $a,\boverbar{m}_K,y$, $E\{g(X)\mid Z,A\!=\!a,\boverbar{M}_K\!=\!\boverbar{m}_K,Y\!=\!y\}=0$ almost surely if and only if $g(X)=0$ almost surely.
\end{assum}

\Cref{ass:shadow} emphasizes that the shadow variable should be associated with the missing covariates conditional on the other observed data, but independent of the missingness process conditional on both observed and missing data. 
In our motivating example, alcoholic hepatitis serves as a shadow variable for the partially observed drinking behavior. 
It is expected to satisfy the conditions because one of the main pathogenesis of alcoholic hepatitis is long-term alcohol use. Furthermore, alcoholic hepatitis is diagnosed based on laboratory data, which is fully observed and expected not to directly influence the propensity of respondents reporting their alcohol consumption. 
In fact, the shadow variable is also referred to as the “nonresponse instrument” in some literature \citep{dHaultfoeuille2010,WangShaoKim2014}, and has been widely used for establishing identifiability \citep{MiaoTchetgenTchetgen2016,MiaoTchetgenTchetgen2018}.
\Cref{ass:completeness} is a technical condition relevant to identification process, and holds for a wide range of parametric or semiparametric models such as exponential families \citep{NeweyPowell2003,DHaultfoeuille2011}. Intuitively, it requires the shadow variable to be of as much variability as the missing variable. We also assume the response probability $f(R=1\mid X,A,\boverbar{M}_K,Y)>\underline{c}>0$ for some constant $\underline{c}$ to rule out the degenerate case.
Under these assumptions, the identification of the joint distribution $f(X,A,\boverbar{M}_K,Y)$ can be established similarly to the studies of, for example, \cite{DingGeng2014,YangWangDing2019}. In this paper, we always assume the underlying data distribution satisfies these conditions and focus on robust estimation and efficient inference of $\psi(\bbar{a}_{K+1})$ using observed data.

\section{Estimation} \label{sec:estimation}
To motivate our estimation approach, we first establish the identifiable expression of $\psi(\bbar{a}_{K+1})$. Specifically, we express $\psi(\bbar{a}_{K+1})$ as a functional of fully observed data and demonstrate that the associated nuisance functions can be uniquely determined through sequential optimization problems.
Then we propose a robust and efficient nonparametric approach for estimation.
For notational simplicity, we consider a given treatment status $\bbar{a}_{K+1}$ and denote $\psi\equiv\psi(\bbar{a}_{K+1})$ from now on.
Let $\mu_{K+1}(X,\boverbar{M}_K)\triangleq E(Y \mid X, A = a_{K+1}, \boverbar{M}_K)$ and $\mu_k(X,\boverbar{M}_{k-1})\triangleq E\{\mu_{k+1}(X,\boverbar{M}_k)\mid X, A = a_k, \boverbar{M}_{k-1}\}$ for $k\in\{1,\ldots,K\}$ be a series of iterated conditional expectations. Then \eqref{eq:iden_zhou} implies that $\psi=E\{\mu_{1}(X)\}$.
Let
\beqrs
\gamma(X,A,\boverbar{M}_K,Y)=\frac{f(R=0\mid X,A,\boverbar{M}_K,Y)}{f(R=1\mid X,A,\boverbar{M}_K,Y)},
\eeqrs
be the odds function.
Then it is easy to check that $\psi$ can be expressed as
\begin{align*}
\psi = E\left\{R\mu_{1}(X)+(1-R)\mu_{1}(X)\right\}= E\left\{R\mu_{1}(X)+R\gamma(X,A,\boverbar{M}_K,Y)\mu_{1}(X)\right\}.
\end{align*}
Essentially, the odds function $\gamma(X,A,\boverbar{M}_K,Y)$ summarizes the distribution shift between units with and without missingness.
Once we obtain estimators of $\gamma(X,A,\boverbar{M}_K,Y)$ and $\mu_1(X)$, an estimator of $\psi$ is given by 
\beqr \label{eq:est_theta}
\wh\psi = \frac{1}{n}\sum_{i=1}^{n}\left\{R_i\wh\mu_{1}(X_i) + R_i\wh\gamma(X_i,A_i,\boverbar{M}_{K,i},Y_i)\wh\mu_{1}(X_i)\right\}.
\eeqr
In the following, we address the identification and estimation of $\gamma$ and $\mu_1$.

We first consider $\gamma$. With an available shadow variable $Z$ satisfying \cref{ass:shadow,ass:completeness}, similar to \cite{LiMiaoTchetgenTchetgen2023} and \cite{MiaoLiuLiTchetgenTchetgenGeng2024}, we show in \Cref{sec:app_identification} that $\gamma$ is identified by the integral equation:
\beqr
\label{eq:iden_gamma}
E\left\{\gamma(X,A,\boverbar{M}_{K},Y) \mid R=1,Z,A,\boverbar{M}_{K},Y\right\} = \beta(Z,A,\boverbar{M}_{K},Y), 
\eeqr
where $\beta(Z,A,\boverbar{M}_{K},Y)\triangleq {f(R=0\mid Z,A,\boverbar{M}_K,Y)}\big/{f(R=1\mid Z,A,\boverbar{M}_K,Y)}$.
Here, \Cref{ass:shadow} ensures that $\gamma$ satisfies the integral equation \eqref{eq:iden_gamma}, and \Cref{ass:completeness} guarantees the unique solution of the equation.
Note that the integral equation \eqref{eq:iden_gamma} is equivalent to
\begin{align}
\label{eq:iden_gamma_transf}
E\left\{R\gamma(X,A,\boverbar{M}_{K},Y) \mid Z,A,\boverbar{M}_{K},Y\right\} = E(1-R\mid Z,A,\boverbar{M}_{K},Y).
\end{align}
We define the criterion function 
\begin{align*}
Q(\wt\gamma)= E \left[
\left\{E(R \wt\gamma(X,A,\boverbar{M}_{K},Y) -1+R \mid Z,A,\boverbar{M}_{K},Y) \right\}^2
\right],~~\wt\gamma\in\Gamma,
\end{align*}
where $\Gamma$ is the parameter space of $\gamma$.
Then we have $\gamma=\ainf_{\wt\gamma\in\Gamma}Q(\wt\gamma)$. Therefore, we can estimate $\gamma$ by finding a minimizer of the sample analogue $Q_n(\wt\gamma)$ of $Q(\wt\gamma)$.
For this purpose, we first estimate the unknown conditional expectation $E(\cdot\mid Z,A,\boverbar{M}_{K},Y)$ using a series estimator \citep{Newey1997}. Specifically, let $\{p_j(\cdot)\}_{j=1}^\infty$ denote a sequence of known basis functions (such as power series, splines, Fourier series, etc.), with the property that its linear combination can approximate any squared integrable real-valued function of $(z,a,\boverbar{m}_{K},y)$ well. Let $\bbar{p}_{l_n}(z,a,\boverbar{m}_{K},y)= \big(p_1(z,a,\boverbar{m}_{K},y),\ldots,p_{l_n}(z,a,\boverbar{m}_{K},y)\big)\trans$,
and $\P= \big(\bbar{p}_{l_n}(Z_1,A_1,\boverbar{M}_{K,1},Y_1),\allowbreak\ldots,\bbar{p}_{l_n}(Z_n,A_n,\boverbar{M}_{K,n},Y_n)\big)\trans$.
Then for a generic random variable $V$ with realizations $\{V_i\}_{i=1}^n$, the estimator of $E(V\mid Z,A,\boverbar{M}_{K},Y)$ is given by
\begin{align*}
&\wh{E}(V\mid Z,A,\boverbar{M}_{K},Y)=
\sum_{i=1}^n V_i\bbar{p}_{l_n}(Z_i,A_i,\boverbar{M}_{K,i},Y_i)\trans (\P\trans \P)^{-1}\bbar{p}_{l_n}(Z,A,\boverbar{M}_{K},Y).
\end{align*}
Then the sample analogue $Q_n(\wt\gamma)$ of $Q(\wt\gamma)$ is given by
\begin{align*}
Q_n(\wt\gamma) = \frac{1}{n}\sum_{i=1}^n 
\Big[\wh{E}\{R \wt\gamma(X,A,\boverbar{M}_{K},Y) -1+R \mid Z,A,\boverbar{M}_{K},Y\} \Big]^2.
\end{align*}
Without imposing parametric assumptions on the functional form of $\gamma$, its parameter space $\Gamma$ is typically an infinite-dimensional set, thus it is impractical to find a minimizer of $Q_n(\wt\gamma)$ over $\Gamma$ with finite samples.
As a remedy, we employ a sieve approximation \citep{Grenander1981} for $\Gamma$. 
Suppose $\Gamma_n$ is a sieve space that is a computable and often finite-dimensional compact parameter space that becomes dense in $\Gamma$ as $n$ increases. 
Several commonly used sieve spaces and their corresponding approximation properties are well discussed in \cite{Chen2007Handbook}.
For instance, let $\bbar{q}_{s_n}(x,a,\boverbar{m}_{K},y)$ denote a $s_n$-vector of basis functions similar to $\bbar{p}_{l_n}(z,a,\boverbar{m}_{K},y)$. Considering $\gamma$ is a probability ratio and has non-negative values, we suggest a sieve space $\Gamma_n = \{\wt\gamma:~\wt\gamma(x,a,\boverbar{m}_{K},y) = \exp[\bbar{q}_{s_n}(x,a,\boverbar{m}_{K},y)\trans\bm\pi],\bm\pi\in\mR^{s_n}\}$. Then an estimator of $\gamma$ is given by 
\beqrs\label{eq:SMD_gamma}
\wh\gamma=\amin_{\wt\gamma\in\Gamma_n} Q_n(\wt\gamma).
\eeqrs

Next we consider estimation of $\mu_1(X)$. Since $\{\mu_k(\cdot)\}_{k=1}^{K+1}$ are defined sequentially as iterated conditional expectations, we should estimate from the innermost conditional expectation $\mu_K$ and iterate outward until we reach the outermost conditional expectation $\mu_1$.
When $X$ is fully observed, \cite{zhou2022semiparametric} proposed a regression-imputation approach by fitting a parametric model for the conditional mean of $\mu_{k+1}(X,\boverbar{M}_k)$ on $(X,A,\boverbar{M}_{k-1})$ and then setting $A=a_{k+1}$ for all units. 
However, it encounters two challenges in our setting. 
First, accurately specifying parametric models for all iterated regressions is difficult, especially in presence of missing values, unless the model of $Y$ on $(A,\boverbar{M}_K)$ and the models of $\boverbar{M}_k$ on $(A,\boverbar{M}_{k-1})$ are all linear, which is somewhat restrictive. 
Second, when $X$ has missing values, it is not possible to directly apply the ordinary least squares regression. 
We remedy the first challenge by directly approximating the function space of $\mu_k$, denoted by $\Lambda_k\subset\calL_2(X,\boverbar{M}_{k-1})$, using an increasing sieve space.
To address the second issue, 
we show in \Cref{sec:app_identification} that $\mu_{K+1}(X,\boverbar{M}_K)$ is the unique solution to the following criterion function of the weighted $\calL_2$-projection problem, which involves only observed data:
\beqrs 
\inf_{\mu\in \Lambda_{K+1}} E\left[I(A=a_{K+1})R\{1+\gamma(X,A,\boverbar{M}_K,Y)\}\{Y-\mu(X,\boverbar{M}_K)\}^2\right],
\eeqrs
and $\mu_k(X,\boverbar{M}_{k-1})$ is the unique solution to 
\beqrs 
\inf_{\mu\in \Lambda_k} E\left[I(A=a_k)R\{1+\gamma(X,A,\boverbar{M}_K,Y)\}\{\mu_{k+1}(X,\boverbar{M}_k)-\mu(X,\boverbar{M}_{k-1})\}^2\right],
\eeqrs
for $k\in\{1,\ldots,K\}$. 
Therefore, we construct an estimator $\wh\mu_{K+1}(X,\boverbar{M}_K)$ by solving 
\beqrs 
\min_{\mu\in \Lambda_{K+1,n}} \frac{1}{n}\sum_{i=1}^{n}I(A_i=a_{K+1})R_i\{1+\wh\gamma(X_i,A_i,\boverbar{M}_{K,i},Y_i)\}\{Y_i-\mu(X_i,\boverbar{M}_{K,i})\}^2 ,
\eeqrs
and then sequentially from $K$ to 1, obtain estimators $\wh\mu_k(X,\boverbar{M}_{k-1})$ by solving 
\beqrs 
\min_{\mu\in \Lambda_{kn}} \frac{1}{n}\sum_{i=1}^{n}I(A_i=a_k)R_i\{1+\wh\gamma(X_i,A_i,\boverbar{M}_{K,i},Y_i)\}\{\wh\mu_{k+1}(X_i,\boverbar{M}_{k,i})-\mu(X_i,\boverbar{M}_{k-1,i})\}^2.
\eeqrs 
Here, 
$\Lambda_{kn}= \{g:
g(x,\boverbar{m}_{k-1})=\bbar{u}_k(x,\boverbar{m}_{k-1})\trans\bm\pi,\bm\pi\in\mR^{t_{kn}}
\}$ 
for $k\in\{1,\ldots,K+1\}$
is a finite-dimensional sieve space that becomes dense in $\Lambda_k$ as $n\to\infty$ for some given basis functions $\bbar{u}_k(x,\boverbar{m}_{k-1})$ as discussed in \citet[p.5569]{Chen2007Handbook}.
The estimation procedure can be easily implemented using basic computing software by the following outlined algorithm:
\begin{description}
\item[Step 1.]\label{enu:setp1} Minimize $Q_n(\wt\gamma)$ over $\Gamma_n$ and obtain $\wh\gamma$;
\item[Step 2.]\label{enu:setp2} Regress $Y_i$ on $\bbar{u}_{K+1}(X_i,\boverbar{M}_{K,i})$ with weights $I(A_i=a_{K+1})R_i\{1+\wh\gamma(X_i,A_i,\boverbar{M}_{K,i},Y_i)\}$ and obtain $\wh\mu_{K+1}(X_i,\boverbar{M}_{K,i})$;
\item[Step 3.]\label{enu:setp3} Regress $\wh\mu_{k+1}(X_i,\boverbar{M}_{k,i})$ on $\bbar{u}_k(X_i,\boverbar{M}_{k-1,i})$ for $k$ from $K$ to $1$ iteratively with weights $I(A_i=a_k)R_i\{1+\wh\gamma(X_i,A_i,\boverbar{M}_{K,i},Y_i)\}$ and obtain $\wh\mu_{k}(X_i,\boverbar{M}_{k-1,i})$;
\item[Step 4.]\label{enu:setp4} Obtain $\wh\psi$ from equation \eqref{eq:est_theta}.
\end{description} Notably, step 2--3 can be implemented using, for example, the ``lm'' function in \Rlogo. 
We remark here that in the special case without data missingness, by simply setting $\wh\gamma=0$ in step 1, the algorithm remains valid as a nonparametric regression-imputation approach for estimation of $\psi$.
Moreover, if $\mu_k(x,\boverbar{m}_{k-1})$ is assumed to follow a specific parametric form, one only needs to find the estimator $\wh\mu_k$ by solving the above optimization problem over this parametric space. Particularly, if $\wh\gamma\equiv 0$ and $\Lambda_{kn}$'s are fixed-dimensional spaces spanned by linear functions of $(x,\boverbar{m}_{k-1})$, the proposed estimator reduces to the regression imputation estimator introduced by \cite{zhou2022semiparametric}.

\section{Large-sample properties and inference}\label{sec:inference}
\subsection{Asymptotic normality} 

In this section, we derive the asymptotic distribution of $\wh\psi$.
We first introduce some notations. For a given $\bbar{a}_{K+1}$, we define
\begin{subequations}
\begin{align}
\omega_1(x) \triangleq&~ f_{A\mid X}(a_{1}\mid x)^{-1}, \label{eq:def_w1}\\ 
\omega_k(x,\boverbar{m}_{k-1}) \triangleq &~ \frac{f_{A\mid X,\boverbar{M}_{k-1}}(a_{k-1}\mid x,\boverbar{m}_{k-1})}{f_{A\mid X,\boverbar{M}_{k-1}}(a_k\mid x,\boverbar{m}_{k-1})},~~k=2,\ldots,K+1, \label{eq:def_wk}\\
\phi(x,a,\boverbar{m}_K,y) \triangleq &~ \mu_{1}(x) + \sum_{k=1}^{K}I(a=a_k) \Big\{\prod_{j=1}^k \omega_j(x,\boverbar{m}_{j-1})\Big\}
\Big\{\mu_{k+1}(x,\boverbar{m}_k)-\mu_k(x,\boverbar{m}_{k-1})\Big\} \nonumber\\ 
&\quad + I(a=a_{K+1}) \Big\{\prod_{j=1}^{K+1} \omega_j(x,\boverbar{m}_{j-1})\Big\}
\Big\{y-\mu_{K+1}(x,\boverbar{m}_K)\Big\}. \label{eq:def_phi}
\end{align}
\end{subequations}
Note that $\omega_k$ and $\phi$ vary with different choice of $\bbar{a}_{K+1}$, here we omit the notations $\bbar{a}_{K+1}$ for simplicity. \cite{zhou2022semiparametric} showed that when $X$ is fully observed, the efficient influence function of $\psi$ is $\phi(X,A,\boverbar{M}_K,Y)-\psi$. 

The first challenge in establishing asymptotic normality arises from ill-posed nature of the inverse problem in estimating $\gamma$, resulting in a slow convergence rate of $\wh\gamma$. 
We therefore introduce a pseudo metric that is weaker than conventional $\calL_2$ and $\calL_\infty$ metrics. It helps expedite the convergence of $\widehat{\gamma}$ to the true value, which is necessary in establishing asymptotic normality. We define an inner product
\begin{align*}
\langle \gamma_1,\gamma_2 \rangle_w = 
E\big[E\{R\gamma_1(X,A,\boverbar{M}_K,Y)\mid Z,A,\boverbar{M}_K,Y\}E\{R\gamma_2(X,A,\boverbar{M}_K,Y)\mid Z,A,\boverbar{M}_K,Y\}\big],
\end{align*}
and then define $\|\gamma_1\|_w=\langle \gamma_1,\gamma_1 \rangle_w^{1/2}$ for any $\gamma_1,\gamma_2\in\Gamma$. It is easy to see $\|\gamma_1\|_w\le\|\gamma_1\|_{\calL_2}$ for any $\gamma_1$ by Jensen's inequality. The convergence results of $\widehat{\gamma}$ and $\widehat{\mu}_k$ for $k \in \{1, \ldots, K+1\}$ are formally established in \Cref{sec:app_convergence_rate}. Importantly, we demonstrate that the convergence rates of $\widehat{\gamma}$ under the pseudo-metric $\|\cdot\|_w$, as well as $\widehat{\mu}_k$ under the $\calL_2$ and $\calL_\infty$ metrics, are all $o_p(n^{-1/4})$ under standard regularity conditions for nonparametric regression. 

Another challenge in obtaining the asymptotically linear representation of $\wh\psi$ is that $\wh\gamma$ typically lacks a closed form due to the complexity of $\Gamma_n$ with regularization.
To address this, we introduce a representer to capture the cumulative influence of $\wh\gamma$ on $\wh\psi$ and derive its linear representation.
Let $\overline\calH$ be the closure of the linear span of $\Gamma$ under the metric $\|\cdot\|_w$, then $\overline\calH$ is a Hilbert space with inner product $\langle \cdot,\cdot \rangle_w$ defined above. Note that the linear functional $\wt\gamma\mapsto E\{R\phi(X,A,\boverbar{M}_K,Y)\wt\gamma(X,A,\boverbar{M}_K,Y)\}$ is continuous under $\|\cdot\|_w$, by Riesz representation theorem, there is $\varrho\in\overline\calH$ such that 
\beqrs 
\langle \varrho,\wt\gamma \rangle_w
= E\big\{ R \phi(X,A,\boverbar{M}_K,Y)\wt\gamma(X,A,\boverbar{M}_K,Y) \big\},
\eeqrs
holds for all $\wt\gamma\in\overline\calH$. We impose the following assumption.

\begin{assum}\label{ass:representer}
Assume that $\varrho\in\Gamma$, or there is $\varrho_n\in\Gamma_n$ such that $\|\varrho_n-\varrho\|_w=o(n^{-1/4})$.
\end{assum}
\Cref{ass:representer} requires the Riesz representer $\varrho$ to fall in $\Gamma$, or be well approximated by the sieve space $\Gamma_n$. It is a technical condition that suffices for $\sqrt{n}$-estimability of $\psi$. It is commonly imposed and thoroughly discussed in the literature regarding the estimation of functionals involving ill-posed inverse problems \citep{AiChen2003,Santos2011,SeveriniTripathi2012}.
The asymptotic normality of $\wh\psi$ is formally established in the following theorem and is proved in \Cref{sec:app_pf_thm_asy_normal}.

\begin{thm}\label{thm:asy_normal}
Suppose \Crefrange{ass:consistency}{ass:representer} and \ref{ass:supp_dgp}--\ref{ass:supp_technique_mu} in \Cref{sec:app_assum} hold. We obtain 
\beqrs
\sqrt{n}(\wh\psi-\psi) = \frac{1}{\sqrt{n}}\sum_{i=1}^{n} IF(R_i,Z_i,X_i,A_i,\boverbar{M}_{K,i},Y_i) + o_p(1),
\eeqrs
with the influence function 
\begin{align*}
IF =&~ R\big\{1+\gamma(X,A,\boverbar{M}_K,Y)\big\}\phi(X,A,\boverbar{M}_K,Y) -\psi \\ 
&\quad- E\{R\varrho(X,A,\boverbar{M}_K,Y)\mid Z,A,\boverbar{M}_K,Y\}\{R\gamma(X,A,\boverbar{M}_K,Y)-1+R\}.
\end{align*}
It follows that $\sqrt{n}(\wh\psi-\psi)\convd N(0,\sigma^2)$ with $\sigma^2=E(IF^2)$.
\end{thm}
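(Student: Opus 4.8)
Write $\mathbb{E}_n$ for the empirical average $n^{-1}\sum_{i=1}^n$. The plan is to expand $\wh\psi-\psi$ around the truth, show that the leading term is the empirical average of the stated influence function, and verify that the remainders are $o_p(n^{-1/2})$; the conclusion then follows from the Lindeberg--L\'evy central limit theorem. The starting point is the weight identity $E\{R(1+\gamma)\mid X,A,\boverbar{M}_K,Y\}=1$, which follows directly from the definition of the odds function $\gamma$ and yields $E\{R(1+\gamma)g\}=E(g)$ for every integrable $g$; in particular $E\{R(1+\gamma)\mu_1\}=\psi$, and since $\phi-\psi$ is the full-data efficient influence function (so $E(\phi)=\psi$), also $E\{R(1+\gamma)\phi\}=E(\phi)=\psi$. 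Using $\wh\psi=\mathbb{E}_n\{R(1+\wh\gamma)\wh\mu_1\}$ I decompose
\begin{align*}
\sqrt{n}(\wh\psi-\psi)
&= \underbrace{\sqrt{n}\,(\mathbb{E}_n-E)\{R(1+\gamma)\mu_1\}}_{\text{(I)}}
+ \underbrace{\sqrt{n}\,\mathbb{E}_n\{R(1+\gamma)(\wh\mu_1-\mu_1)\}}_{\text{(II)}} \\
&\quad + \underbrace{\sqrt{n}\,\mathbb{E}_n\{R(\wh\gamma-\gamma)\wh\mu_1\}}_{\text{(III)}}.
\end{align*}
Term (I) is already an empirical average of the centered quantity $R(1+\gamma)\mu_1-\psi$, so the whole argument reduces to showing that (II) converts $R(1+\gamma)\mu_1$ into $R(1+\gamma)\phi$, while (III) produces the $\gamma$-correction term.

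\textbf{The regression cascade, term (II).} Here I exploit the first-order (normal) equations of the weighted sieve regressions defining $\wh\mu_{K+1},\ldots,\wh\mu_1$. Writing $\wh\mu_1-\mu_1$ as the telescoping sum of the successive regression residuals and inserting the normal equation at each level $k$ (which states that the weighted residual $I(A=a_k)R(1+\wh\gamma)\{\wh\mu_{k+1}-\wh\mu_k\}$ is orthogonal to the sieve $\Lambda_{kn}$), I will show that, up to $o_p(n^{-1/2})$,
\begin{align*}
\text{(II)} = \sqrt{n}\,\mathbb{E}_n\{R(1+\gamma)(\phi-\mu_1)\} + o_p(1),
\end{align*}
so that (I)$+$(II)$=\sqrt{n}\,\mathbb{E}_n\{R(1+\gamma)\phi-\psi\}+o_p(1)$. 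The controls needed are the $\calL_2$ and $\calL_\infty$ rates $\|\wh\mu_k-\mu_k\|=o_p(n^{-1/4})$ from \Cref{sec:app_convergence_rate} (to bound products of residuals and sieve approximation biases), together with a stochastic-equicontinuity argument that lets me replace the empirical weighted-projection operators by their population counterparts and pass from $\mathbb{E}_n$ to $E$ on the centered residual terms. The weights carry $\wh\gamma$ rather than $\gamma$; the resulting discrepancy generates an additional $\gamma$-direction contribution that I carry forward and merge with (III).

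\textbf{The ill-posed direction, term (III).} This is where the representer $\varrho$ and \Cref{ass:representer} enter. Splitting $\wh\mu_1=\mu_1+(\wh\mu_1-\mu_1)$ in (III), and gathering the $\gamma$-direction term inherited from (II), the total first-order effect of $\wh\gamma-\gamma$ on $\wh\psi$ is the continuous linear functional $\wt\gamma\mapsto E\{R\phi\,\wt\gamma\}$, which by the Riesz representation theorem equals $\langle \varrho,\wh\gamma-\gamma\rangle_w$. Because $\wh\gamma$ is a sieve minimum-distance estimator, its first-order condition provides an asymptotically linear representation in the weak norm $\|\cdot\|_w$; combined with \Cref{ass:representer} (which guarantees $\varrho$ is reachable by $\Gamma_n$ at rate $o(n^{-1/4})$, ensuring $\sqrt{n}$-estimability despite ill-posedness) this yields
\begin{align*}
\langle \varrho,\wh\gamma-\gamma\rangle_w = -\,\mathbb{E}_n\big[E\{R\varrho\mid Z,A,\boverbar{M}_K,Y\}(R\gamma-1+R)\big] + o_p(n^{-1/2}),
\end{align*}
which is exactly the correction term in $IF$; here the integral equation \eqref{eq:iden_gamma_transf} guarantees that $R\gamma-1+R$ has conditional mean zero given $(Z,A,\boverbar{M}_K,Y)$, so the correction is itself centered.

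\textbf{Remainders and conclusion.} It remains to show the second-order pieces are negligible: the cross term $\sqrt{n}\,\mathbb{E}_n\{R(\wh\gamma-\gamma)(\wh\mu_1-\mu_1)\}$ left over from (III), the sieve projection biases, and the error from replacing estimated conditional-expectation operators by population ones. The delicate point, and the \emph{main obstacle}, is precisely this interaction between the two estimators: the $\calL_2$ rate of $\wh\gamma$ is slow because of ill-posedness, so none of these remainders can be bounded by a naive product of $\calL_2$ norms. The resolution is to arrange every $\gamma$-error so that it enters only through its projection onto $(Z,A,\boverbar{M}_K,Y)$, hence is measured in the weak norm $\|\cdot\|_w$ under which $\wh\gamma$ attains the $o_p(n^{-1/4})$ rate, and to combine this with the $o_p(n^{-1/4})$ rate of $\wh\mu_1$; this is what makes the product $o_p(n^{-1/2})$ and constitutes the technical heart of the proof. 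Once all remainders are $o_p(1)$, we obtain $\sqrt{n}(\wh\psi-\psi)=n^{-1/2}\sum_{i=1}^n IF_i+o_p(1)$ with $IF$ mean zero, and the Lindeberg--L\'evy CLT gives $\sqrt{n}(\wh\psi-\psi)\convd N(0,\sigma^2)$, $\sigma^2=E(IF^2)$, where finiteness of $E(IF^2)$ follows from the moment conditions in \Cref{sec:app_assum}.
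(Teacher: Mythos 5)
Your overall architecture matches the paper's: the decomposition into an empirical-process term, a $\mu$-direction term handled via the normal equations of the weighted sieve regressions, and a $\gamma$-direction term handled via the sieve first-order condition together with the Riesz representer $\varrho$, is exactly the route the paper takes, and your use of the identity $E\{R(1+\gamma)\mid X,A,\boverbar{M}_K,Y\}=1$ and of the centering $E\{R\gamma-1+R\mid Z,A,\boverbar{M}_K,Y\}=0$ is correct.

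There is, however, a genuine gap at the step you yourself single out as the technical heart. You propose to control the interaction remainder $\sqrt{n}\,\mathbb{E}_n\{R(\wh\gamma-\gamma)(\wh\mu_1-\mu_1)\}$ (and its population analogue) by "arranging every $\gamma$-error so that it enters only through its projection onto $(Z,A,\boverbar{M}_K,Y)$," so that the weak-norm rate $\|\wh\gamma-\gamma\|_w=o_p(n^{-1/4})$ can be multiplied against the $o_p(n^{-1/4})$ rate of $\wh\mu_1$. This cannot be done for this term: $\wh\mu_1-\mu_1$ is a function of $X$, so in $E[R(\wh\gamma-\gamma)(\wh\mu_1-\mu_1)]$ the conditional expectation given $(Z,A,\boverbar{M}_K,Y)$ does not factor into $E\{R(\wh\gamma-\gamma)\mid Z,A,\boverbar{M}_K,Y\}$ times something, and the only Cauchy--Schwarz bound available involves $\|\wh\gamma-\gamma\|_{\calL_2}$, which the ill-posedness leaves without any guaranteed polynomial rate (only $\|\cdot\|_w\le\|\cdot\|_{\calL_2}$ holds, in the unhelpful direction). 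The paper's Lemma \ref{lemma:second_order_residual} resolves this by a different mechanism: it writes $\wh\mu_1-\mu_1$ as a sieve-bias term of size $O(t_{1n}^{-\alpha_{31}})=o(n^{-1/2})$ plus $\bbar{u}_1(X)\trans(\wh{\bm\pi}_{11}-{\bm\pi}_{11})$, expands the coefficient error through the normal equations, and shows each resulting piece is $o_p(n^{-1/2})$ using only $\|\wh\gamma-\gamma\|_\infty=o_p(1)$ paired with mean-zero residual structure, Donsker-class equicontinuity, the representer condition in \Cref{ass:supp_technique_mu}, and the rate restrictions of \Cref{ass:supp_sieve_strengthened_rate}. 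Without an argument of this kind your remainder bound does not go through, so the proof as proposed is incomplete at precisely the point where ill-posedness bites.
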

\Cref{thm:asy_normal} demonstrates that $\wh\psi$ exhibits $\sqrt{n}$-convergence and asymptotic normality, unaffected by the slow convergence rate of $\wh\gamma$. This result is the foundation for making inferences about the PSE, which will be discussed in the next subsection.
Below we briefly discuss the efficiency of the proposed estimator in the following two remarks.

\begin{remark}
If all covariates $X$ are observed, that is $R\equiv 1$ and $\gamma\equiv 0$, the influence function of the proposed estimator $\wh\psi$ turns out to be $IF=\phi(X,A,\boverbar{M}_K,Y) -\psi$, which is exactly the efficient influence functions of $\psi$ in the full data law derived in \cite{zhou2022semiparametric}. 
Therefore, our approach also provides an efficient estimation of $\psi$ when the data are completely observed by simply setting $\wh\gamma=0$.
\end{remark}

\begin{remark}\label{remark:eif}
We show in \Cref{ssec:pf_EIF} that the proposed estimator $\wh\psi$ attains the semiparametric efficiency bound in the semiparametric model $\calM_{sp}$ which places no restriction on the observed data distribution other than the model restriction \eqref{eq:iden_gamma} at key submodels. This reflects that our approach effectively leverages the information of observed data. 
\end{remark}

\subsection{Variance estimation and inference} \label{ssec:variance}
We consider estimating the asymptotic variance $\sigma^2$ and constructing confidence intervals for $\psi$. To achieve this, we need to estimate the unknown functions that appear in the influence function of $\psi$ as described in \Cref{thm:asy_normal}.
For estimation of $\phi(\cdot)$, we need to estimate the unknown functions $\omega_k(x,\boverbar{m}_{k-1})$ for $k\in\{1,\ldots,K+1\}$ defined in \eqref{eq:def_w1}--\eqref{eq:def_wk}. The following lemma provides their identifiable formulas and is proved in \Cref{ssec:app_pf_lemma1}.
\begin{lemma}\label{lemma:iden_coef}
We have that $\omega_1(x)$ satisfies the following conditional moment restriction
\begin{align*}
E\big[\big\{1+\gamma(X,A,\boverbar{M}_K,Y)\big\}\big\{I(A=a_1)\omega_1(X)-1\big\}\mid R=1,X\big] =0, 
\end{align*}
and for $k\in\{2,\ldots,K+1\}$, $\omega_k(x,\boverbar{m}_{k-1})$ satisfies the following conditional moment restriction
\begin{align*}
E\big[\big\{1+\gamma(X,A,\boverbar{M}_K,Y)\big\}\big\{I(A=a_k)\omega_k(X,\boverbar{M}_{k-1})-I(A=a_{k-1})\big\}\mid R=1,X,\boverbar{M}_{k-1}\big] =0. 
\end{align*}
\end{lemma}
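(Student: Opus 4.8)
The plan is to reduce both conditional moment restrictions to a single reweighting identity and then substitute the definitions of $\omega_1$ and $\omega_k$. Write $W=(X,A,\boverbar{M}_K,Y)$ for the full record. The starting observation is the algebraic identity
\[
1+\gamma(W) = 1 + \frac{f(R=0\mid W)}{f(R=1\mid W)} = \frac{1}{f(R=1\mid W)},
\]
which holds because $f(R=0\mid W)+f(R=1\mid W)=1$; the nondegeneracy condition $f(R=1\mid W)>\underline{c}>0$ guarantees the right-hand side is well defined and bounded. This identity is what lets the weight $1+\gamma$ convert expectations over the observed subpopulation $\{R=1\}$ back into expectations over the full population.

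The key step is a reweighting lemma: for any conditioning subvector $S$ of $W$ (I will take $S=X$ and $S=(X,\boverbar{M}_{k-1})$) and any integrable $g(W)$,
\[
E\big[\{1+\gamma(W)\}g(W)\mid R=1,S\big] = \frac{E\{g(W)\mid S\}}{f(R=1\mid S)}.
\]
To prove it I would expand the left side as an integral of $\{1+\gamma(w)\}g(w)$ against the observed-data conditional density $f(w\mid R=1,s)$ (integrating out the components of $W$ not in $S$), apply Bayes' rule to write $f(w\mid R=1,s)=f(R=1\mid w)f(w\mid s)/f(R=1\mid s)$, and then cancel $f(R=1\mid w)$ against the factor $1/f(R=1\mid w)$ supplied by $1+\gamma(w)$. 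What survives is $g(w)f(w\mid s)/f(R=1\mid s)$, whose integral is $E\{g(W)\mid S\}/f(R=1\mid S)$. Since $S$ contains $X$ in both cases and $f(R=1\mid S)=E\{f(R=1\mid W)\mid S\}>\underline{c}>0$, the division is legitimate.

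Given the lemma, both claims follow by direct substitution. For $\omega_1$, take $S=X$ and $g(W)=I(A=a_1)\omega_1(X)-1$; since $\omega_1(X)$ is $X$-measurable it pulls out of the inner expectation, leaving $\omega_1(X)f_{A\mid X}(a_1\mid X)-1$, which is identically zero by the definition $\omega_1(X)=f_{A\mid X}(a_1\mid X)^{-1}$. For $k\in\{2,\ldots,K+1\}$, take $S=(X,\boverbar{M}_{k-1})$ and $g(W)=I(A=a_k)\omega_k(X,\boverbar{M}_{k-1})-I(A=a_{k-1})$; because $\omega_k$ depends only on $(X,\boverbar{M}_{k-1})$ it again pulls out, and the full-data conditional expectation equals
\[
\omega_k(X,\boverbar{M}_{k-1})\,f_{A\mid X,\boverbar{M}_{k-1}}(a_k\mid X,\boverbar{M}_{k-1}) - f_{A\mid X,\boverbar{M}_{k-1}}(a_{k-1}\mid X,\boverbar{M}_{k-1}),
\]
which vanishes upon inserting the definition $\omega_k=f_{A\mid X,\boverbar{M}_{k-1}}(a_{k-1}\mid\cdot)/f_{A\mid X,\boverbar{M}_{k-1}}(a_k\mid\cdot)$. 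Dividing by the positive factor $f(R=1\mid S)$ does not affect the zero, so each conditional moment restriction holds.

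The computation is elementary; the only point requiring care is the Bayes manipulation inside the reweighting lemma, where one must track exactly which coordinates are conditioned on and confirm that the $R=1$ conditioning combines with the weight $1+\gamma$ to reproduce the full-data conditional density of $W$ given $S$. I expect this bookkeeping, rather than any genuine difficulty, to be the main thing to get right, and the positivity and nondegeneracy assumptions are precisely what make all the appearing densities and ratios well defined.
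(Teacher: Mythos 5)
Your proof is correct and follows essentially the same route as the paper: the whole argument rests on the identity $1+\gamma(W)=1/f(R=1\mid W)$ together with a Bayes-rule cancellation that converts conditional expectations given $R=1$ and $S$ into full-data conditional expectations divided by $f(R=1\mid S)$, which is exactly the computation the paper carries out term by term for the numerator and denominator of $\omega_1$ and $\omega_k$. Packaging that cancellation as a single reweighting identity and invoking linearity is only a cosmetic reorganization of the paper's proof.
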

Similar to the approach used for estimating $\gamma$ from \eqref{eq:iden_gamma_transf}, we can solve the integral equations in \Cref{lemma:iden_coef} by minimizing the corresponding optimization problems over finite-dimensional sieve spaces. Specifically, an estimator $\wh\omega_1(x)$ can be obtained by solving 
\begin{align*}
\min_{\wt\omega_1\in\Lambda_{1n}}\sum_{i=1}^n R_i\wh{E}\big[\big\{1+\wh\gamma(X,A,\boverbar{M}_K,Y)\big\}\big\{I(A=a_1)\wt\omega_1(X)-1\big\}\mid R=1,X_i\big]^2,
\end{align*}
and for $k\in\{2,\ldots,K+1\}$, estimators $\wh\omega_k(x,\boverbar{m}_{k-1})$ are obtained by solving 
\begin{align*}
\min_{\wt\omega_k\in\Lambda_{kn}} \sum_{i=1}^n R_i \wh{E}\Big[\big\{1+\wh\gamma(X,A,\boverbar{M}_K,Y)\big\} & \big\{I(A=a_k)\wt\omega_k(X,\boverbar{M}_{k-1}) \\ 
&\quad -I(A=a_{k-1})\big\}\mid R=1,X_i,\boverbar{M}_{k-1,i}\Big]^2 ,
\end{align*}
respectively. Then we obtain an estimator of $\phi(\cdot)$ as follows:
\begin{align}\label{eq:est_phi}
\wh\phi(x,a,\boverbar{m}_K,y) = &~ \wh\mu_{1}(x) + \sum_{k=1}^{K}I(a=a_k) \Big\{\prod_{j=1}^k \wh\omega_j(x,\boverbar{m}_{j-1})\Big\}
\Big\{\wh\mu_{k+1}(x,\boverbar{m}_k)-\wh\mu_k(x,\boverbar{m}_{k-1})\Big\} \nonumber\\ 
&\quad + I(a=a_{K+1}) \Big\{\prod_{j=1}^{K+1} \wh\omega_j(x,\boverbar{m}_{j-1})\Big\}
\Big\{y-\wh\mu_{K+1}(x,\boverbar{m}_K)\Big\}.
\end{align}
It is worth noting that \eqref{eq:est_phi} represents a general formula applicable to any $\bbar{a}_{K+1}$. However, in specific cases where $a_k=a_{k-1}$, \eqref{eq:def_wk} implies that $\omega_k(x,\boverbar{m}_{k-1})\equiv 1$, allowing for a simplification of \eqref{eq:est_phi} without the need of $\wh\omega_k$ and $\wh\mu_k$.
Moreover, unlike the method proposed by \cite{zhou2022semiparametric}, which estimates the densities appearing in the numerator and denominator of $\omega_k$ separately, we propose a one-step approach that directly estimates $\omega_k$, resulting in greater stability.
Next, we consider the estimation of the representer $\varrho$ in \Cref{ass:representer}. It can be observed that $\varrho$ is a solution to the optimization problem
\beqrs\label{eq:criterion_varrho}
\inf_{\wt\varrho\in\Gamma} \frac{1}{2} E\left[\left\{E\big(R\wt\varrho(X,A,\boverbar{M}_K,Y)\mid Z,A,\boverbar{M}_K,Y\big)\right\}^2\right] - E\left\{ R\phi(X,A,\boverbar{M}_K,Y)\wt\varrho(X,A,\boverbar{M}_K,Y) \right\}.
\eeqrs
It suggests an estimator $\wh\varrho$ that solves
\begin{align*}
\min_{\wt\varrho\in\Gamma_n} \frac{1}{2n}\sum_{i=1}^{n} & \left[\wh{E}\{R\wt\varrho(X,A,\boverbar{M}_K,Y)\mid Z_i,A_i,\boverbar{M}_{K,i},Y_i\}\right]^2 \\ 
& - \frac{1}{n}\sum_{i=1}^{n} R_i\wh\phi(X_i,A_i,\boverbar{M}_{K,i},Y_i)\wt\varrho(X_i,A_i,\boverbar{M}_{K,i},Y_i).
\end{align*}
By \Cref{thm:asy_normal}, one can then construct an estimator of the asymptotic variance $\sigma^2$ by 
\begin{align*}
\wh\sigma^2 &= 
\frac{1}{n}\sum_{i=1}^{n} \wh{IF}_i^2 \triangleq 
\frac{1}{n}\sum_{i=1}^{n} \Big[
R_i\big\{1+\wh\gamma(X_i,A_i,\boverbar{M}_{K,i},Y_i)\big\}\wh\phi(X_i,A_i,\boverbar{M}_{K,i},Y_i) - \wh\psi \\ 
&\qquad- \wh{E}\{R\wh\varrho(X,A,\boverbar{M}_K,Y)\mid Z_i,A_i,\boverbar{M}_{K,i},Y_i\}\{R_i\wh\gamma(X_i,A_i,\boverbar{M}_{K,i},Y_i)-1+R_i\} \Big]^2 .
\end{align*}
Then the 95\% symmetric confidence interval for $\psi$ is given by $[\wh\psi-1.96\wh\sigma/\sqrt{n},\wh\psi+1.96\wh\sigma/\sqrt{n}].$

The above procedure can be slightly modified to make inference for PSE. As an illustration, an estimator of $\text{NDE}_{A \to Y}$ is given by $\wh{\text{NDE}}_{A \to Y} = \wh\psi(a_{K+1} = 1, \bbar{a}_K=\0) - \wh\psi(\bbar{a}_{K+1}=\0)$. Then \Cref{thm:asy_normal} implies that 
\begin{align*}
\sqrt{n}\big(\wh{\text{NDE}}_{A \to Y}-\text{NDE}_{A \to Y}\big) \convd N(0,\sigma^2_{\text{NDE}}),
\end{align*}
where $\sigma^2_{\text{NDE}}=E[\{IF(a_{K+1}=1,\bbar{a}_K=\0)-IF(\bbar{a}_{K+1}=\0)\}^2]$, and $IF(\bbar{a}_{K+1})$ is the influence function of $\wh\psi(\bbar{a}_{K+1})$ obtained in \Cref{thm:asy_normal} for a given treatment status $\bbar{a}_{K+1}$. The asymptotic variance $\sigma^2_{\text{NDE}}$ is estimated by 
\begin{align*}
\wh\sigma_{\text{NDE}}^2 = \frac{1}{n}\sum_{i=1}^{n} \left\{\wh{IF}_i(a_{K+1} = 1, \bbar{a}_K=\0)-\wh{IF}_i(\bbar{a}_{K+1}=\0)\right\}^2, 
\end{align*}
and the confidence interval can be constructed following standard procedures. 

\section{Simulation} \label{sec:simulation}
In this section, we conduct simulation studies to demonstrate the finite-sample performance of the proposed estimators. We generate independent and identically distributed samples through the following data generating mechanism. 
Let $(\varepsilon_1, \ldots, \varepsilon_5)\sim N(0,\bm{I}_5)$ be the random noise.
Let $\Phi$ be the cumulative distribution function of the standard normal distribution. We generate the component of covariates subject to nonignorable missingness by $X_1 \sim \Phi(\alpha\varepsilon_{1}+\sqrt{1-\alpha^2}\varepsilon_{2})$ with $\alpha=0.6$, and the shadow variable by $Z \sim \Phi(\varepsilon_1)$. 
Notably, the completeness assumption \ref{ass:completeness} holds following Theorem 2.1 of \cite{DHaultfoeuille2011}.
We also generate fully observed covariates $X_2 \sim \text{Uniform}(0,1)$ and $X_3 \sim \text{Bernoulli}(0.5)$. The treatment $A$ is generated from a Bernoulli distribution such that $A \sim \text{Bernoulli}\{\text{expit}(-0.1+X_1-X_2+0.2X_3)\}$, where $\text{expit}(u)=\exp(u)/\{1+\exp(u)\}$. We generate two mediators $(M_1,M_2)$ as follows: 
\begin{align*}
M_1 & = -1 + 0.5A -2\sin(X_1) + 3X_{1}^2 - 2X_2 + X_3 + \varepsilon_{3},\\
M_2 & = 1 - 0.5A + X_1 + X_{2}^2 - X_3 - 0.5AM_1 + \varepsilon_{4}.
\end{align*}
The outcome is generated by $Y = -1 + 0.5A - 1.5M_1 + 1.5M_2 + 3X_1 + 3X_{1}^2 - 3\sin(X_{2}) + X_{2}^2 - X_3 + 0.5AM_1 + 0.5AM_2 +\varepsilon_{5}$.
Here, the models of the mediators and outcome are nonlinear in covariates and include interaction terms to account for heterogeneity, making them more representative of complex real-world scenarios.
Finally, the missingness indicator of $X_1$ is generated by
$$
R \sim \text{Bernoulli}\{\text{expit}(0.1-2X_1+1.5X_2+X_3+0.5A\varepsilon_{3}-0.5A\varepsilon_{4}-0.1\varepsilon_{5})\}.
$$
It is worth noting that the true model of the log-odds function $\log \gamma$ also includes nonlinear and interaction terms implicitly through $(\varepsilon_{3},\varepsilon_{4},\varepsilon_{5})$.
The missing proportion of $X_1$ is approximately 43.7\%. 

We consider the following four methods for comparison: (a) Oracle, an ideal scenario that utilizes the underlying true data; (b) SRI (short for sieve-based regression imputation), the proposed estimator using polynomials of order 3 as the sieve basis. 
(c) MI, a multiple imputation estimator that fills in the missing values by generating plausible values according to the observed data distribution \citep{LittleRubin2002}; and (d) CCA, a complete-case analysis that restricts the analysis to only those units that have no missing values. For all these methods, we generate 1000 simulations of sample size 1000 and 2000.
For simplicity, we use $\text{NIE}_{M_k}$ to denote $\text{NIE}_{A \to M_k \rightsquigarrow Y}$ in the following.

Table \ref{tab:simulation} presents the bias, standard error, and $95\%$ coverage probability of different estimators, and \Cref{fig:bias} displays the boxplots over 1000 replications. The results indicate that the proposed estimator, which effectively uses the shadow variable to make up for missing data, is comparable to the oracle estimator, both having negligible biases.
In contrast, both MI and CCA estimators are significantly biased as expected. 
Meanwhile, the proposed estimator achieves coverage probabilities close to the nominal level, while both MI and CCA estimators demonstrate poor coverage due to their substantial bias.
This suggests that the multiple imputation method, although easy to implement and widely used in practice to address missing data problems, may fail and introduce misleading inference when the missingness is nonignorable. 

\begin{figure} 
\centering
\includegraphics[width=1\linewidth]{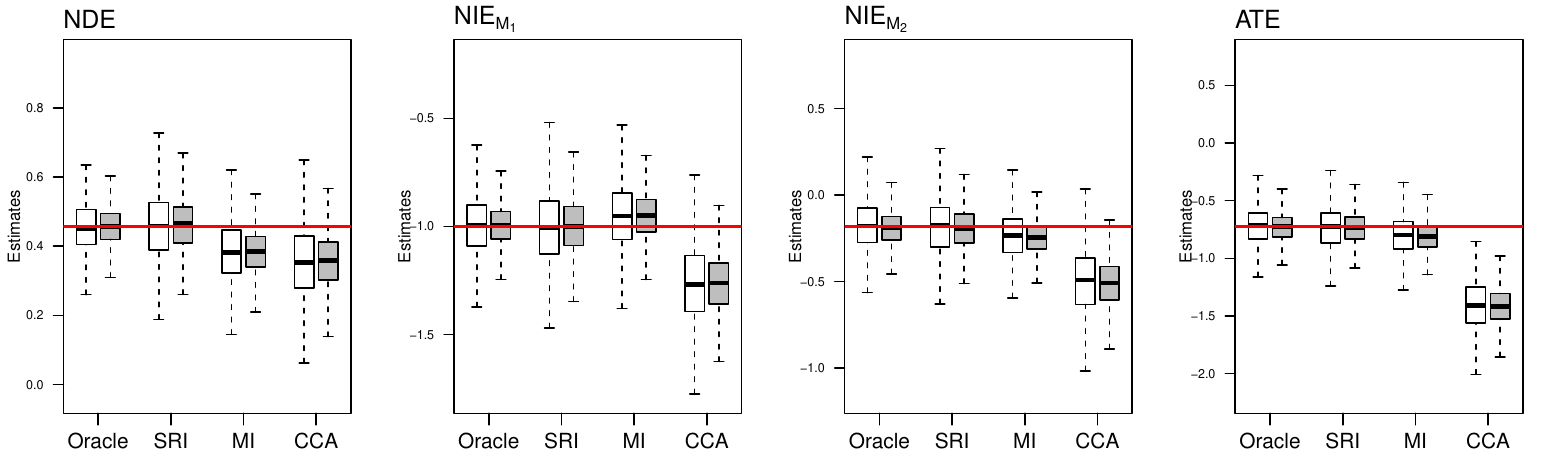}
\caption{ Boxplots of various estimators across 1000 replications. White boxes represent a sample size of 1000, while grey boxes indicate a sample size of 2000. The horizontal lines indicate the true values. }
\label{fig:bias}
\end{figure}

\begin{table}
\caption{ Boxplots of various estimators across 1000 replications. White boxes represent a sample size of 1000, while grey boxes indicate a sample size of 2000. The horizontal lines indicate the true values. }
\label{tab:simulation}
\begin{center}
\begin{tabular}{ccccccccccc}
\toprule
& & \multicolumn{4}{c}{$n=1000$}& & \multicolumn{4}{c}{$n=2000$} \\
\cline{3-6} 	\cline{8-11} 
Methods & & $\text{NDE}$ & $\text{NIE}_{M_1}$ & $\text{NIE}_{M_2}$ & $\text{TE}$ & & $\text{NDE}$ & $\text{NIE}_{M_1}$ & $\text{NIE}_{M_2}$ & $\text{TE}$\\
\midrule
\multirow{3}{*}{Oracle} 
&Bias & -0.003 & 0.001 & 0.007 & 0.005 & & 0.000 & 0.005 & -0.008 & -0.003\\
&SE & 0.075 & 0.137 & 0.149 & 0.170 & & 0.055 & 0.097 & 0.101 & 0.124\\
&CP & 0.954 & 0.962 & 0.959 & 0.947 & & 0.941 & 0.950 & 0.958 & 0.951\\
\midrule
\multirow{3}{*}{SRI}
&Bias & 0.002 & -0.008 & 0.000 & -0.006 & & 0.006 & -0.001 & -0.013 & -0.009\\
&SE & 0.105 & 0.178 & 0.177 & 0.195 & & 0.076 & 0.135 & 0.122 & 0.138\\
&CP & 0.943 & 0.970 & 0.966 & 0.954 & & 0.959 & 0.950 & 0.977 & 0.967\\
\midrule
\multirow{3}{*}{MI}
&Bias & -0.073 & 0.044 & -0.049 & -0.077 & & -0.071 & 0.049 & -0.064 & -0.086\\
&SE & 0.088 & 0.155 & 0.147 & 0.180 & & 0.066 & 0.112 & 0.100 & 0.129\\
&CP & 0.764 & 0.917 & 0.915 & 0.903 & & 0.654 & 0.901 & 0.896 & 0.877\\
\midrule
\multirow{3}{*}{CCA}
&Bias & -0.103 & -0.270 & -0.311 & -0.684 & & -0.100 & -0.266 & -0.328 & -0.695\\
&SE & 0.112 & 0.189 & 0.198 & 0.222 & & 0.079 & 0.133 & 0.142 & 0.164\\
&CP & 0.829 & 0.738 & 0.670 & 0.148 & & 0.708 & 0.502 & 0.380 & 0.011\\
\bottomrule
\end{tabular}
\end{center}
\end{table}

\section{Application to NHANES}
\label{sec:application}
Type 2 diabetes mellitus (T2DM) is a chronic condition with high global prevalence, increasing the risk of cardiovascular disease (CVD) morbidity and mortality. 
We apply the proposed methodology to the National Health and Nutrition Examination Survey (NHANES) with the aim of elucidating the mediation roles of dyslipidemia and obesity in the pathway from T2DM to CVD.
NHANES collects data through interviews, physical examinations, and laboratory testing to provide comprehensive health-related information, including chronic diseases, dietary and lifestyle habits, as well as a wide range of public health-related topics. For this study, we select 2670 individuals aged 20-60 from the NHANES 2013-2014 dataset.
According to the American Diabetes Association, the binary treatment T2DM is diagnosed by insulin or oral hypoglycemic use, a fasting blood glucose level $\geq 126$ mg/dL, or an HbA1c level $\geq 6.5\%$. The binary outcome CVD is diagnosed by conditions such as congestive heart failure, coronary heart disease, angina pectoris, myocardial infarction, or stroke.
As discussed in the introduction, we denote dyslipidaemia measured by blood lipid levels as $M_1$ and obesity measured by waist circumference as $M_2$. 
Additionally, we consider five baseline covariates: age, gender, race, hypertension and drinking status. 
Drinking status is categorized either heavy drinker (consuming $\geq 3$ drinks per day for females and $\geq 4$ drinks per day for males) or mild drinker/nondrinker, based on self-reported information from a standardized health questionnaire.
Among the 2670 participants, approximately $35.8\%$ of the drinking status data are missing. 
The missingness in the drinking variable is primarily dependent on its value \citep{bartlett2014improving}, as individuals aware of the negative health impacts of heavy drinking are more likely to withhold their responses. It is plausible that the missingness of covariate drinking is not at random.We select alcoholic hepatitis as the shadow variable $Z$, where $Z=1$ indicates a positive diagnosis based on laboratory measurements, and $Z=0$ otherwise.
As previously discussed, alcoholic hepatitis is strongly associated with heavy drinking, and its symptoms often improve significantly with abstinence. 
Additionally, alcoholic hepatitis is diagnosed based on laboratory data, which is not expected to directly influence the reporting of drinking status.

We calculate various PSEs using the methods mentioned in simulation studies, and the results are shown in \Cref{tab:application}. 
The $95\%$ confidence intervals for $\text{NIE}_{M_1}$ and $\text{NIE}_{M_2}$ do not include $0$, which means both dyslipidaemia and obesity are significant mediators in the causal pathway linking T2DM and CVD. Additionally, the mediating effect through 
dyslipidaemia amounts for $5.8\%$ of the TE, while the effect through obesity
amounts for $8.9\%$. This indicates that obesity plays a slightly more significant mediating role in the pathway from T2DM to CVD. 
The direct effect of T2DM accounts for $85.3\%$ of the TE. Although a significant proportion of residual cardiovascular risk remains due to T2DM itself, it is important to note that about $15\%$ of this risk is attributable to dyslipidemia and obesity. This suggests that about $15\%$ relative risk reduction could be achieved by managing weight and lowering blood lipid levels. The results align closely with the established conclusions on cardiovascular outcomes in T2DM.
To emphasize the potential influence of mediator-outcome confounding, we also calculate the results by treating dyslipidemia and obesity as separate mediators, as performed by \cite{sharif2019mediation}. The natural indirect effect of dyslipidemia is estimated at $0.862\times 10^{-2}$, amounting for about $5.8\%$ of TE, while the natural indirect effect of obesity is $1.407 \times 10^{-2}$, amounting for $9.5\%$ of the TE. The estimate for dyslipidaemia aligns with the PSE as expected, whereas the estimate for obesity appears to be slightly overestimated compared to the PSE. This indicates that ignoring dyslipidemia when evaluating the mediation effect of obesity may introduce slight confounding bias. This finding is consistent with the theoretical results discussed earlier.
Compared with the SRI method, both MI and CCA methods underestimate the PSEs to varying degrees. 
This indicates that assuming the missingness of drinking behavior to be completely at random or at random is, to some extent, unreasonable, thereby lending greater credibility to the proposed SRI method in this study.

\begin{table}[h] 
\caption{Point estimates, standard error and $95\%$ confidence intervals of $\text{NIE}_{M_1}$, $\text{NIE}_{M_2}$, NDE and TE. All entries are in units of $10^{-2}$.}
\label{tab:application}
\begin{center}
\begin{tabular}{ccccc}
\toprule
Estimands & Methods & Point Estimates &Standard Error & $95\%$ Confidence Intervals \\
\midrule
\multirow{3}*{$\text{NIE}_{M_1}$} & SRI & 0.860 & 0.266 & [0.338,1.382] \\
& MI & 0.639 & 0.189 & [0.269,1.009] \\
& CCA & 0.677 & 0.214 & [0.258,1.096] \\ 
\midrule
\multirow{3}*{$\text{NIE}_{M_2}$} & SRI& 1.319 & 0.631 & [0.082,2.556] \\ 
& MI & 0.818 & 0.376 & [0.081,1.554] \\ 
& CCA & 1.015 & 0.509 & [0.017,2.013] \\ 
\midrule
\multirow{3}*{$\text{NDE}$} & SRI & 12.602 & 2.835 & [7.045,18.159] \\ 
& MI & 10.312 & 1.954 & [6.483,14.141] \\ 
& CCA & 10.276 & 2.542 & [5.294,15.258] \\ 
\midrule
\multirow{3}*{$\text{TE}$} & SRI & 14.781 & 2.757 & [9.377,20.185]\\ 
& MI & 11.769 & 1.908 & [8.028,15.509] \\ 
& CCA & 11.968 & 2.536 & [6.998,16.938] \\ 
\bottomrule
\end{tabular}
\end{center}
\end{table}

\section{Discussion}
\label{s:discuss}
In this paper, we develop a methodology for estimating and making inferences about path-specific effects when multiple intermediate variables are involved and pre-treatment covariates are subject to nonignorable missingness. This issue is commonly encountered in epidemiological research and related dataset.
Our method utilizes an auxiliary shadow variable, typically obtainable from related laboratory data, to recover information on missing covariates.
Our method has several attractive features: (i) it accommodates a wide range of nonparametric models, allowing for complex nonlinear structures and thereby reducing the risk of model misspecification; (ii) our estimation procedure is robust and easy to implement using standard statistical software; (iii) we offer an alternative approach for estimating asymptotic variance that avoids the instability caused by inverse probability.

Some possible directions for future research remain open. The proposed estimation is based on the regression-imputation formula of $\psi$, which requires relatively stringent conditions on the consistency and convergence rate of each nuisance estimator. To alleviate these conditions, a potential approach is to develop semiparametric estimation that leverages the benefits of Neyman orthogonality and incorporates state-of-the-art machine learning techniques. However, it faces some challenging problems, for example, how to construct orthogonal scores, and how to employ machine learning to solving the integral equation.
Additionally, other estimands instead of nature direct and indirect effects are of interest if there is a mediator–outcome confounder affected by the treatment, such as interventional direct and indirect effects \citep{VanderweeleVansteelandtRobins2014}. These estimands have distinct identification formulas and also face the challenge of missing data in real-world studies. These problems will be explored in future work.

\section*{Supplementary Material}
The supplementary material includes additional assumptions, derivation of identifiable expressions of $\gamma$, $\mu_k$ and $\psi$ in \Cref{sec:estimation}, intermediate results including precise convergence rates of nuisance estimators, discussion about the efficiency of the proposed estimator, and all technical proofs of lemmas and theorems.
\addcontentsline{toc}{section}{References}
\putbib
\end{bibunit}

\clearpage
\begin{bibunit}
\setcounter{page}{1}
\counterwithin{thm}{section}
\counterwithin{lemma}{section}
\counterwithin{assum}{section}
\counterwithin{equation}{section}
\counterwithin{table}{section}
\titleformat{\section}[block]{\normalfont\Large\filcenter}{{\sc\appendixname}~\thesection}{1em}{}
\title{Supplementary Material for ``\papertitle''}
\pdfbookmark[1]{Supplementary Material}{title1}
\emptythanks
\maketitle
\appendix

\begin{center}
\textsc{Outline} 
\end{center}
\Cref{sec:app_assum} includes additional technical assumptions used in this paper. 
\Cref{sec:app_identification} provides details about identifiablity of $\gamma$, $\mu_k$ and $\psi$.
\Cref{sec:app_convergence_rate} presents important intermediate results regarding the convergence rates of the nuisance estimators $\wh\gamma$ and $\wh\mu_k$ for $k\in\{1,\ldots,K+1\}$ under various metrics, which are crucial for establishing asymptotic normality. 
\Cref{sec:app_pf_thm_asy_normal} provides the proof of \Cref{thm:asy_normal} regarding asymptotic normality, discusses the efficiency of $\wh\psi$, and gives the proof of \Cref{lemma:iden_coef} in the main text. 
Finally, \Cref{sec:app_Aux_lemmas} contains auxiliary lemmas used in the proofs along with their corresponding proofs.

\addtocontents{toc}{\protect\setcounter{tocdepth}{3}}

\clearpage
\section{Additional assumptions}\label{sec:app_assum}
The following assumptions are required for theoretic derivations.

\begin{assum}\label{ass:supp_dgp}
(i) $\psi\in\Psi$ that is compact, and for any $\wt\gamma\in\Gamma$ and $a\in\{0,1\}$, $\wt\gamma(x,a,\boverbar{m}_K,y)\in C_B^{\alpha}(\calX\times\boverbar\calM_K\times\calY)$ for some $B>0$ and {$\alpha>(d_x+K+1)/2$} with $d_x$ the dimension of $X$; 
(ii) $\calZ\times\calX\times\boverbar\calM_K\times\calY$ is compact and convex with nonempty interior. 
\end{assum}
\begin{assum}\label{ass:supp_sieve_delta_inter}
(i) $\Gamma_n$ is a nonempty closed convex subset of $\Gamma$; 
(ii) There is an operator $\Pi_n$ such that for any $\wt\gamma\in\Gamma$, $\Pi_n\wt\gamma\in\Gamma_n$ satisfying that $\|\Pi_n\wt\gamma- \wt\gamma\|_{\infty}=O(s_n^{-\alpha_1})$ with $\alpha_1>0$.
\end{assum}

\begin{assum}\label{ass:supp_sieve_delta_exter}
(i) The eigenvalues of 

$\P\trans \P$ is bounded above and bounded away from zero for all $l_n$; 
(ii) There is ${\bm\pi}_{\wt\gamma}\in\mR^{l_n}$ such that $\sup_{(z,a,\boverbar{m}_K,y)\in\calZ\times\{0,1\}\times\boverbar\calM_K\times\calY} |E(R\wt\gamma\mid z,a,\boverbar{m}_K,y)-\bbar{p}_{l_n}(z,a,\boverbar{m}_K,y)\trans{\bm\pi}_{\wt\gamma}| = O(l_n^{-\alpha_2})$ uniformly over $\wt\gamma\in\Gamma$ with $\alpha_2>0$; 
(iii) $\zeta_{1n}^2l_n/n=o(1)$ and $\zeta_{1n}l_n^{-\alpha_2}=o(1)$, where $\zeta_{1n}\triangleq \sup_{(z,a,\boverbar{m}_K,y)\in\calZ\times\{0,1\}\times\boverbar\calM_K\times\calY} \|\bbar{p}_{l_n}(z,a,\boverbar{m}_K,y)\|$; (iv) $l_n\ge s_n$.
\end{assum}
Recall that
\begin{align*}
\Lambda_{kn}=\left\{g:
g(x,\boverbar{m}_{k-1})=\bbar{u}_k(x,\boverbar{m}_{k-1})\trans\bm\pi = 
\textstyle\sum_{j=1}^{t_{kn}} \pi_j u_{kj}(x,\boverbar{m}_{k-1}),~\bm\pi\in\mR^{t_{kn}}
\right\},
\end{align*}
is used in estimating $\mu_k(x,\boverbar{m}_{k-1})$ for $k\in\{1,\ldots,K+1\}$.

\begin{assum}\label{ass:supp_sieve_mu}
For $k\in\{1,\ldots,K+1\}$, we assume: 
(i) The eigenvalue of the $t_{kn}$-by-$t_{kn}$ matrix $E\{\bbar{u}_k(X,\boverbar{M}_{k-1})\bbar{u}_k(X,\boverbar{M}_{k-1})\trans\}$ is bounded above and bounded away from zero for all $t_{kn}$; 
(ii) There is ${\bm\pi}_{1k}\in\mR^{t_{kn}}$ such that $\sup_{(x,\boverbar{m}_{k-1})\in\calX\times\boverbar\calM_{k-1}} \big|\mu_k(x,\boverbar{m}_{k-1})-\bbar{u}_k(x,\boverbar{m}_{k-1})\trans{\bm\pi}_{1k}\big| = O(t_{kn}^{-\alpha_{3k}})$ with $\alpha_{3k}>0$;
(iii) There is ${\bm\pi}_{2k}\in\mR^{t_{kn}}$ such that $\sup_{(x,\boverbar{m}_{k-1})\in\calX\times\boverbar\calM_{k-1}} \big|\prod_{j=1}^k\omega_j(x,\boverbar{m}_{j-1})-\bbar{u}_k(x,\boverbar{m}_{k-1})\trans{\bm\pi}_{2k}\big| = O(t_{kn}^{-\alpha_{3k}})$; 
(iv) $\zeta_{2kn}^2t_{kn}/n=o(1)$ and $\zeta_{2kn}t_{kn}^{-\alpha_{3k}}=o(1)$, where $\zeta_{2kn}\triangleq \sup_{(x,\boverbar{m}_{k-1})\in\calX\times\boverbar\calM_{k-1}} \|\bbar{u}_k(x,\boverbar{m}_{k-1})\|$.
\end{assum}
\begin{assum}\label{ass:supp_sieve_strengthened_rate}
(i) $s_n^{-2\alpha_1} + \zeta_{1n}^2(l_n/n+l_n^{-2\alpha_2})=o(n^{-1/2})$; (ii) $\zeta_{2kn}^2\sqrt{t_{kn}^2/n}=o(1)$ and $\sqrt{n}t_{kn}^{-\alpha_{3k}}=o(1)$, for $k\in\{1,\ldots,K+1\}$.
\end{assum}

Let $\varepsilon_{K+1} \triangleq I(A=a_{K+1})\{Y-\mu_{K+1}(X,\boverbar{M}_K)\}$ and $\varepsilon_k \triangleq I(A=a_k)\{\mu_{k+1,\bbar{a}_{k+1}}(X,\boverbar{M}_k)-\mu_k(X,\boverbar{M}_{k-1})\}$ for $k\in\{1,\ldots,K\}$ denote a sequence of residual variables with finite second order moments. Let $h^*_{k,j}$, $k\in\{1,\ldots,K+1\}$, $j\in\{1,\ldots,t_{kn}\}$ be solutions to 
\begin{align*}
\inf_{h_{k,j}\in\calL_2(X,A,\boverbar{M}_{K},Y)}& E\left[\{1+R\varepsilon_k u_{k,j}(X,\boverbar{M}_{k-1})h_{k,j}(X,A,\boverbar{M}_{K},Y)\}^2\right] \\ 
&\qquad + E\left[E\{Rh_{k,j}(X,A,\boverbar{M}_{K},Y)\mid Z,A,\boverbar{M}_{K},Y\}^2\right].
\end{align*}

\begin{assum}\label{ass:supp_technique_mu}
For $k\in\{1,\ldots,K+1\}$ and $j\in\{1,\ldots,t_{kn}\}$, we assume:\\
(i) $E\{1+R\varepsilon_k u_{k,j}(X,\boverbar{M}_{k-1})h^*_{k,j}(X,A,\boverbar{M}_{K},Y)\}>0$; (ii) $h^*_{k,j}\in\Gamma$, or there exists $h^*_{k,nj}\in\Gamma_n$ such that $\|h^*_{k,j}-h^*_{k,nj}\|_\infty=O(s_n^{-\alpha_1})$.
\end{assum}
We give a brief discussion of above assumptions. 
\Cref{ass:supp_dgp} imposes the compact parameter spaces. Specifically, \Cref{ass:supp_dgp}(i) requires that each function $\wt\gamma\in\mathbf{\Gamma}$ is sufficiently smooth and bounded, and is compact under $\|\cdot\|_\infty$. It is a commonly imposed condition in the nonparametric and semiparametric literature to deal with the ill-posed problem arising from discontinuity of the solution to the integral equation. 
\Cref{ass:supp_dgp}(ii) requires the support of $(Z,X,\boverbar{M},Y)$ to be bounded. In fact, this assumption can be relaxed if we restrict the tail behavior of the joint distribution of $(Z,X,\boverbar{M},Y)$, which will be illustrated in our simulation studies.
\Cref{ass:supp_sieve_delta_inter} specifies $\Gamma_n$ and its approximation error. The approximation properties of common sieves are already known in the literature to satisfy this assumption, see \cite{HiranoImbensRidder2003,Chen2007Handbook} for detailed discussions.
\Cref{ass:supp_sieve_delta_exter}(i)-(iii) and \Cref{ass:supp_sieve_mu} are typical conditions in the use of series estimators for conditional mean functions and we discuss then later together. \Cref{ass:supp_sieve_delta_exter}(iv) states the relationship between the sieve sizes $l_n$ and $s_n$, which guarantees the feasibility of the empirical optimization problem in obtaining $\wh\gamma$. 
\Cref{ass:supp_sieve_mu}(i) essentially ensures the sieve approximation estimator is nondegenerate, which is common in sieve regression literature \citep[see e.g.][]{Newey1997}. \Cref{ass:supp_sieve_mu}(ii)(iii) requires the sieve approximation error of $\mu_k$ and cumulative product of $\omega_k$ to shrink at a polynomial rate, which holds for a wide range of basis functions \citep[see e.g.][p.5573]{Chen2007Handbook}.
The polynomial rate typically depends positively on the smoothness of the estimand and negatively on the dimension of arguments. \Cref{ass:supp_sieve_mu}(iv) is a rate requirement to ensure the convergence of the series estimation. \cite{Newey1997} shows that $\zeta_{2kn}$ is $O(t_{kn})$ if $\bbar{u}_k(x,\boverbar{m}_{k-1})$ is a power series, and is $O(\sqrt{t_{kn}})$ if it is a B-spline.
\Cref{ass:supp_sieve_strengthened_rate} introduces further constraints on the smoothness parameters $(s_n,l_n,t_{kn})$, strengthening the conditions in \Cref{ass:supp_sieve_delta_inter}(ii) and \ref{ass:supp_sieve_delta_exter}(iii) and \Cref{ass:supp_sieve_mu}(iv), respectively. 
Specifically, in (i), the rate $s_n^{-\alpha_1}$ signifies the approximation error of the sieve space $\Gamma_n$ for $\Gamma$, and this restriction is satisfied under mild conditions \citep{Chen2007Handbook}. The rate $\zeta_{1n}^2(l_n/n+l_n^{-2\alpha_2})$ corresponds to the squared convergence rate of series estimation in a supremum norm \citep{Newey1997}. Similarly (ii) is the rate constraints of the variance ($\zeta_{2kn}^2t_{kn}/n$) in a supremum norm and the approximation error ($t_{kn}^{-\alpha_{3k}}$) in series estimation. 
\Cref{ass:supp_technique_mu} are regularity conditions for obtaining $\sqrt{n}$-convergence of plug-in sieve minimum distance estimators of specific functionals \citep{AiChen2007}. The $\sqrt{n}$-estimability of linear/nonlinear functionals when the nonparametric component depends on endogenous variables are also discussed in \cite{SeveriniTripathi2012} and \cite{AiChen2012}.

\section{Identification}\label{sec:app_identification}
In this section we provide details about identifiablity of $\gamma$, $\mu_k$ and $\psi$.
Recall the definition of parameters in the main text:
\begin{align*}
&\gamma(X,A,\boverbar{M}_K,Y)\triangleq\frac{f(R=0\mid X,A,\boverbar{M}_K,Y)}{f(R=1\mid X,A,\boverbar{M}_K,Y)}, \\
&\mu_{K+1}(X,\boverbar{M}_K)\triangleq E(Y \mid X, A = a_{K+1}, \boverbar{M}_K), \\
&\mu_k(X,\boverbar{M}_{k-1})\triangleq E\{\mu_{k+1}(X,\boverbar{M}_k)\mid X, A = a_k, \boverbar{M}_{k-1}\},~~ k\in\{1,\ldots,K\}.
\end{align*}

We first show that under \Crefrange{ass:shadow}{ass:completeness}, $\gamma(X,A,\boverbar{M}_K,Y)$ is the unique solution to the following integral equation with respect to the observed law only:
\beqr \label{eq:app_iden-1-1}
E\left\{R\gamma(X,A,\boverbar{M}_{K},Y) \mid Z,A,\boverbar{M}_{K},Y\right\} = E(1-R \mid Z,A,\boverbar{M}_{K},Y),
\eeqr
which is equivalent to Equation \eqref{eq:iden_gamma} in the main text because $E(1-R \mid Z,A,\boverbar{M}_{K},Y)=f(R=0\mid Z,A,\boverbar{M}_K,Y)$ and 
\begin{align*}
& E\left\{R\gamma(X,A,\boverbar{M}_{K},Y) \mid Z,A,\boverbar{M}_{K},Y\right\} \\ 
&=~f(R=1\mid Z,A,\boverbar{M}_K,Y)E\left\{\gamma(X,A,\boverbar{M}_{K},Y) \mid R=1,Z,A,\boverbar{M}_{K},Y\right\} .
\end{align*}
Note that \eqref{eq:app_iden-1-1} can be written as
\beqrs
E\left[R\left\{1+\gamma(X,A,\boverbar{M}_{K},Y)\right\} \mid Z,A,\boverbar{M}_{K},Y\right] = 1.
\eeqrs
Suppose $\widetilde{\gamma}(X,A,\boverbar{M}_{K},Y)$ is a solution of \eqref{eq:app_iden-1-1}, it satisfies that
\begin{align*}
1=&~ E\left[R\{1+\widetilde{\gamma}(X,A,\boverbar{M}_K,Y)\}\mid Z,A,\boverbar{M}_K,Y\right] \\ 
=&~ E\left[E(R\mid Z,X,A,\boverbar{M}_K,Y)\{1+\widetilde{\gamma}(X,A,\boverbar{M}_K,Y)\}\mid Z,A,\boverbar{M}_K,Y\right] \\
=&~ E\left[E(R\mid X,A,\boverbar{M}_K,Y)\{1+\widetilde{\gamma}(X,A,\boverbar{M}_K,Y)\}\mid Z,A,\boverbar{M}_K,Y\right] \\ 
=&~ E\left[\{1+\gamma(X,A,\boverbar{M}_K,Y)\}^{-1}\{1+\widetilde{\gamma}(X,A,\boverbar{M}_K,Y)\}\mid Z,A,\boverbar{M}_K,Y\right],
\end{align*}
where the third equality is by \cref{ass:shadow}(ii). Then the completeness \cref{ass:completeness} implies that $\wt\gamma(X,A,\boverbar{M}_K,Y)=\gamma(X,A,\boverbar{M}_K,Y)$ almost surely. Therefore, $\gamma$ is the unique solution to \eqref{eq:app_iden-1-1} and hence identified by the observe data.

Next, we establish identifiability of $\mu_{K+1}(X,\boverbar{M}_K)$ by demonstrating that it is the unique solution to the optimization problem
\beqr \label{eq:app_iden_uK+1-1}
\inf_{\mu\in \calL_2(X,\boverbar{M}_K)} E\left[I(A=a_{K+1})R\{1+\gamma(X,A,\boverbar{M}_K,Y)\}\{Y-\mu(X,\boverbar{M}_K)\}^2\right].
\eeqr
Since \eqref{eq:app_iden_uK+1-1} is a convex optimization problem, suppose $\wt{\mu}(X,\boverbar{M}_K)$ is the solution to \eqref{eq:app_iden_uK+1-1}. Then $\wt\mu$ satisfies the first-order condition:
\beqrs
E\left[I(A=a_{K+1})R\{1+\gamma(X,A,\boverbar{M}_K,Y)\}\{Y-\wt{\mu}(X,\boverbar{M}_K)\} \mu(X,\boverbar{M}_K)\right] = 0,~~\forall\mu\in \calL_2(X,\boverbar{M}_K).
\eeqrs
It follows that
\beqrs
E\left[I(A=a_{K+1})R\{1+\gamma(X,A,\boverbar{M}_K,Y)\}\{Y-\wt{\mu}(X,\boverbar{M}_K)\} \mid X,\boverbar{M}_K\right] = ~0.
\eeqrs
Hence, 
\begin{align*}
\wt{\mu}(X,\boverbar{M}_K)=&~ \frac{E[YI(A=a_{K+1})R\{1+\gamma(X,A,\boverbar{M}_K,Y)\}\mid X, \boverbar{M}_K]}{E[I(A=a_{K+1})R\{1+\gamma(X,A,\boverbar{M}_K,Y)\}\mid X, \boverbar{M}_K]}\\
=&~ \frac{E[Y\{1+\gamma(X,A,\boverbar{M}_K,Y)\}\mid R=1,X,A=a_{K+1},\boverbar{M}_K]}{E\{1+\gamma(X,A,\boverbar{M}_K,Y)\mid R=1,X,A=a_{K+1},\boverbar{M}_K\}}.
\end{align*}
Note that 
\begin{align}
\label{eq:app_thm1-2-1}
& E[Y\{1+\gamma(X,A,\boverbar{M}_K,Y)\}\mid R=1,X,a_{K+1},\boverbar{M}_K] \nonumber\\ 
=&~ \int y \frac{f(y\mid R=1,X,a_{K+1},\boverbar{M}_K)}{f(R=1\mid X,a_{K+1},\boverbar{M}_K,y)} dy \nonumber\\
=&~ f(R=1\mid X,a_{K+1},\boverbar{M}_K)^{-1} \int y \frac{f(R=1,y\mid X,a_{K+1},\boverbar{M}_K)}{f(R=1\mid X,a_{K+1},\boverbar{M}_K,y)} dy \nonumber\\
=&~ f(R=1\mid X,a_{K+1},\boverbar{M}_K)^{-1} \int y f(y\mid X,a_{K+1},\boverbar{M}_K) dy \nonumber\\
=&~ f(R=1\mid X,a_{K+1},\boverbar{M}_K)^{-1} E(Y\mid X,a_{K+1},\boverbar{M}_K), 
\end{align}
and similarly,
\begin{align}\label{eq:app_thm1-2-2}
E\{1+\gamma(X,A,\boverbar{M}_K,Y)\mid R=1,X,a_{K+1},\boverbar{M}_K\} = f(R=1\mid X,a_{K+1},\boverbar{M}_K)^{-1}.
\end{align}
Combining \eqref{eq:app_thm1-2-1} and \eqref{eq:app_thm1-2-2} yields that 
\begin{align*}
\wt{\mu}(X,\boverbar{M}_K)=&~ \frac{E[Y\{1+\gamma(X,A,\boverbar{M}_K,Y)\}\mid R=1,X,A=a_{K+1},\boverbar{M}_K]}{E\{1+\gamma(X,A,\boverbar{M}_K,Y)\mid R=1,X,A=a_{K+1},\boverbar{M}_K\}} \\
=&~ E(Y\mid X,a_{K+1},\boverbar{M}_K) \\
=&~ \mu_{K+1}(X,\boverbar{M}_K),
\end{align*}
where the last equality is by the definition of $\mu_{K+1}$. It follow that $\mu_{K+1}(X,\boverbar{M}_K)$ is identified by \eqref{eq:app_iden_uK+1-1}.

Similarly we shall demonstrate that $\mu_k(X,\boverbar{M}_{k-1})$ is the unique solution to 
\begin{align} \label{eq:app_iden_uk-1}
\inf_{\mu\in\calL_2(X,\boverbar{M}_{k-1})} E\left[I(A=a_k)R\{1+\gamma(X,A,\boverbar{M}_K,Y)\}\{\mu_{k+1}(X,\boverbar{M}_k)-\mu(X,\boverbar{M}_{k-1})\}^2\right],
\end{align}
for $k\in\{1,\ldots,K\}$. 
Suppose $\wt{\mu}_k(X,\boverbar{M}_{k-1})$ is a solution to \eqref{eq:app_iden_uk-1}, then the first-order condition implies that $\wt\mu_k$ satisfies
\beqrs
E\left[I(A=a_k)R\{1+\gamma(X,A,\boverbar{M}_{k-1},Y)\}\{\mu_{k+1}(X,\boverbar{M}_k)-\wt{\mu}_k(X,\boverbar{M}_{k-1})\} \mid X,\boverbar{M}_{k-1}\right]=0.
\eeqrs
It follows that 
\begin{align*}
\wt{\mu}_k(X,\boverbar{M}_{k-1})=&~ \frac{E[\mu_{k+1}(X,\boverbar{M}_k)I(A=a_k)R\{1+\gamma(X,A,\boverbar{M}_K,Y)\}\mid X, \boverbar{M}_{k-1}]}{E[I(A=a_k)R\{1+\gamma(X,A,\boverbar{M}_K,Y)\}\mid X, \boverbar{M}_{k-1}]}\\
=&~ \frac{E[\mu_{k+1}(X,\boverbar{M}_k)\{1+\gamma(X,A,\boverbar{M}_K,Y)\}\mid R=1,X,A=a_k,\boverbar{M}_{k-1}]}{E\{1+\gamma(X,A,\boverbar{M}_K,Y)\mid R=1,X,A=a_k,\boverbar{M}_{k-1}\}}.
\end{align*}
Note that
\begin{align}
\label{eq:app_thm1-3-1}
& E\left[\mu_{k+1}(X,\boverbar{M}_k)\{1+\gamma(X,A,\boverbar{M}_K,Y)\}\mid R=1,X,a_k,\boverbar{M}_{k-1}\right] \nonumber\\ 
=&~ f(R=1\mid X,a_k,\boverbar{M}_{k-1})^{-1} \iint \mu_{k+1}(X,m_k,\boverbar{M}_{k-1}) \frac{f(m_k,y, R=1\mid a_k,X)}{f(R=1\mid X,a_k,m_k,\boverbar{M}_{k-1})} dm_kdy \nonumber\\
=&~ f(R=1\mid X,a_k,\boverbar{M}_{k-1})^{-1} \iint \mu_{k+1}(X,m_k,\boverbar{M}_{k-1}) f(m_k,y\mid X,a_k,\boverbar{M}_{k-1}) dm_kdy \nonumber\\
=&~ f(R=1\mid X,a_k,\boverbar{M}_{k-1})^{-1} \int \mu_{k+1}(X,m_k,\boverbar{M}_{k-1}) f(m_k\mid X,a_k,\boverbar{M}_{k-1}) dm_k \nonumber \\
=&~ f(R=1\mid X,a_k,\boverbar{M}_{k-1})^{-1}E\{\mu_{k+1}(X,\boverbar{M}_k)\mid X, a_k, \boverbar{M}_{k-1}\},
\end{align}
and
\beqr\label{eq:app_thm1-3-2}
E\{1+\gamma(X,A,\boverbar{M}_K,Y)\mid R=1,X,a_k,\boverbar{M}_{k-1}\} =
f(R=1\mid X,a_k,\boverbar{M}_{k-1})^{-1}.
\eeqr
Combining \eqref{eq:app_thm1-3-1} and \eqref{eq:app_thm1-3-2}, we can obtain that 
\begin{align*}
\wt{\mu}_k(X,\boverbar{M}_{k-1})=&~ \frac{E[\mu_{k+1}(X,\boverbar{M}_k)\{1+\gamma(X,A,\boverbar{M}_K,Y)\}\mid R=1,X,A=a_k,\boverbar{M}_{k-1}]}{E\{1+\gamma(X,A,\boverbar{M}_K,Y)\mid R=1,X,A=a_k,\boverbar{M}_{k-1}\}} \\
=&~ E\{\mu_{k+1}(X,\boverbar{M}_k)\mid X, A = a_k, \boverbar{M}_{k-1}\} \\
=&~\mu_k(X,\boverbar{M}_{k-1}),
\end{align*}
for $k\in\{1,\dots,K\}$. It follows that $\mu_k(X,\boverbar{M}_{k-1})$ is identified by \eqref{eq:app_iden_uk-1}.

Finally, note that
\beqrs 
\psi = E\{\mu_1(X)\} 
= E\{R\mu_1(X)\} + E\{(1-R)\mu_1(X)\},
\eeqrs
and
\begin{align*}
E\{(1-R)\mu_1(X) \}
&=~ E\{E(1-R \mid X,A,\boverbar{M}_K,Y)\mu_1(X) \} \\
&=~ E\{f(R=0\mid X,A,\boverbar{M}_K,Y)\mu_1(X)\}\\
&=~ E\{f(R=1\mid X,A,\boverbar{M}_K,Y)\gamma(X,A,\boverbar{M}_K,Y)\mu_1(X) \}\\
&=~ E\{E(R \mid X,A,\boverbar{M}_K,Y)\gamma(X,A,\boverbar{M}_K,Y)\mu_1(X) \}\\
&=~E\{R\gamma(X,A,\boverbar{M}_K,Y)\mu_1(X)\}.
\end{align*}
It follows that 
\beqrs 
\psi 
= E\{R\mu_1(X) + R\gamma(X,A,\boverbar{M}_K,Y)\mu_1(X)\}.
\eeqrs
\section{Convergence rates}\label{sec:app_convergence_rate}
In this section we establish convergence rates of nuisance estimators $\wh\gamma$ and $\wh\mu_k$ for $k\in\{1,\ldots,K+1\}$ under various metrics (stated in \Cref{thm:weak_rate_delta,thm:convergence_rate_gamma_eta} later), which are necessary for establishing the asymptotic normality of $\wh\psi$.
Recall that 
\begin{align*}
\langle \gamma_1,\gamma_2 \rangle_w = 
E\big[E\{R\gamma_1(X,A,\boverbar{M}_K,Y)\mid Z,A,\boverbar{M}_K,Y\}E\{R\gamma_2(X,A,\boverbar{M}_K,Y)\mid Z,A,\boverbar{M}_K,Y\}\big],
\end{align*}
and $\|\gamma_1\|_w=\langle \gamma_1,\gamma_1 \rangle_w^{1/2}$ for any $\gamma_1,\gamma_2\in\Gamma$.
We use the calligraphic letter $\calV$ to denote the support of $V$ and $\calL_2(V)\equiv\{g:\calV\mapsto\mR:~\int_\calV \Abs{g(v)}^2f_V(v)dv<\infty\}$. 
For any function $g\in\calL_2(V)$, the $\calL_2$ metric is denoted as $\|g\|_{\calL_2}= (\int_\calV \Abs{g(v)}^2f_V(v)dv )^{1/2}$, and the $\calL_{\infty}$ (supremum) metric is denoted as $\|g\|_\infty=\sup_{v\in\calV}\Abs{g(v)}$. 
\subsection{Convergence rate of $\wh\gamma$}

The consistency of $\wh\gamma$ under $\|\cdot\|_\infty$ and the convergence rate under $\|\cdot\|_w$ are established in the following theorem.

\begin{thm}\label{thm:weak_rate_delta}
Suppose \Crefrange{ass:supp_dgp}{ass:supp_sieve_delta_exter} hold. We have 
\beqrs
\|\wh\gamma-\gamma\|_\infty=o_p(1)
~~\mbox{and}~~
\|\wh\gamma-\gamma\|_w= O_p(\sqrt{l_n/n}+l_n^{-\alpha_2}+s_n^{-\alpha_1}).
\eeqrs
Additionally, suppose \Cref{ass:supp_sieve_strengthened_rate} holds, we have $\|\wh\gamma-\gamma\|_w=o_p(n^{-1/4})$.
\end{thm}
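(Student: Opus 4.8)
The plan is to treat $\wh\gamma$ as a sieve minimum distance estimator and argue in two stages: first establish consistency in the weak metric and upgrade it to $\|\cdot\|_\infty$ via compactness, then quantify the rate in $\|\cdot\|_w$ by balancing the sieve approximation error of $\Gamma_n$ against the series estimation error of the inner conditional expectation. The crucial opening observation is that the population criterion coincides \emph{exactly} with the squared weak norm. Since \eqref{eq:iden_gamma_transf} gives $E(R\gamma-1+R\mid Z,A,\boverbar{M}_K,Y)=0$, for any $\wt\gamma$ we have $E(R\wt\gamma-1+R\mid Z,A,\boverbar{M}_K,Y)=E\{R(\wt\gamma-\gamma)\mid Z,A,\boverbar{M}_K,Y\}$, whence $Q(\wt\gamma)=\|\wt\gamma-\gamma\|_w^2$. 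Thus minimizing $Q$ is minimizing weak-norm distance, and $\gamma$ is its unique zero by \Cref{ass:completeness}. The empirical $Q_n$ replaces the true inner conditional expectation by the series estimator $\wh E(\cdot\mid Z,A,\boverbar{M}_K,Y)$ and the outer expectation by the sample mean; all the work lies in controlling these two replacements uniformly over $\Gamma_n$.

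First I would show $\|\wh\gamma-\gamma\|_w=o_p(1)$. Writing $\gamma_n=\Pi_n\gamma\in\Gamma_n$ with $\|\gamma_n-\gamma\|_\infty=O(s_n^{-\alpha_1})\to0$ by \Cref{ass:supp_sieve_delta_inter}, the inequality $Q_n(\wh\gamma)\le Q_n(\gamma_n)$ together with a uniform law of large numbers $\sup_{\wt\gamma\in\Gamma_n}|Q_n(\wt\gamma)-Q(\wt\gamma)|=o_p(1)$ yields $Q(\wh\gamma)=o_p(1)$. The uniform LLN is where the smoothness condition $\alpha>(d_x+K+1)/2$ in \Cref{ass:supp_dgp}(i) enters: it makes $\Gamma$ a uniformly bounded H\"older ball with integrable metric entropy over its $(d_x+K+1)$-dimensional support, so the relevant empirical processes are well behaved, while the eigenvalue bound in \Cref{ass:supp_sieve_delta_exter}(i) controls $(\P\trans\P)^{-1}$. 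To upgrade to $\|\cdot\|_\infty$ I would invoke compactness of $\Gamma$ under $\|\cdot\|_\infty$: were $\|\wh\gamma-\gamma\|_\infty\not\to0$ along a subsequence, compactness extracts a further subsequence with $\wh\gamma\to\gamma^*\neq\gamma$ in $\|\cdot\|_\infty$; since $\|\cdot\|_w\le\|\cdot\|_{\calL_2}\le C\|\cdot\|_\infty$ on the bounded support, $\|\gamma^*-\gamma\|_w=0$, forcing $\gamma^*=\gamma$ by completeness, a contradiction.

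For the rate I would refine the basic inequality $Q_n(\wh\gamma)\le Q_n(\gamma_n)$ and decompose $Q_n-Q$ into the error from approximating the inner conditional expectation with only $l_n$ basis functions and the sampling error of the projection coefficients. Following \citet{Newey1997} and the sieve analysis in \citet{Chen2007Handbook}, the former contributes a bias of order $l_n^{-2\alpha_2}$ (via the series approximation rate of \Cref{ass:supp_sieve_delta_exter}(ii)) and the latter a variance of order $l_n/n$ (controlled by $\zeta_{1n}^2 l_n/n=o(1)$ in \Cref{ass:supp_sieve_delta_exter}(iii)), with the projection matrix $\P(\P\trans\P)^{-1}\P\trans$ handled through the eigenvalue bound. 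Combining with $Q(\gamma_n)=\|\gamma_n-\gamma\|_w^2\le C\|\gamma_n-\gamma\|_\infty^2=O(s_n^{-2\alpha_1})$ gives $\|\wh\gamma-\gamma\|_w^2=Q(\wh\gamma)=O_p(l_n/n+l_n^{-2\alpha_2}+s_n^{-2\alpha_1})$, which is the claimed $O_p(\sqrt{l_n/n}+l_n^{-\alpha_2}+s_n^{-\alpha_1})$. The essential point, and the reason the slow ill-posed $\calL_2$ rate never appears, is that the criterion is measured in precisely the weak norm $\|\cdot\|_w$ generated by the conditional-mean operator, so no inverse of an ill-conditioned operator is ever applied. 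Finally, under \Cref{ass:supp_sieve_strengthened_rate}(i) one has $s_n^{-2\alpha_1}+\zeta_{1n}^2(l_n/n+l_n^{-2\alpha_2})=o(n^{-1/2})$, which dominates the bare rate $s_n^{-2\alpha_1}+l_n/n+l_n^{-2\alpha_2}$ since $\zeta_{1n}\ge1$, so $\|\wh\gamma-\gamma\|_w^2=o_p(n^{-1/2})$ and hence $\|\wh\gamma-\gamma\|_w=o_p(n^{-1/4})$. The main obstacle I anticipate is the uniform control in the third paragraph: establishing both bounds uniformly over the growing sieve $\Gamma_n$ while simultaneously managing the random normalizer $(\P\trans\P)^{-1}$ and the cross terms linking the $l_n$-series stage to the $s_n$-sieve stage. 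This demands careful empirical-process arguments exploiting the smoothness and compactness of \Cref{ass:supp_dgp}, the eigenvalue and approximation conditions of \Cref{ass:supp_sieve_delta_exter}, and the ordering $l_n\ge s_n$ of \Cref{ass:supp_sieve_delta_exter}(iv) ensuring the first-stage series is rich enough relative to the sieve.
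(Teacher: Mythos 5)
Your proposal is correct and follows essentially the same route as the paper: the identity $Q(\wt\gamma)=\|\wt\gamma-\gamma\|_w^2$, the basic inequality $Q_n(\wh\gamma)\le Q_n(\Pi_n\gamma)$, a uniform two-sided comparison of $Q_n$ and $Q$ over $\Gamma_n$ contributing the $O_p(l_n/n+l_n^{-2\alpha_2})$ term, the sieve approximation error $O(s_n^{-2\alpha_1})$, and the observation that \Cref{ass:supp_sieve_strengthened_rate}(i) dominates the bare rate since $\zeta_{1n}\gtrsim 1$. The only difference is presentational: the paper delegates the sup-norm consistency to Theorem 4.1 of \cite{NeweyPowell2003} and the uniform criterion comparison to Lemma C.2(ii) of \cite{ChenPouzo2012}, whereas you sketch the same compactness-plus-identification and empirical-process arguments from first principles.
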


\begin{proof}
The proof of consistency of $\wh\gamma$ under the supremum norm $\|\cdot\|_\infty$ proceeds by verifying the conditions of Theorem 1 of \cite{NeweyPowell2003}. Assumptions 1-5 of \cite{NeweyPowell2003} are satisfied by \Cref{ass:completeness}, \Cref{ass:supp_dgp}(i), \Cref{ass:supp_sieve_delta_inter}(ii) and \Cref{ass:supp_sieve_delta_exter}(ii). Then applying Theorem 4.1 in \cite{NeweyPowell2003}, we obtain 
\beqrs
\|\wh\gamma-\gamma\|_{\infty} = o_p(1).
\eeqrs

We denote $\gamma_n=\Pi_n\gamma\in\Gamma_n$. By the definition of $\wh\gamma$, we have 
\begin{align}
\label{eq:app_weak_rate_delta_pf1}
\frac{1}{n}\sum_{i=1}^n \big\{\wh{E}(R\wh\gamma-1+R\mid Z_i,A_i,\boverbar{M}_{K,i},Y_i)\big\}^2 \le \frac{1}{n}\sum_{i=1}^n \big\{\wh{E}(R\gamma_n-1+R\mid Z_i,A_i,\boverbar{M}_{K,i},Y_i)\big\}^2.
\end{align}
By \Cref{lemma:lemmaC2_of_ChenPouzo2012}, \eqref{eq:app_weak_rate_delta_pf1} implies that there exist constants $C,C\pprime>0$, such that
\begin{align}
\label{eq:app_weak_rate_delta_pf2}
&C E [\{E(R\wh\gamma-1+R\mid Z,A,\boverbar{M}_K,Y)\}^2] \nonumber\\ 
&\quad\le C\pprime E[\{E(R\gamma_n-1+R\mid Z,A,\boverbar{M}_K,Y)\}^2] + O_p(l_n/n+l_n^{-2\alpha_2}).
\end{align}
By \Cref{ass:supp_sieve_delta_inter}(ii), we have 
\begin{align}
\label{eq:app_weak_rate_delta_pf3}
&E[\{E(R\gamma_n-1+R\mid Z,A,\boverbar{M}_K,Y)\}^2] \nonumber\\ 
&\quad= E[(E[R(\gamma_n-\gamma)\mid Z,A,\boverbar{M}_K,Y])^2] \le \|\gamma_n-\gamma\|_\infty^2 = O(s_n^{-2\alpha_1}).
\end{align}
By the definition of $\|\cdot\|_w$, we have
\begin{align}
\label{eq:app_weak_rate_delta_pf4}
E [\{E(R\wh\gamma-1+R\mid Z,A,\boverbar{M}_K,Y)\}^2] = \|\wh\gamma-\gamma\|_w^2.
\end{align}
Combining \eqref{eq:app_weak_rate_delta_pf2}, \eqref{eq:app_weak_rate_delta_pf3} and \eqref{eq:app_weak_rate_delta_pf4} yields that 
\beqrs
\|\wh\gamma-\gamma\|_w^2 = O_p(l_n/n+l_n^{-2\alpha_2}+s_n^{-2\alpha_1}).
\eeqrs
\end{proof}

\subsection{Convergence rate of $\wh\mu_k$}

The convergence rates of $\wh\mu_k$ for $k\in\{1,\ldots,K+1\}$ under $\|\cdot\|_{\calL_2}$ and $\|\cdot\|_\infty$ are established in the following theorem.
\begin{thm}\label{thm:convergence_rate_gamma_eta}
Suppose \Crefrange{ass:supp_dgp}{ass:supp_sieve_mu} and \ref{ass:supp_technique_mu} in the supplementary material hold. If $\max\{s_n^{-2\alpha_1},l_n^{-2\alpha_2},l_n/n\}=o(n^{-1/2})$, $0<\lambda_k=o(n^{-1/2})$ and $\max\{t_{kn}/n,t_{kn}^{-2\alpha_{3k}}\}=o(\lambda_k)$ for $k\in\{1,\ldots,K+1\}$, then we have
\begin{align*}
\|\wh\mu_{K+1}-\mu_{K+1}\|_{\calL_2} = O_p\big(\sqrt{\lambda_{K+1}}\big),\quad
\|\wh\mu_{K+1}-\mu_{K+1}\|_\infty = O_p\big(\zeta_{2(K+1)n}\sqrt{\lambda_{K+1}}\big),
\end{align*}
and for $k\in\{1,\ldots,K\}$,
\begin{align*}
\|\wh\mu_k-\mu_k\|_{\calL_2} = O_p\big(\sqrt{\lambda_k}+\textstyle{\sum_{j=k+1}^{K+1}}t_{jn}^{-\alpha_{3j}}\big),\quad
\|\wh\mu_k-\mu_k\|_\infty = O_p\Big\{\zeta_{2kn}\big(\sqrt{\lambda_k}+\textstyle{\sum_{j=k+1}^{K+1}}t_{jn}^{-\alpha_{3j}}\big)\Big\}.
\end{align*}

\end{thm}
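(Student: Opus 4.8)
The plan is to prove the two displays by \emph{backward induction} on $k$, starting from the innermost estimator $\wh\mu_{K+1}$ and descending to $\wh\mu_1$. Each $\wh\mu_k$ is a weighted sieve least-squares estimator, so the starting point is the standard series-regression analysis (in the style of Newey), but three features push us outside the textbook setting: the regression weights are the \emph{estimated} $1+\wh\gamma$ rather than $1+\gamma$; for $k\le K$ the regression response is the \emph{estimated} $\wh\mu_{k+1}$ rather than $\mu_{k+1}$; and $\wh\gamma$ converges only in the weak metric $\|\cdot\|_w$ owing to ill-posedness. Writing $\wh\mu_k=\bbar{u}_k\trans\wh\pi_k$ and letting $\pi_k^*$ be the coefficient of the weighted $\calL_2$-projection of $\mu_k$ onto $\Lambda_{kn}$, I would organize everything around the master decomposition $\wh\mu_k-\mu_k = \bbar{u}_k\trans(\wh\pi_k-\pi_k^*) + (\bbar{u}_k\trans\pi_k^*-\mu_k)$, where the second summand is the deterministic sieve approximation error of order $t_{kn}^{-\alpha_{3k}}$ by \Cref{ass:supp_sieve_mu}(ii), and the first is analyzed through the first-order conditions of the weighted least-squares problem.

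\textbf{Base case.} For $k=K+1$ the response $Y$ is fully observed, so to leading order $\wh\pi_{K+1}-\pi_{K+1}^* = \wh G_{K+1}^{-1}\,\tfrac1n\sum_i \wh w_i\,\varepsilon_{K+1,i}\,\bbar{u}_{K+1,i}$ with $\wh w_i = I(A_i=a_{K+1})R_i\{1+\wh\gamma_i\}$ and $\varepsilon_{K+1,i}=I(A_i=a_{K+1})\{Y_i-\mu_{K+1,i}\}$. I would first use \Cref{ass:supp_sieve_mu}(i)(iv) to show $\wh G_{K+1}$ concentrates around $G_{K+1}=E\{w\,\bbar{u}_{K+1}\bbar{u}_{K+1}\trans\}$ with eigenvalues bounded away from $0$ and $\infty$, then split the score into the true-weight part $\tfrac1n\sum w_i\varepsilon_{K+1,i}\bbar{u}_{K+1,i}$, whose $\calL_2$ contribution is the usual $O_p(\sqrt{t_{K+1,n}/n})$, and the weight-replacement part $\tfrac1n\sum_i I(A_i=a_{K+1})R_i(\wh\gamma_i-\gamma_i)\varepsilon_{K+1,i}\bbar{u}_{K+1,i}$. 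Combining the stochastic order with the approximation error gives $O_p(\sqrt{t_{K+1,n}/n}+t_{K+1,n}^{-\alpha_{3(K+1)}})=O_p(\sqrt{\lambda_{K+1}})$, since $\lambda_{K+1}$ dominates both $t_{K+1,n}/n$ and $t_{K+1,n}^{-2\alpha_{3(K+1)}}$. The weight-replacement part is the delicate piece: because $\wh\gamma$ has no fast $\calL_2$ rate, I would bound its effect direction-by-direction through the Riesz representers $h^*_{K+1,j}$ of \Cref{ass:supp_technique_mu}, writing each relevant linear functional of $\wh\gamma-\gamma$ as an inner product $\langle\wh\gamma-\gamma,h^*\rangle_w$ and invoking $\|\wh\gamma-\gamma\|_w=o_p(n^{-1/4})$ from \Cref{thm:weak_rate_delta}; uniformity over the growing sieve is controlled by \Cref{ass:supp_technique_mu}(i) together with the boundedness of the representers, yielding an $o_p(\sqrt{\lambda_{K+1}})$ contribution. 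The $\calL_\infty$ bound then follows from $\|\bbar{u}_{K+1}\trans\delta\|_\infty\le\zeta_{2(K+1)n}\|\delta\|$ and the eigenvalue bound, which attaches the factor $\zeta_{2(K+1)n}$.

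\textbf{Inductive step.} Assume the stated rates for $\wh\mu_{k+1}$ and set $e_{k+1}=\wh\mu_{k+1}-\mu_{k+1}$. Compared with regressing the true $\mu_{k+1}$, the response error shifts the coefficient by $\wh G_k^{-1}\tfrac1n\sum_i\wh w_{k,i}e_{k+1,i}\bbar{u}_{k,i}$, while the true-weight/true-response part and the $\wh\gamma$-replacement part are handled exactly as in the base case (the latter again via the representers of \Cref{ass:supp_technique_mu}, now with residual $\varepsilon_k$). The crux is the propagation term. I would decompose $e_{k+1}$ into its bias part $B_{k+1}$ (deterministic approximation error, of sup-norm order $\sum_{j=k+1}^{K+1}t_{jn}^{-\alpha_{3j}}$ by the induction hypothesis) and its stochastic part $V_{k+1}=\bbar{u}_{k+1}\trans c_{k+1}$ with $\|c_{k+1}\|=O_p(\sqrt{t_{k+1,n}/n})$. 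The bias part contributes $E\{w_k B_{k+1}\bbar{u}_k\}$, producing the additive sum $\sum_{j=k+1}^{K+1}t_{jn}^{-\alpha_{3j}}$ after the $\wh G_k^{-1}$ normalization. For the stochastic part I would exploit that $\bbar{u}_k$ depends only on the coarser arguments $(X,\boverbar{M}_{k-1})$: the contribution equals $\wh G_k^{-1}(\tfrac1n\sum_i w_{k,i}\bbar{u}_{k,i}\bbar{u}_{k+1,i}\trans)c_{k+1}$, and the relevant quadratic form is governed by the weighted projection of $\bbar{u}_{k+1}$ onto functions of $(X,\boverbar{M}_{k-1})$; a Jensen/contraction argument bounds it by $c_{k+1}\trans G_{k+1}c_{k+1}=O_p(\lambda_{k+1})$, so the propagated stochastic error enters at order $\sqrt{\lambda_{k+1}}$ and is absorbed into $\sqrt{\lambda_k}$ when the smoothing levels are taken comparable across stages, as the conditions permit. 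Collecting the three pieces gives the $\calL_2$ rate $\sqrt{\lambda_k}+\sum_{j=k+1}^{K+1}t_{jn}^{-\alpha_{3j}}$, and the $\calL_\infty$ rate follows with the factor $\zeta_{2kn}$.

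\textbf{Main obstacle.} The hardest point is controlling the two non-standard error sources simultaneously while the sieve dimensions $t_{kn}$ grow: showing that the ill-posed $\wh\gamma$ enters only through the weak norm (so its slow $\calL_2$ behavior is irrelevant), and pinning down how the stochastic part $V_{k+1}$ of the inner regression error propagates through the nested projection rather than accumulating at its own rate. Both rely on a careful interplay among the representer conditions of \Cref{ass:supp_technique_mu}, the eigenvalue and approximation conditions of \Cref{ass:supp_sieve_mu}, and the weak-norm rate of \Cref{thm:weak_rate_delta}; the remaining burden is verifying the requisite empirical-process concentration, namely uniform convergence of each $\wh G_k$ and of the conditional-expectation operator estimates, consistently across the $K+1$ nested stages.
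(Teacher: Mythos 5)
Your overall architecture coincides with the paper's: backward induction from $\wh\mu_{K+1}$ down to $\wh\mu_1$, the coefficient decomposition $\wh\mu_k-\mu_k=\bbar{u}_k\trans(\wh{\bm\pi}_k-{\bm\pi}_k)+(\bbar{u}_k\trans{\bm\pi}_k-\mu_k)$, a Newey-style treatment of the true-weight score, representers from \Cref{ass:supp_technique_mu} for the $\wh\gamma$-induced terms, and a projection argument for the propagated response error. Two of your mechanisms, however, would not deliver the stated rates. First, for the weight-replacement term $\frac1n\sum_i I(A_i=a_{K+1})R_i(\wh\gamma_i-\gamma_i)\varepsilon_{K+1,i}\bbar{u}_{K+1,i}$, bounding each coordinate by Cauchy--Schwarz in $\langle\cdot,\cdot\rangle_w$ against the representer and invoking $\|\wh\gamma-\gamma\|_w=o_p(n^{-1/4})$ gives only $o_p(n^{-1/4})$ per coordinate, hence $o_p\big(\sqrt{t_{K+1,n}}\,n^{-1/4}\big)$ for the Euclidean norm; under the stated conditions (e.g.\ $\lambda_{K+1}=n^{-0.6}$, $t_{K+1,n}=n^{0.3}$) this is not $o_p(\sqrt{\lambda_{K+1}})$. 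The paper instead applies the Ai--Chen asymptotic-linearity expansion coordinatewise: each $\frac1n\sum_iW_{ij}(\wh\gamma)-\frac1n\sum_iW_{ij}(\gamma)$ equals a mean-zero sample average built from $h^*_{K+1,j}$ plus $o_p(n^{-1/2})$, the weak-norm rate being used only to kill the quadratic remainder, which yields $O_p(n^{-1/2})$ per coordinate and $O_p(\sqrt{t_{K+1,n}/n})$ overall. Your assumptions point at the right machinery, but the step as you describe it is the one that fails.

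Second, and more consequentially, your bound on the propagated stochastic error, $\|\wh G_k^{-1}\wh Hc_{k+1}\|\lesssim(c_{k+1}\trans G_{k+1}c_{k+1})^{1/2}=O_p(\sqrt{\lambda_{k+1}})$, is not dominated by $\sqrt{\lambda_k}$ under the theorem's hypotheses, which tie $t_{kn}$ only to $\lambda_k$ and never compare $\lambda_k$ with $\lambda_{k+1}$; you acknowledge this by taking "the smoothing levels comparable across stages," which is an extra assumption absent from the statement. The paper avoids it by substituting the explicit least-squares formula for $\wh{\bm\pi}_{K+1}-{\bm\pi}_{K+1}$ and pulling the cross-moment matrix through, so that the propagated term becomes $\bm\Upsilon\trans\frac1n\sum_iT_iR_i(1+\wh\gamma_i)\{Y_i-\mu_{K+1,n}(X_i,\boverbar{M}_{K,i})\}\bbar{u}_{K+1,i}$, where $\bbar{u}_{K+1}\trans\bm\Upsilon_j$ is the weighted $\calL_2$-projection of $u_{Kj}$ times a density ratio onto the span of $\bbar{u}_{K+1}$ --- a fixed function with bounded second moment. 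Each of the $t_{Kn}$ coordinates is then an $O_p(n^{-1/2})$ average plus approximation error, so the propagated stochastic contribution is $O_p(\sqrt{t_{Kn}/n})=o_p(\sqrt{\lambda_K})$ and only the approximation errors $t_{jn}^{-\alpha_{3j}}$, $j>k$, accumulate, exactly matching the stated rate. You would need to replace your Jensen/contraction step with this sharper projection identity (or add the cross-stage comparability condition) to reach the theorem as written.
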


\begin{proof}

We first calculate the convergence rate of $\wh\mu_{K+1}$.
For notational simplicity, throughout the proof of this theorem we use the abbreviation $\alpha_3$ and $t_{n}$ to denote $\alpha_{3,K+1}$ and $t_{K+1,n}$ when no confusion arises. Let $T\triangleq I(A=a_{K+1})$ and $T_i\triangleq I(A_i=a_{K+1})$. 
By \Cref{ass:supp_sieve_mu}(ii), there exists $\mu_{K+1,n}(x,\boverbar{m}_K)=\bbar{u}_{K+1}(x,\boverbar{m}_K)\trans {\bm\pi}_{K+1}$ with ${\bm\pi}_{K+1}\in\mR^{t_n}$ such that 
\beqr\label{eq:supp_def_pi_gamma}
\sup_{(x,\boverbar{m}_K)\in\calX\times\boverbar\calM_K} \Abs{\mu_{K+1}(x,\boverbar{m}_K) - \mu_{K+1,n}(x,\boverbar{m}_K) } = O(t_n^{-\alpha_3}).
\eeqr
Recall that $\wh\mu_{K+1}(x,\boverbar{m}_K)=\bbar{u}_{K+1}(x,\boverbar{m}_K)\trans\wh{\bm\pi}_{K+1}$ with \begin{small}
\beqrs 
\wh{\bm\pi}_{K+1} = \left\{\sum_{i=1}^{n} T_iR_i(1+\wh\gamma_i)\bbar{u}_{K+1}(X_i,\boverbar{M}_{K,i})\bbar{u}_{K+1}(X_i,\boverbar{M}_{K,i})\trans\right\}^{-1} \sum_{i=1}^{n} T_iR_i(1+\wh\gamma_i)Y_i\bbar{u}_{K+1}(X_i,\boverbar{M}_{K,i}),
\eeqrs
\end{small}
Then by \Cref{ass:supp_sieve_mu}(i)(iv),
\begin{align}
\label{eq:supp_pf_rate_gamma_1}
&\|\wh\mu_{K+1}-\mu_{K+1}\|_{\calL_2} \le \|\wh\mu_{K+1}-\mu_{K+1,n} \|_{\calL_2} + \|\mu_{K+1,n} -\mu_{K+1}\|_{\calL_2} \nonumber\\ 
&~~\le \|\wh{\bm\pi}_{K+1}-{\bm\pi}_{K+1}\| \mbox{Eig}_{\max}^{1/2}(E\{\bbar{u}_{K+1}(M,X^*)\bbar{u}_{K+1}(M,X^*)\trans\}) \nonumber\\ 
&\qquad + \sup_{(x,\boverbar{m}_K)\in\calX\times\boverbar\calM_K} \Abs{\mu_{K+1}(x,\boverbar{m}_K) - \mu_{K+1,n}(x,\boverbar{m}_K)} \nonumber\\
&~~\lesssim \|\wh{\bm\pi}_{K+1}-{\bm\pi}_{K+1}\| + O(t_n^{-\alpha_3}),
\end{align}
where $\mbox{Eig}_{\max}(\cdot)$ indicates the maximal eigenvalue of a matrix, and 
\begin{align}
\label{eq:supp_pf_rate_gamma_2}
&\|\wh\mu_{K+1}-\mu_{K+1}\|_\infty 
\le \|\wh\mu_{K+1}-\mu_{K+1,n} \|_\infty + \|\mu_{K+1,n} -\mu_{K+1}\|_\infty \nonumber\\ 
\le&~ \|\wh{\bm\pi}_{K+1}-{\bm\pi}_{K+1}\| \sup_{(x,\boverbar{m}_K)\in\calX\times\boverbar\calM_K} \|\bbar{u}_{K+1}(x,\boverbar{m}_K)\| \nonumber\\
&+ \sup_{(x,\boverbar{m}_K)\in\calX\times\boverbar\calM_K} \Abs{\mu_{K+1}(x,\boverbar{m}_K) - \mu_{K+1,n}(x,\boverbar{m}_K)} \nonumber\\
\lesssim&~ \|\wh{\bm\pi}_{K+1}-{\bm\pi}_{K+1}\|\zeta_{2(K+1)n} + O(t_n^{-\alpha_3}).
\end{align}
Combining \eqref{eq:supp_pf_rate_gamma_1}, \eqref{eq:supp_pf_rate_gamma_2} and the result
\beqr\label{eq:supp_pf_coeff_gamma_0}
\|\wh{\bm\pi}_{K+1}-{\bm\pi}_{K+1}\| = O_p(\sqrt{\lambda_{K+1}}).
\eeqr
that we will prove soon later yields the theorem. Now we prove \eqref{eq:supp_pf_coeff_gamma_0} holds under \Cref{ass:supp_technique_mu,ass:supp_dgp,ass:supp_sieve_mu}.
Following \cite{Newey1997}, by \Cref{ass:positivity} and \ref{ass:supp_sieve_mu}(i), and \Cref{thm:weak_rate_delta}, with probability approaching one 
\beqrs
\left\{\frac{1}{n} \sum_{i=1}^{n} T_iR_i(1+\wh\gamma_i) \bbar{u}_{K+1}(X_i,\boverbar{M}_{K,i})\bbar{u}_{K+1}(X_i,\boverbar{M}_{K,i})\trans\right\}^{-1}
\eeqrs
is invertible and 
\begin{subequations}
\begin{align}
&\|\wh{\bm\pi}_{K+1}-{\bm\pi}_{K+1} \| \nonumber\\ 
&~\lesssim \Big\| \frac{1}{n}\sum_{i=1}^{n} T_iR_i(1+\gamma_i) \bbar{u}_{K+1}(X_i,\boverbar{M}_{K,i})\left\{Y_i-\mu_{K+1,n}(X_i,\boverbar{M}_{K,i})\right\} \Big\| \label{eq:supp_pf_coeff_gamma_T1}\\ 
&\quad+ \Big\| \frac{1}{n}\sum_{i=1}^{n} T_iR_i(\wh\gamma_i-\gamma_i) \bbar{u}_{K+1}(X_i,\boverbar{M}_{K,i})\{Y_i-\mu_{K+1}(X_i,\boverbar{M}_{K,i})\} \Big\| \label{eq:supp_pf_coeff_gamma_T2}\\
&\quad+ \Big\| \frac{1}{n}\sum_{i=1}^{n} T_iR_i(\wh\gamma_i-\gamma_i) \bbar{u}_{K+1}(X_i,\boverbar{M}_{K,i})\{\mu_{K+1}(X_i,\boverbar{M}_{K,i})-\mu_{K+1,n}(X_i,\boverbar{M}_{K,i})\} \Big\|. \label{eq:supp_pf_coeff_gamma_T3}
\end{align} 
\end{subequations}
For term \eqref{eq:supp_pf_coeff_gamma_T1}, following \cite{Newey1997}, we obtain 
\beqr\label{eq:supp_pf_coeff_gamma_1}
&& \Big\| \frac{1}{n}\sum_{i=1}^{n} T_iR_i(1+\gamma_i) \bbar{u}_{K+1}(X_i,\boverbar{M}_{K,i})\left\{Y_i-\mu_{K+1,n}(X_i,\boverbar{M}_{K,i})\right\} \Big\| \nonumber\\
&\lesssim& \Big\| \frac{1}{n}\sum_{i=1}^{n} T_iR_i(1+\gamma_i) \bbar{u}_{K+1}(X_i,\boverbar{M}_{K,i})\left\{Y_i-\mu_{K+1}(X_i,\boverbar{M}_{K,i})\right\} \Big\| \nonumber\\ 
&&+ \sup_{(x,\boverbar{m}_K)\in\calX\times\boverbar\calM_K} \Abs{\mu_{K+1}(x,\boverbar{m}_K) - \mu_{K+1,n}(x,\boverbar{m}_K) } \nonumber\\
&=& O_P(\sqrt{t_n/n}+t_n^{-\alpha_3}).
\eeqr
For term \eqref{eq:supp_pf_coeff_gamma_T2}, we define 
\beqrs
W_{ij}(\wt\gamma) = T_iR_i\wt\gamma(X_i,A_i,\boverbar{M}_i,Y_i) u_j(X_i,\boverbar{M}_{K,i})\{Y_i-\mu_{K+1}(X_i,\boverbar{M}_{K,i})\}.
\eeqrs
Note that $n^{-1}\sum_{i=1}^{n}W_{ij}(\wh\gamma)$ is a plug-in estimator of smooth functionals where the nonparametric component $\wt\gamma$ depends on endogenous variables and is estimated by the sieve minimum distance. Under \Cref{ass:supp_dgp,ass:supp_sieve_delta_inter,ass:supp_sieve_delta_exter,ass:supp_technique_mu}, \cite{AiChen2007} showed that
\begin{align*}
&\frac{1}{n}\sum_{i=1}^{n}W_{ij}(\wh\gamma)-\frac{1}{n}\sum_{i=1}^{n}W_{ij}(\gamma) = -E\{1+R\varepsilon_{K+1} u_j(M,X)h^*_{K+1,j}(X,A,\boverbar{M}_K,Y)\}^{-1}\\ 
&\qquad \times\frac{1}{n}\sum_{i=1}^{n} E\{Rh^*_{K+1,j}\mid Z_i,A_i,\boverbar{M}_{K,i},Y_i\}\{R_i\gamma(X_i,A_i,\boverbar{M}_i,Y_i)-1+R\} +o_p(n^{-1/2}).
\end{align*}
It follows that 
\begin{align*}
|\eqref{eq:supp_pf_coeff_gamma_T2}|
=&~ \sqrt{\sum_{j=1}^{t_n}\left|\frac{1}{n}\sum_{i=1}^{n}W_{ij}(\wh\gamma)-\frac{1}{n}\sum_{i=1}^{n}W_{ij}(\gamma)\right|^2}
= O_p(\sqrt{t_n/n}),
\end{align*}
since $E(R\gamma-1+R\mid Z,A,\boverbar{M}_K,Y)=0$ almost surely.
For term \eqref{eq:supp_pf_coeff_gamma_T3}, we define 
\beqrs
V_{ij}(\wt\gamma) = T_iR_i\wt\gamma(X_i,A_i,\boverbar{M}_i,Y_i) u_j(X_i,\boverbar{M}_{K,i})\{\mu_{K+1}(X_i,\boverbar{M}_{K,i})-\mu_{K+1,n}(X_i,\boverbar{M}_{K,i})\},
\eeqrs
for $j=1,\ldots,t_n$, and $V_{i\cdot}(\wt\gamma)=(V_{i1}(\wt\gamma),\ldots,V_{it_n}(\wt\gamma))\trans$. Define the class of functions 
\begin{align*}
\calG_j=\{tr&u_j(x,\boverbar{m}_K)\{\mu_{K+1}(x,\boverbar{m}_K)-\mu_{K+1,n}(x,\boverbar{m}_K)\}\wt\gamma(x,a,\boverbar{m}_K,y): \wt\gamma\in\Gamma\}.
\end{align*}
Note that $\calG_j$ consists of functions that is linear and hence Lipschitz continuous in $\wt\gamma\in\Gamma$ under $\|\cdot\|_\infty$, then Theorem 2.7.11 of \cite{Vaart_ep_1996} implies that $N_{[]}(\epsilon,\calG_j,\|\cdot\|_\infty)\le N(\epsilon/C,\Gamma,\|\cdot\|_\infty)$ for some $C>0$. Thus, by \Cref{ass:supp_dgp} and then Theorem 2.5.6 and 2.7.1 of \cite{Vaart_ep_1996}, we have $\calG_j$ is a Donsker class.
It follows that 
\begin{align*} 
\frac{1}{n}\sum_{i=1}^{n} V_{ij}(\wt\gamma) 
- \frac{1}{n}\sum_{i=1}^{n} V_{ij}(\gamma) 
- E\{V_{ij}(\wt\gamma-\gamma)\} = o_p(n^{-1/2}),
\end{align*}
uniformly over $\wt\gamma\in\{\wt\gamma\in\Gamma:\|\wt\gamma-\gamma\|_\infty=o_p(1)\}$. It follows that 
\begin{align}\label{eq:supp_pf_coeff_gamma_T3.1}
&\left\|\frac{1}{n}\sum_{i=1}^{n} V_{i\cdot}(\wt\gamma) 
- \frac{1}{n}\sum_{i=1}^{n} V_{i\cdot}(\gamma) 
- E\{V_{i\cdot}(\wt\gamma-\gamma)\}\right\|^2 \nonumber\\ 
&\quad=~ \sum_{j=1}^{t_n} 
\left|\frac{1}{n}\sum_{i=1}^{n} V_{ij}(\wt\gamma) 
- \frac{1}{n}\sum_{i=1}^{n} V_{ij}(\gamma) 
- E\{V_{ij}(\wt\gamma-\gamma)\}\right|^2 = o_p(t_n/n).
\end{align}
uniformly over $\wt\gamma\in\{\wt\gamma\in\Gamma:\|\wt\gamma-\gamma\|_\infty=o_p(1)\}$. Note that 
\begin{align}\label{eq:supp_pf_coeff_gamma_T3.2}
&\left\|E\{V_{i\cdot}(\wt\gamma-\gamma)\}\right\|^2\nonumber\\
\le&~ \mbox{Eig}_{\max}^{-1}(E\{\bbar{u}_{K+1}(X,\boverbar{M}_K)\bbar{u}_{K+1}(X,\boverbar{M}_K)\trans\})
E\left\{TR(\wt\gamma-\gamma)(\mu_{K+1}-\mu_{K+1,n})\bbar{u}_{K+1}(X,\boverbar{M}_K)\right\}\trans\nonumber\\
&\qquad\times E\left\{\bbar{u}_{K+1}(X,\boverbar{M}_K)\bbar{u}_{K+1}(X,\boverbar{M}_K)\trans\right\}^{-1}E\left\{TR(\wt\gamma-\gamma)(\mu_{K+1}-\mu_{K+1,n})\bbar{u}_{K+1}(X,\boverbar{M}_K)\right\}\nonumber\\ 
\lesssim&~ E\left[\left\{TR(\wt\gamma-\gamma)(\mu_{K+1}-\mu_{K+1,n})\right\}^2\right] \nonumber\\ 
\lesssim&~ \sup_{(x,\boverbar{m}_K)\in\calX\times\boverbar\calM_K} \Abs{\mu_{K+1}(x,\boverbar{m}_K) -\mu_{K+1,n}(x,\boverbar{m}_K) }^2 \|\wt\gamma-\gamma\|_\infty^2 = o(t_n^{-2\alpha_3}).
\end{align}
uniformly over $\wt\gamma\in\{\wt\gamma\in\Gamma:\|\wt\gamma-\gamma\|_\infty=o_p(1)\}$, where the second inequality is because 
the product is the squared norm of $\calL_2$-projection of $TR(\wt\gamma-\gamma)(\mu_{K+1}-\mu_{K+1,n})$ on the space linearly spanned by $\bbar{u}_{K+1}(x,\boverbar{m}_K)$.
We denote $\Gamma_0 = \{\wt\gamma\in\Gamma:\|\wt\gamma-\gamma\|_\infty=o_p(1)\}$. By Chebyshev's inequality, \eqref{eq:supp_pf_coeff_gamma_T3.1} and \eqref{eq:supp_pf_coeff_gamma_T3.2} imply that
\begin{align*}
|\eqref{eq:supp_pf_coeff_gamma_T3}|^2 \le&~ \sup_{\wt\gamma\in\Gamma_0} \Big\| \frac{1}{n}\sum_{i=1}^{n} T_iR_i(\wt\gamma_i-\gamma_i) \bbar{u}_{K+1}(X_i,\boverbar{M}_{K,i})\{\mu_{K+1}(X_i,\boverbar{M}_{K,i})-\mu_{K+1,n}(X_i,\boverbar{M}_{K,i})\} \Big\|^2 \\ 
\le&~ \sup_{\wt\gamma\in\Gamma_0} \left\|\frac{1}{n}\sum_{i=1}^{n} V_{i\cdot}(\wt\gamma) 
- \frac{1}{n}\sum_{i=1}^{n} V_{i\cdot}(\gamma) 
- E\{V_{i\cdot}(\wt\gamma-\gamma)\}\right\|^2 
+ \sup_{\wt\gamma\in\Gamma_0} \left\|E\{V_{i\cdot}(\wt\gamma-\gamma)\}\right\|^2
\\
= &~ o_p(t_n/n+t_n^{-2\alpha_3}).
\end{align*}
Recalling that $\max\{t_n/n,t_n^{-\alpha_3}\}=o(\lambda_{K+1})$, 
then combining the results of \eqref{eq:supp_pf_coeff_gamma_T1}-\eqref{eq:supp_pf_coeff_gamma_T3} yields \eqref{eq:supp_pf_coeff_gamma_0}.

Next, we calculate the convergence rate of $\wh\mu_K$.
Let $S\triangleq I(A=a_K)$ and $S_i\triangleq I(A_i=a_K)$. 
By \Cref{ass:supp_sieve_mu}(ii), there exists $\mu_{Kn}(x,\boverbar{m}_{K-1})=\bbar{u}_K(x,\boverbar{m}_{K-1})\trans{\bm\pi}_K$ with ${\bm\pi}_K\in\mR^{t_{Kn}}$ such that 
\beqr\label{eq:supp_def_pi_eta}
\sup_{x\in \calX} \Abs{\mu_{K}(x,\boverbar{m}_{K-1}) - \mu_{Kn}(x,\boverbar{m}_{K-1}) } = O(t_{Kn}^{-\alpha_{3K}}).
\eeqr
Recall that $\wh\mu_K(x,\boverbar{m}_{K-1})=\bbar{u}_K(x,\boverbar{m}_{K-1})\trans\wh{\bm\pi}_K$ with 
\begin{small}
\beqrs 
&&\wh{\bm\pi}_K = \Big\{\sum_{i=1}^{n} S_iR_i(1+\wh\gamma_i)\bbar{u}_K(X_i,\boverbar{M}_{K-1,i})\bbar{u}_K(X_i,\boverbar{M}_{K-1,i})\trans\Big\}^{-1} \\ 
&&\qquad\qquad\qquad \times \sum_{i=1}^{n} S_iR_i(1+\wh\gamma_i)\wh\mu_{K+1}(X_i,\boverbar{M}_{K,i})\bbar{u}_K(X_i,\boverbar{M}_{K-1,i}).
\eeqrs
\end{small}
Similar to part (i), the the convergence rate of $\wh\mu_K$ in \Cref{thm:convergence_rate_gamma_eta} is obtained following the conclusion
\beqr\label{eq:supp_pf_coeff_eta_0}
\|\wh{\bm\pi}_K-{\bm\pi}_K\| = O_p\big(\sqrt{\lambda_K}+t_{n}^{-\alpha_{3}}\big).
\eeqr
Now we prove \eqref{eq:supp_pf_coeff_eta_0} holds under \Cref{ass:supp_technique_mu,ass:supp_dgp,ass:supp_sieve_mu}. Suppose $\|\wh{\bm\pi}_{1,K+1}-{\bm\pi}_{1,K+1}\|$ satisfies \eqref{eq:supp_pf_coeff_eta_0}. Following \cite{Newey1997}, by \Cref{ass:positivity,ass:supp_sieve_mu}(i), with probability approaching one 
\beqrs
\Big\{\frac{1}{n} \sum_{i=1}^{n} S_iR_i(1+\wh\gamma_i)\bbar{u}_K(X_i,\boverbar{M}_{K-1,i})\bbar{u}_K(X_i,\boverbar{M}_{K-1,i})\trans\Big\}^{-1}
\eeqrs
is invertible and 
\begin{align}
&\|\wh{\bm\pi}_K-{\bm\pi}_K \| \nonumber\\
\lesssim&~ \Big\| \frac{1}{n}\sum_{i=1}^{n} S_iR_i(1+\gamma_i) \bbar{u}_K(X_i,\boverbar{M}_{K-1,i})\left\{\mu_{K+1}(X_i,\boverbar{M}_{K,i})-\mu_{Kn}(X_i,\boverbar{M}_{K-1,i})\right\} \Big\| \nonumber\\ 
&+ \Big\| \frac{1}{n}\sum_{i=1}^{n} S_iR_i(\wh\gamma_i-\gamma_i) \bbar{u}_K(X_i,\boverbar{M}_{K-1,i})\{\mu_{K+1}(X_i,\boverbar{M}_{K,i})-\mu_{K}(X_i,\boverbar{M}_{K-1,i})\} \Big\| \nonumber\\ 
&+ \Big\| \frac{1}{n}\sum_{i=1}^{n} S_iR_i(\wh\gamma_i-\gamma_i) \bbar{u}_K(X_i,\boverbar{M}_{K-1,i})\{\mu_{K}(X_i,\boverbar{M}_{K-1,i})-\mu_{Kn}(X_i,\boverbar{M}_{K-1,i})\}\Big\| \nonumber\\ 
&+ \Big\| \frac{1}{n}\sum_{i=1}^{n} S_iR_i(1+\wh\gamma_i) \bbar{u}_K(X_i,\boverbar{M}_{K-1,i})\left\{\wh\mu_{K+1}(X_i,\boverbar{M}_{K,i})-\mu_{K+1}(X_i,\boverbar{M}_{K,i})\right\}\Big\|.\nonumber\\ \label{eq:supp_pf_coeff_eta_1}
\end{align}
Similar to \eqref{eq:supp_pf_coeff_gamma_T1}, we obtain 
\begin{align}
\label{eq:supp_pf_coeff_eta_2}
&\Big\| \frac{1}{n}\sum_{i=1}^{n} S_iR_i(1+\gamma_i) \bbar{u}_K(X_i,\boverbar{M}_{K-1,i})\left\{\mu_{K+1}(X_i,\boverbar{M}_{K,i})-\mu_{Kn}(X_i,\boverbar{M}_{K-1,i})\right\} \Big\| \nonumber\\ 
&\quad= O_P(\sqrt{t_{Kn}/n}+t_{Kn}^{-\alpha_{3K}}).
\end{align}
Similar to \eqref{eq:supp_pf_coeff_gamma_T2}, we obtain 
\begin{align}
\label{eq:supp_pf_coeff_eta_3}
&\Big\| \frac{1}{n}\sum_{i=1}^{n} S_iR_i(\wh\gamma_i-\gamma_i) \bbar{u}_K(X_i,\boverbar{M}_{K-1,i})\{\mu_{K+1}(X_i,\boverbar{M}_{K,i})-\mu_{K}(X_i,\boverbar{M}_{K-1,i})\}\Big\| \nonumber\\ 
&\quad =O_p(\sqrt{t_{Kn}/n}),
\end{align}
Similar to \eqref{eq:supp_pf_coeff_gamma_T3}, we obtain 
\begin{align}
\label{eq:supp_pf_coeff_eta_4}
&\Big\| \frac{1}{n}\sum_{i=1}^{n} S_iR_i(\wh\gamma_i-\gamma_i) \bbar{u}_K(X_i,\boverbar{M}_{K-1,i})\{\mu_{K}(X_i,\boverbar{M}_{K-1,i})-\mu_{Kn}(X_i,\boverbar{M}_{K-1,i})\}\Big\|\qquad \nonumber\\ 
&\quad=o_p(\sqrt{t_{Kn}/n}+t_{Kn}^{-\alpha_{3K}}).
\end{align}
We consider the last term of \eqref{eq:supp_pf_coeff_eta_1}. 
For any $\wt\gamma \in \{\wt\gamma\in\Gamma:\|\wt\gamma-\gamma\|_\infty= o_p(1)\}$, we have the decomposition
\begin{align}
&\Big\| \frac{1}{n}\sum_{i=1}^{n} S_iR_i(1+\wt\gamma) \bbar{u}_K(X_i,\boverbar{M}_{K-1,i})\left\{\wh\mu_{K+1}(X_i,\boverbar{M}_{K,i})-\mu_{K+1}(X_i,\boverbar{M}_{K,i})\right\}\Big\| \nonumber\\
\lesssim&~ \sup_{(x,\boverbar{m}_K)\in\calX\times\boverbar\calM_K} \Abs{\mu_{K+1}(x,\boverbar{m}_K) - \mu_{K+1,n}(x,\boverbar{m}_{K}) } \nonumber\\
&+\Big\| \frac{1}{n}\sum_{i=1}^{n} S_iR_i(1+\wt\gamma) \bbar{u}_K(X_i,\boverbar{M}_{K-1,i})\left\{\wh\mu_{K+1}(X_i,\boverbar{M}_{K,i})-\mu_{K+1,n}(X_i,\boverbar{M}_{K,i})\right\}\Big\|. \label{eq:supp_coeff_eta_T4_pf1.2}
\end{align}
Note that
\begin{align}
\label{eq:supp_coeff_eta_T4_pf2}
& \Big\| \frac{1}{n}\sum_{i=1}^{n} S_iR_i(1+\wt\gamma) \bbar{u}_K(X_i,\boverbar{M}_{K-1,i})\left\{\wh\mu_{K+1}(X_i,\boverbar{M}_{K,i})-\mu_{K+1,n}(X_i,\boverbar{M}_{K,i})\right\} \nonumber\\ 
&\qquad - E [SR(1+\wt\gamma) \bbar{u}_K(X,\boverbar{M}_{K-1})\bbar{u}_{K+1}(X,\boverbar{M}_K)\trans] (\wh{\bm\pi}_{K+1}-{\bm\pi}_{K+1})\Big\|^2 \nonumber\\ 
\le&~ \sum_{j=1}^{t_{Kn}} \Big\|\frac{1}{n}\sum_{i=1}^{n} S_iR_i(1+\wt\gamma) u_{Kj}(X_i,\boverbar{M}_{K-1,i}) \bbar{u}_{K+1}(X_i,\boverbar{M}_{K,i}) \nonumber\\ 
&\qquad - E [SR(1+\wt\gamma) u_{Kj}(X,\boverbar{M}_{K-1})\bbar{u}_{K+1}(X,\boverbar{M}_K)\trans] \Big\|^2 
\cdot \Big\|(\wh{\bm\pi}_{K+1}-{\bm\pi}_{K+1})\Big\|^2 \nonumber\\ 
=&~ O(t_{Kn})\cdot O_p({t_n/n})\cdot O_p\big(\lambda_{K+1}\big) = o_p({t_{Kn}/n}),
\end{align}
uniformly over $\wt\gamma \in \{\wt\gamma\in\Gamma:\|\wt\gamma-\gamma\|_\infty= o_p(1)\}$ by \eqref{eq:supp_pf_coeff_gamma_0} because $\calG_{ij}\equiv \{sr(1+\wt\gamma) u_{K,i}(x,\boverbar{m}_{K-1})u_j(x,\boverbar{m}_K):\wt\gamma\in\Gamma\}$ for $i=1,\ldots, t_{Kn}$ and $j=1,\ldots, t_n$ are DonsKer by \Cref{ass:supp_dgp} and Theorem 2.5.6, 2.7.1 and 2.7.11 of \cite{Vaart_ep_1996}.
Let
\begin{align*}
\bm\Upsilon= E\{& TR(1+\gamma) \bbar{u}_{K+1}(X,\boverbar{M}_K)\bbar{u}_{K+1}(X,\boverbar{M}_K)\trans\}^{-1} \\ 
& \times E\{SR(1+\gamma) \bbar{u}_{K+1}(X,\boverbar{M}_K)\bbar{u}_K(X,\boverbar{M}_{K-1})\trans\}\in\mR^{t_n\times t_{Kn}},
\end{align*}
and $\bm\Upsilon_j$ is the $j$-th column of $\bm\Upsilon$. By \Cref{ass:positivity,ass:supp_sieve_mu}(i) and \Cref{thm:weak_rate_delta}, we have 
\begin{align}
& \Big\| E[SR(1+\wt\gamma) \bbar{u}_K(X,\boverbar{M}_{K-1})\bbar{u}_{K+1}(X,\boverbar{M}_K)\trans] (\wh{\bm\pi}_{K+1}-{\bm\pi}_{K+1}) \Big\|^2 \nonumber\\ 
\lesssim&~ \Big\| \bm\Upsilon\trans \frac{1}{n}\sum_{i=1}^{n} T_iR_i(1+\wh\gamma_i) (Y_i-\mu_{K+1,n}(X_i,\boverbar{M}_{K,i}))\bbar{u}_{K+1}(X_i,\boverbar{M}_{K,i}) \Big\|^2 \nonumber\\ 
=&~ \sum_{j=1}^{t_{Kn}} \Abs{ \frac{1}{n}\sum_{i=1}^{n} T_iR_i(\wh\gamma_i-\gamma_i) (Y_i-\mu_{K+1}(X_i,\boverbar{M}_{K,i}))\bbar{u}_{K+1}(X_i,\boverbar{M}_{K,i})\trans \bm\Upsilon_j}^2 \nonumber\\ 
&+ \sum_{j=1}^{t_{Kn}} \Abs{ \frac{1}{n}\sum_{i=1}^{n} T_iR_i(\wh\gamma_i-\gamma_i) (\mu_{K+1}(X_i,\boverbar{M}_{K,i})-\mu_{K+1,n}(X_i,\boverbar{M}_{K,i}))\bbar{u}_{K+1}(X_i,\boverbar{M}_{K,i})\trans \bm\Upsilon_j}^2 \nonumber\\
&+ \sum_{j=1}^{t_{Kn}} \Abs{ \frac{1}{n}\sum_{i=1}^{n} T_iR_i(1+\gamma_i) (Y_i-\mu_{K+1}(X_i,\boverbar{M}_{K,i}))\bbar{u}_{K+1}(X_i,\boverbar{M}_{K,i})\trans \bm\Upsilon_j}^2 \nonumber\\ 
&+ \Big\| \bm\Upsilon\trans \frac{1}{n}\sum_{i=1}^{n} T_iR_i(1+\gamma_i) (\mu_{K+1}(X_i,\boverbar{M}_{K,i})-\mu_{K+1,n}(X_i,\boverbar{M}_{K,i}))\bbar{u}_{K+1}(X_i,\boverbar{M}_{K,i}) \Big\|^2, \label{eq:supp_coeff_eta_T4_pf3.0}
\end{align}
uniformly over $\wt\gamma\in \{\wt\gamma\in\Gamma:\|\wt\gamma-\gamma\|_\infty= o_p(1)\}$. Notice that $\bbar{u}_{K+1}(X_i,\boverbar{M}_{K,i})\trans \bm\Upsilon_j$ is the weighted $\calL_2$ projection of $u_{Kj}(x,\boverbar{m}_{k-1})f_{A\mid X,\boverbar{M}_K}(a_K\mid x,\boverbar{m}_K)/f_{A\mid X,\boverbar{M}_K}(a_{K+1}\mid x,\boverbar{m}_K)$ on the linear space spanned by $\bbar{u}_{K+1}(x,\boverbar{m}_K)$ and thus has finite second moment by \Cref{ass:positivity} and \ref{ass:supp_sieve_mu}(i). Then by \Cref{ass:supp_dgp}, similar to \eqref{eq:supp_pf_coeff_gamma_T2}, we obtain 
\begin{align}\label{eq:supp_coeff_eta_T4_pf3.1}
\sum_{j=1}^{t_{Kn}} \Abs{ \frac{1}{n}\sum_{i=1}^{n} T_iR_i(\wh\gamma_i-\gamma_i) (Y_i-\mu_{K+1}(X_i,\boverbar{M}_{K,i}))\bbar{u}_{K+1}(X_i,\boverbar{M}_{K,i})\trans \bm\Upsilon_j}^2 =O_p(t_{Kn}/n).
\end{align}
Similar to \eqref{eq:supp_pf_coeff_gamma_T3}, we obtain 
\begin{align}\label{eq:supp_coeff_eta_T4_pf3.1.1}
&\sum_{j=1}^{t_{Kn}} \Abs{ \frac{1}{n}\sum_{i=1}^{n} T_iR_i(\wh\gamma_i-\gamma_i) (\mu_{K+1}(X_i,\boverbar{M}_{K,i})-\mu_{K+1,n}(X_i,\boverbar{M}_{K,i}))\bbar{u}_{K+1}(X_i,\boverbar{M}_{K,i})\trans \bm\Upsilon_j}^2\quad\nonumber\\ 
&\quad=o_p(t_{Kn}/n+t_n^{-2\alpha_3}).
\end{align}
We also have
\begin{align}\label{eq:supp_coeff_eta_T4_pf3.2}
\sum_{j=1}^{t_{Kn}} \Abs{ \frac{1}{n}\sum_{i=1}^{n} T_iR_i(1+\gamma_i) (Y_i-\mu_{K+1}(X_i,\boverbar{M}_{K,i}))\bbar{u}_{K+1}(X_i,\boverbar{M}_{K,i})\trans \bm\Upsilon_j}^2 =O_p(t_{Kn}/n),
\end{align}
because $E[I(A=a_{K+1})R(1+\gamma) (Y-\mu_{K+1}(X,\boverbar{M}_{K}))\mid X,\boverbar{M}_{K}]=0$. By \Cref{ass:positivity}, \Cref{ass:supp_sieve_mu}(i),
\begin{align}\label{eq:supp_coeff_eta_T4_pf3.3}
&\Big\| \bm\Upsilon\trans \frac{1}{n}\sum_{i=1}^{n} T_iR_i(1+\gamma_i) (\mu_{K+1}(X_i,\boverbar{M}_{K,i})-\mu_{K+1,n}(X_i,\boverbar{M}_{K,i}))\bbar{u}_{K+1}(X_i,\boverbar{M}_{K,i}) \Big\|^2 \nonumber\\ 
&\qquad\lesssim \sup_{(x,\boverbar{m}_k)\in\calX\times\boverbar\calM_K} \Abs{\mu_{K+1}(x,\boverbar{m}_K) - \mu_{K+1,n}(x,\boverbar{m}_K) }^2 =O_p(t_n^{-2\alpha_3}),
\end{align}
Combining \eqref{eq:supp_coeff_eta_T4_pf1.2}--\eqref{eq:supp_coeff_eta_T4_pf3.3} yields that
\begin{align*} 
&\Big\| \frac{1}{n}\sum_{i=1}^{n} S_iR_i(1+\wt\gamma_i) \bbar{u}_K(X_i,\boverbar{M}_{K-1,i})\left\{\wh\mu_{K+1}(X_i,\boverbar{M}_{K,i})-\mu_{K+1}(X_i,\boverbar{M}_{K,i})\right\}\Big\|\qquad\nonumber\\ 
&\quad = O_p\big(\sqrt{t_{Kn}/n}+t_n^{-\alpha_3}\big).
\end{align*}
Recalling that $\max\{\sqrt{t_{Kn}/n},t_{Kn}^{-\alpha_{3K}}\}=o(\lambda_K)$, then \eqref{eq:supp_pf_coeff_eta_0} follows, and we obtain 
\begin{align*}
\|\wh\mu_K-\mu_K\|_{\calL_2} = O_p\big(\sqrt{\lambda_K}+t_n^{-\alpha_3}\big),\quad\mbox{and}
\|\wh\mu_K-\mu_K\|_\infty = O_p\Big\{\zeta_{2Kn}\big(\sqrt{\lambda_K}+t_n^{-\alpha_3}\big)\Big\}.
\end{align*}
For $k\in\{1,\ldots,K-1\}$, we can obtain 
\beqr\label{eq:supp_pf_coeff_eta_01}
\|\wh{\bm\pi}_k-{\bm\pi}_k\| = O_p\big(\sqrt{\lambda_k}+\textstyle{\sum_{j=k+1}^{K+1}}t_{jn}^{-\alpha_{3j}}\big).
\eeqr
iteratively following very similar procedure of calculating $\|\wh{\bm\pi}_K-{\bm\pi}_K\|$. It follows that
\begin{align*}
\|\wh\mu_k-\mu_k\|_{\calL_2} = O_p\big(\sqrt{\lambda_k}+\textstyle{\sum_{j=k+1}^{K+1}}t_{jn}^{-\alpha_{3j}}\big),\quad
\|\wh\mu_k-\mu_k\|_\infty = O_p\Big\{\zeta_{2kn}\big(\sqrt{\lambda_k}+\textstyle{\sum_{j=k+1}^{K+1}}t_{jn}^{-\alpha_{3j}}\big)\Big\}.
\end{align*}
Then we complete the proof.
\end{proof}

\newpage
\section{Asymptotic normality}\label{sec:app_pf_thm_asy_normal}
\subsection{Proof of Theorem \ref{thm:asy_normal}.}

\begin{proof}
Let $0<\epsilon_n=o(n^{-1/2})$, and $\varrho_0$ satisfies \Cref{ass:representer}. Let $\varrho_n=\Pi_n\varrho_0\in\Gamma_n$ with the operator $\Pi_n$ defined in \Cref{ass:supp_sieve_delta_inter}(ii). By \Cref{ass:supp_sieve_delta_inter}(i), $\wh\gamma \pm \epsilon_n\varrho_n\in\Gamma_n$. By the definition of $\wh\gamma$, it follows that
\begin{align}
\label{eq:app_delta_1}
0 \le&~ \frac{1}{2n}\sum_{i=1}^n [\wh{E}\{R(\wh\gamma \pm \epsilon_n\varrho_n)-1+R\mid Z_i,A_i,\boverbar{M}_{K,i},Y_i\}]^2 \nonumber\\ 
& ~ - \frac{1}{2n}\sum_{i=1}^n \big\{\wh{E}(R\wh\gamma-1+R\mid Z_i,A_i,\boverbar{M}_{K,i},Y_i)\big\}^2 \nonumber\\ 
=&~ \frac{1}{2n}\sum_{i=1}^n \big\{\wh{E}(R\wh\gamma-1+R\mid Z_i,A_i,\boverbar{M}_{K,i},Y_i) \pm \epsilon_n\wh{E}(R\varrho_n\mid Z_i,A_i,\boverbar{M}_{K,i},Y_i)\big\}^2 \nonumber\\ 
&~ - \frac{1}{2n}\sum_{i=1}^n \big\{\wh{E}(R\wh\gamma-1+R\mid Z_i,A_i,\boverbar{M}_{K,i},Y_i)\big\}^2 \nonumber\\ 
=&~ \frac{\pm\epsilon_n}{n}\sum_{i=1}^n \wh{E}(R\varrho_n\mid Z_i,A_i,\boverbar{M}_{K,i},Y_i) \wh{E}(R\wh\gamma-1+R\mid Z_i,A_i,\boverbar{M}_{K,i},Y_i) \nonumber\\ 
& ~ + \frac{\epsilon_n^2}{2n}\sum_{i=1}^n\big\{\wh{E}(R\varrho_n\mid Z_i,A_i,\boverbar{M}_{K,i},Y_i)\big\}^2,
\end{align}
for any $0<\epsilon_n=o(n^{-1/2})$. By \cref{lemma:app_delta_exter_bound} and \Cref{ass:supp_dgp}(i), \ref{ass:supp_sieve_delta_inter} and \ref{ass:supp_sieve_delta_exter}(iii), we have 
\beqr\label{eq:app_delta_2}
\frac{\epsilon_n^2}{2n}\sum_{i=1}^n\big\{\wh{E}(R\varrho_n\mid Z_i,A_i,\boverbar{M}_{K,i},Y_i)\big\}^2 = O_p(\epsilon_n^2).
\eeqr
Combining \eqref{eq:app_delta_1} and \eqref{eq:app_delta_2} yields
\beqr\label{eq:app_delta_3}
\frac{1}{n}\sum_{i=1}^n \wh{E}(R\varrho_n\mid Z_i,A_i,\boverbar{M}_{K,i},Y_i) \wh{E}(R\wh\gamma-1+R\mid Z_i,A_i,\boverbar{M}_{K,i},Y_i) = o_p(n^{-1/2}).
\eeqr
Then by \Cref{lemma:app_delta,lemma:app_rho}, \Cref{thm:weak_rate_delta}, \Cref{ass:supp_sieve_strengthened_rate}(i), and the Cauchy Schwarz inequality, \eqref{eq:app_delta_3} implies that 
\beqr\label{eq:app_delta_4}
&& \frac{1}{n}\sum_{i=1}^n E(R\varrho_0\mid Z_i,A_i,\boverbar{M}_{K,i},Y_i) \wh{E}(R\wh\gamma-1+R\mid Z_i,A_i,\boverbar{M}_{K,i},Y_i) \nonumber\\ 
&=& \frac{1}{n}\sum_{i=1}^n \big\{E(R\varrho_0\mid Z_i,A_i,\boverbar{M}_{K,i},Y_i) - \wh{E}(R\varrho_n\mid Z_i,A_i,\boverbar{M}_{K,i},Y_i)\big\} \wh{E}(R\wh\gamma-1+R\mid Z_i,A_i,\boverbar{M}_{K,i},Y_i) \nonumber\\
&&+ \frac{1}{n}\sum_{i=1}^n \wh{E}(R\varrho_n\mid Z_i,A_i,\boverbar{M}_{K,i},Y_i) \wh{E}(R\wh\gamma-1+R\mid Z_i,A_i,\boverbar{M}_{K,i},Y_i) = o_p(n^{-1/2}).
\eeqr
We denote $\wh{E}^{(d)}(R\varrho_0\mid Z_i,A_i,\boverbar{M}_{K,i},Y_i)\triangleq \wh{E}\big\{E(R\varrho_0\mid Z,A,\boverbar{M}_K,Y)\mid Z_i,A_i,\boverbar{M}_{K,i},Y_i\big\}$. By exchanging summation, \eqref{eq:app_delta_4} can be written as 
\begin{align} 
\label{eq:app_delta_5}
&& \frac{1}{n}\sum_{i=1}^n \wh{E}^{(d)}(R\varrho_0\mid Z_i,A_i,\boverbar{M}_{K,i},Y_i) \big\{R_i\wh\gamma(X_i,A_i,\boverbar{M}_{K,i},Y_i)-1+R_i\big\} = o_p(n^{-1/2}).
\end{align}
Then by \Cref{lemma:app_rho_hat}, \Cref{thm:weak_rate_delta} and \Cref{ass:supp_sieve_strengthened_rate}(i), \eqref{eq:app_delta_5} implies that 
\begin{small}
\begin{align}
\label{eq:app_delta_0}
& \frac{1}{n}\sum_{i=1}^n E(R\varrho_0\mid Z_i,A_i,\boverbar{M}_{K,i},Y_i) \{R_i\wh\gamma(X_i,A_i,\boverbar{M}_{K,i},Y_i)-1+R_i\} \nonumber\\ 
=&~ \frac{1}{n}\sum_{i=1}^n \big\{E(R\varrho_0\mid Z_i,A_i,\boverbar{M}_{K,i},Y_i) - \wh{E}^{(d)}(R\varrho_0\mid Z_i,A_i,\boverbar{M}_{K,i},Y_i)\big\} \big\{R_i\wh\gamma(X_i,A_i,\boverbar{M}_{K,i},Y_i)-1+R_i\big\} \nonumber\\ 
&+ \frac{1}{n}\sum_{i=1}^n \wh{E}^{(d)}(R\varrho_0\mid Z_i,A_i,\boverbar{M}_{K,i},Y_i) \big\{R_i\wh\gamma(X_i,A_i,\boverbar{M}_{K,i},Y_i)-1+R_i\big\} 
= o_p(n^{-1/2}).
\end{align}
\end{small}

Recall the definition of $\omega_1(x)=f_{A\mid X}(a_{1}\mid x)^{-1}$.
By \Cref{ass:supp_sieve_mu}(iii), there exists ${\bm\pi}_{21}\in\mR^{t_{1n}}$ such that 
\beqr \label{eq:app_eta_1}
\sup_{x\in\calX} \Abs{\omega_1(x)-\bbar{u}_1(x)\trans{\bm\pi}_{21}} = O(t_{1n}^{-\alpha_4}).
\eeqr
Recall that 
\begin{align*}
& \wh\mu_1(x) = \sum_{i=1}^n I(A_i=a_1)R_i\big\{1+\wh\gamma(X_i,A_i,\boverbar{M}_{K,i},Y_i)\big\} \wh\mu_2(X_i,M_{1i})\bbar{u}_1(X_i)\trans \\ 
&\quad \times\Big\{\sum_{i=1}^n I(A_i=a_1)R_i\big\{1+\wh\gamma(X_i,A_i,\boverbar{M}_{K,i},Y_i)\big\} \bbar{u}_1(X_i)\bbar{u}_1(X_i)\trans \Big\}^{-1}\bbar{u}_1(x).
\end{align*}
It follows that
\begin{align*}
&\frac{1}{n}\sum_{i=1}^n I(A_i=a_1)R_i \big\{1+\wh\gamma(X_i,A_i,\boverbar{M}_{K,i},Y_i)\big\}\wh\mu_1(X_i) \bbar{u}_1(X_i) \\
=&~ \frac{1}{n}\sum_{i=1}^n I(A_i=a_1)R_i \big\{1+\wh\gamma(X_i,A_i,\boverbar{M}_{K,i},Y_i)\big\}\bbar{u}_1(X_i)\bbar{u}_1(X_i)\trans \\
&\quad \times\Big[\sum_{i=1}^n I(A_i=a_1)R_i\big\{1+\wh\gamma(X_i,A_i,\boverbar{M}_{K,i},Y_i)\big\} \bbar{u}_1(X_i)\bbar{u}_1(X_i)\trans \Big]^{-1} \\ 
&\quad \times \sum_{i=1}^n I(A_i=a_1)R_i\big\{1+\wh\gamma(X_i,A_i,\boverbar{M}_{K,i},Y_i)\big\} \wh\mu_2(X_i,M_{1i})\bbar{u}_1(X_i) \\
=&~ \frac{1}{n} \sum_{i=1}^n I(A_i=a_1)R_i\big\{1+\wh\gamma(X_i,A_i,\boverbar{M}_{K,i},Y_i)\big\} \wh\mu_2(X_i,M_{1i})\bbar{u}_1(X_i).
\end{align*}
That is 
\begin{align}\label{eq:app_eta_1.1}
\frac{1}{n}\sum_{i=1}^n I(A_i=a_1)R_i \big\{1+\wh\gamma(X_i,A_i,\boverbar{M}_{K,i},Y_i)\big\}\big\{\wh\mu_2(X_i,M_{1i})-\wh\mu_1(X_i)\big\}\bbar{u}_1(X_i)=0.
\end{align}
By \Cref{ass:supp_sieve_strengthened_rate}(ii), \Cref{thm:weak_rate_delta,thm:convergence_rate_gamma_eta}, \eqref{eq:app_eta_1} and \eqref{eq:app_eta_1.1} implies
\begin{align}
\label{eq:app_omega_1}
&\frac{1}{n}\sum_{i=1}^n I(A_i=a_1)R_i \big\{1+\wh\gamma(X_i,A_i,\boverbar{M}_{K,i},Y_i)\big\} \omega_1(X_i) \big\{\wh\mu_2(X_i,M_{1i})-\wh\mu_1(X_i)\big\} \nonumber\\
=&~ \frac{1}{n}\sum_{i=1}^n I(A_i=a_1)\big\{1+\wh\gamma(X_i,A_i,\boverbar{M}_{K,i},Y_i)\big\} \{\omega_1(X_i)-\bbar{u}_1(X_i)\trans{\bm\pi}_{21}\}\big\{\wh\mu_2(X_i,M_{1i})-\wh\mu_1(X_i)\big\} \nonumber \\ 
&+ \frac{1}{n}\sum_{i=1}^n I(A_i=a_1)\big\{1+\wh\gamma(X_i,A_i,\boverbar{M}_{K,i},Y_i)\big\}\big\{\wh\mu_2(X_i,M_{1i})-\wh\mu_1(X_i)\big\}\bbar{u}_1(X_i)\trans{\bm\pi}_{21} \nonumber \\
\lesssim&~ \sup_{x\in\calX} \Abs{\omega_1(x)-\bbar{u}_1(x)\trans{\bm\pi}_{21}} = O_p(t_{1n}^{-\alpha_4})=o_p(n^{-1/2}). 
\end{align}
Similarly we can obtain
\begin{align}
\label{eq:app_omega_k}
&\frac{1}{n}\sum_{i=1}^n I(A_i=a_k) R_i \big\{1+\wh\gamma(X_i,A_i,\boverbar{M}_{K,i},Y_i)\big\}\nonumber\\
&\qquad \times \Big\{\prod_{j=1}^k \omega_j(X_i,\boverbar{M}_{j-1,i})\Big\}
\Big\{\wh\mu_{k+1}(X_i,\boverbar{M}_k)-\wh\mu_k(X_i,\boverbar{M}_{k-1,i})\Big\} =o_p(n^{-1/2}),
\end{align}
for $k\in\{2,\ldots,K\}$, and 
\begin{align}
\label{eq:app_omega_K+1}
&\frac{1}{n}\sum_{i=1}^n I(A_i=a_{K+1}) R_i \big\{1+\wh\gamma(X_i,A_i,\boverbar{M}_{K,i},Y_i)\big\} \nonumber\\
&\qquad \times
\Big\{\prod_{j=1}^{K+1} \omega_j(X_i,\boverbar{M}_{j-1,i})\Big\}
\Big\{Y_i-\wh\mu_{K+1}(X_i,\boverbar{M}_{K,i})\Big\} =o_p(n^{-1/2}).
\end{align}

Let $\mu=(\mu_1,\ldots,\mu_{K+1})$ and define
\begin{align}
\label{eq:rho}
& \varpi(R,Z,X,A,\boverbar{M}_{K},Y;\gamma,\mu) = R\big\{1+\gamma(X,A,\boverbar{M}_{K},Y)\big\}
\nonumber \\ 
&\qquad \times\Bigg[ \sum_{k=1}^{K}I(A=a_k) \Big\{\prod_{j=1}^k \omega_j(X,\boverbar{M}_{j-1})\Big\}
\Big\{\mu_{k+1}(X,\boverbar{M}_k)-\mu_k(X,\boverbar{M}_{k-1})\Big\} \nonumber\\ 
&\qquad + I(A=a_{K+1}) \Big\{\prod_{j=1}^{K+1} \omega_j(X,\boverbar{M}_{j-1})\Big\}
\Big\{Y-\mu_{{K+1},\bunderline{a}_{K+1}}(X,\boverbar{M}_K)\Big\}\Bigg] \nonumber \\ 
&\qquad - E\{R\varrho_0(X,A,\boverbar{M}_K,Y)\mid Z,A,\boverbar{M}_K,Y\}\{R\gamma(X,A,\boverbar{M}_K,Y)-1+R\}.
\end{align}
Then combining \eqref{eq:app_delta_0}, \eqref{eq:app_omega_1}, \eqref{eq:app_omega_k} and \eqref{eq:app_omega_K+1} yields that
\beqr\label{eq:supp_theta_3}
\frac{1}{n}\sum_{i=1}^{n} \varpi(R_i,Z_i,X_i,A_i,\boverbar{M}_{K,i},Y_i;\wh\gamma,\wh\mu) = o_p(n^{-1/2}).
\eeqr

Finally, recalling that 
$$\wh\psi=\frac{1}{n}\sum_{i=1}^{n}\left\{R_i\wh\mu_1(X_i) + R_i\wh\gamma(X_i,A_i,\boverbar{M}_{K,i},Y_i)\wh\mu_1(X_i)\right\},$$
we obtain
\begin{align}\label{eq:supp_theta_0.5}
\wh\psi-\psi 
&~= \frac{1}{n}\sum_{i=1}^{n}\left\{R_i\wh\mu_1(X_i) + R_i\wh\gamma(X_i,A_i,\boverbar{M}_{K,i},Y_i)\wh\mu_1(X_i)\right\} - \psi \nonumber\\ 
&~= \frac{1}{n}\sum_{i=1}^{n}\left\{R_i\mu_1(X_i) + R_i\gamma(X_i,A_i,\boverbar{M}_{K,i},Y_i)\mu_1(X_i)\right\} -\psi \nonumber\\ 
&~\quad + \frac{1}{n}\sum_{i=1}^{n} R_i\big\{\wh\mu_1(X_i)-\mu_1(X_i)\big\} \nonumber\\ 
&~\quad + \frac{1}{n}\sum_{i=1}^{n} R_i\big\{\wh\gamma(X_i,A_i,\boverbar{M}_{K,i},Y_i)\wh\mu_1(X_i)-\gamma(X_i,A_i,\boverbar{M}_{K,i},Y_i)\mu_1(X_i)\big\}.
\end{align}
By \Cref{lemma:supp_stoc_ec}, \Cref{thm:weak_rate_delta,thm:convergence_rate_gamma_eta}, \Cref{ass:supp_dgp} and \ref{ass:supp_sieve_strengthened_rate}, we have
\begin{align}\label{eq:supp_theta_4}
&\frac{1}{n}\sum_{i=1}^{n} R_i\big\{\wh\mu_1(X_i)-\mu_1(X_i)\big\}+\frac{1}{n}\sum_{i=1}^{n} R_i\big\{\wh\gamma(X_i,A_i,\boverbar{M}_{K,i},Y_i)\wh\mu_1(X_i)-\gamma(X_i,A_i,\boverbar{M}_{K,i},Y_i)\mu_1(X_i)\big\} \nonumber\\ 
&- E\Big[ R\big\{\wh\mu_1(X)-\mu_1(X)\big\}+ R\big\{\wh\gamma(X,A,\boverbar{M}_{K},Y)\wh\mu_1(X)-\gamma(X,A,\boverbar{M}_{K},Y)\mu_1(X)\big\}\Big] \nonumber\\ 
& + \frac{1}{n}\sum_{i=1}^{n} \varpi(R_i,Z_i,X_i,A_i,\boverbar{M}_{K,i},Y_i;\wh\gamma,\wh\mu) - \frac{1}{n}\sum_{i=1}^{n} \varpi(R_i,Z_i,X_i,A_i,\boverbar{M}_{K,i},Y_i;\gamma,\mu) \nonumber\\ 
& - E\Big[\varpi(R,Z,X,A,\boverbar{M}_{K},Y;\wh\gamma,\wh\mu) - \varpi(R,Z,X,A,\boverbar{M}_{K},Y;\gamma,\mu)\Big]= o_p(n^{-1/2}).
\end{align}
Recalling the definition of $\phi$ in the main text, the pathwise derivative of 
$$E[R\mu_1(X) + R\gamma(X,A,\boverbar{M}_{K},Y)\mu_1(X)+\varpi(R,Z,X,A,\boverbar{M}_{K},Y;\gamma,\mu)]$$
with respect to $(\gamma,\mu)$ in the direction $(\wt\gamma,\wt\mu)$ evaluated at the true value, denoted by $A(\wt\gamma,\wt\mu)$, is
\begin{align}
\label{eq:A_func_def}
A(\wt\gamma,\wt\mu) 
=&~ E\big[R\{1+\gamma(X,A,\boverbar{M}_{K},Y)\}\left\{1-I(A=a_1)\omega_1(X)\right\}\wt\mu_1(X)\big] \nonumber\\ 
&~ +\sum_{k=2}^{K+1} E\bigg[R\{1+\gamma(X,A,\boverbar{M}_{K},Y)\}\Big\{\prod_{j=1}^{k-1} \omega_j(X,\boverbar{M}_{j-1})\Big\} \nonumber\\ 
&\qquad \times\left\{I(A=a_{k-1})-I(A=a_k)\omega_k(X,\boverbar{M}_{k-1})\right\}\wt\mu_k(X,\boverbar{M}_{k-1})\bigg] \nonumber\\ 
&~ +E\{R\phi(X,A,\boverbar{M}_K,Y)\wt\gamma(X,A,\boverbar{M}_K,Y)\}\nonumber\\ 
&~ - E\{ E(R\varrho_0\mid Z,A,\boverbar{M}_K,X) R\wt\gamma(X,A,\boverbar{M}_K,Y) \}.
\end{align}
By \Cref{ass:representer,lemma:iden_coef} in the main text, \eqref{eq:A_func_def} implies that 
\beqr\label{eq:A_func}
A(\wt\gamma,\wt\mu) \equiv 0,
\eeqr
for any $\wt\gamma$ and $\wt\mu$.
Then by \Cref{lemma:second_order_residual} and \Cref{ass:supp_dgp,ass:supp_sieve_mu}, and \Cref{thm:weak_rate_delta}, \eqref{eq:A_func} implies that
\begin{align}\label{eq:supp_theta_5}
& E\Big[ R\big\{\wh\mu_1(X)-\mu_1(X)\big\}+ R\big\{\wh\gamma(X,A,\boverbar{M}_{K},Y)\wh\mu_1(X)-\gamma(X,A,\boverbar{M}_{K},Y)\mu_1(X)\big\}\Big] \nonumber\\ 
& + E\Big[\varpi(R,Z,X,A,\boverbar{M}_{K},Y;\wh\gamma,\wh\mu) - \varpi(R,Z,X,A,\boverbar{M}_{K},Y;\gamma,\mu)\Big] \nonumber\\ 
&= A(\wh\gamma-\gamma,\wh\mu-\mu) + 
\mathfrak{R}(\wh\gamma,\wh\mu) 
= o_p(n^{-1/2}),
\end{align}
where 
\begin{align*}
\mathfrak{R}(\wt\gamma,\wt\mu) 
\triangleq &~ E\big[R\{\wt\gamma(X,A,\boverbar{M}_{K},Y)-\gamma(X,A,\boverbar{M}_{K},Y)\}\left\{1-I(A=a_1)\omega_1(X)\right\}\big\{\wt\mu_1(X)-\mu_1(X)\big\}\big] \\ 
&~ + \sum_{k=2}^{K+1}E\Big[R\{\wt\gamma(X,A,\boverbar{M}_{K},Y)-\gamma(X,A,\boverbar{M}_{K},Y)\}\Big\{\prod_{j=1}^{k-1} \omega_j(X,\boverbar{M}_{j-1})\Big\} \nonumber\\ 
&\quad \times\left\{I(A=a_{k-1})-I(A=a_k)\omega_k(X,\boverbar{M}_{k-1})\right\}\big\{\wt\mu_k(X,\boverbar{M}_{k-1})-\mu_k(X,\boverbar{M}_{k-1})\big\}\Big].
\end{align*}
Combining \eqref{eq:supp_theta_3}, \eqref{eq:supp_theta_0.5}, \eqref{eq:supp_theta_4} and \eqref{eq:supp_theta_5} yields that 
\begin{align*}
\wh\psi-\psi 
&~= 
\frac{1}{n}\sum_{i=1}^{n}\left\{R_i\mu_1(X_i) + R_i\gamma(X_i,A_i,\boverbar{M}_{K,i},Y_i)\mu_1(X_i)\right\} -\psi \nonumber\\ 
&\qquad+\frac{1}{n}\sum_{i=1}^{n} \varpi(R_i,Z_i,X_i,A_i,\boverbar{M}_{K,i},Y_i;\gamma,\mu) + o_p(n^{-1/2}) \nonumber\\ 
&~= \frac{1}{n}\sum_{i=1}^{n} IF(R_i,Z_i,X_i,\boverbar{M}_{K,i},Y_i) + o_p(n^{-1/2}),
\end{align*}
where 
\begin{align*}
&IF(R,Z,RX,A,\boverbar{M}_{K},Y) \\ 
&~= R\big\{1+\gamma(X,A,\boverbar{M}_K,Y)\big\}\phi(X,A,\boverbar{M}_K,Y) -\psi \\ 
&\qquad- E\{R\varrho(X,A,\boverbar{M}_K,Y)\mid Z,A,\boverbar{M}_K,Y\}\{R\gamma(X,A,\boverbar{M}_K,Y)-1+R\}.
\end{align*}
Therefore, \Cref{thm:asy_normal} is derived following the central limit theorem and the Slutsky theorem.
\end{proof}

\subsection{Efficiency of $\wh\psi$}\label{ssec:pf_EIF}

We demonstrate that the proposed estimator $\wh\psi$ is locally efficient in the sense that it attains the semiparametric efficiency bound for $\psi$ in the semiparametric model $\calM_{sp}$ under the following condition, which is a symmetric counterpart of \Cref{ass:completeness}.

\begin{assum}[Completeness]\label{ass:completeness_inverse}
For any squared-integrable function $g$ and for any $a,\boverbar{m}_K,y$, $E\{g(Z)\mid X,A\!=\!a,\boverbar{M}_K\!=\!\boverbar{m}_K,Y\!=\!y\}=0$ almost surely if and only if $g(Z)=0$ almost surely.
\end{assum}

\begin{prop}\label{thm:EIF}
The estimator $\wh\psi$ attains the semiparametric efficiency bound of $\psi$ in $\calM_{sp}$ at the submodels where \Cref{ass:completeness_inverse} holds.
\end{prop}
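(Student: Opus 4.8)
The plan is to show that the influence function $IF$ derived in \Cref{thm:asy_normal} is precisely the efficient influence function of $\psi$ in $\calM_{sp}$. Since a regular asymptotically linear estimator attains the semiparametric efficiency bound if and only if its influence function coincides with the efficient influence function, and the efficient influence function is the unique gradient of $\psi$ lying in the tangent space $\mathcal{T}$ of $\calM_{sp}$, it suffices to verify two facts: (i) $IF$ is a gradient for $\psi$, which follows from the asymptotic linearity already established in \Cref{thm:asy_normal} together with pathwise differentiation of $\psi$ along regular parametric submodels; and (ii) $IF\in\mathcal{T}$.

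First I would characterize $\mathcal{T}$. Writing the observed data as $(R,RX,Z,A,\boverbar{M}_K,Y)$, I factor the underlying law into the full-data marginal $f(X,A,\boverbar{M}_K,Y)$, the shadow conditional $f(Z\mid X,A,\boverbar{M}_K,Y)$, and the missingness mechanism $f(R\mid X,A,\boverbar{M}_K,Y)$, where the last factor excludes $Z$ by \Cref{ass:shadow}(ii); equivalently, the sole model restriction is the conditional independence $Z\indep R\mid X,A,\boverbar{M}_K,Y$ encoded by \eqref{eq:iden_gamma}. This decomposition yields $\mathcal{T}$ as a direct sum of three mutually orthogonal pieces, the only nontrivial constraint being on the missingness-mechanism scores, which for binary $R$ take the form $\{R-1/(1+\gamma)\}\,b(X,A,\boverbar{M}_K,Y)$ and, under the shadow restriction, must be orthogonal to the variation induced by $Z$.

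Next I would show $IF\in\mathcal{T}$. The leading terms $R\{1+\gamma\}\phi-\psi$ form the inverse-response-probability reweighting of the full-data efficient influence function $\phi-\psi$ of \cite{zhou2022semiparametric}, using the identity $E[R\{1+\gamma\}g]=E[g]$, and hence represent the reweighted full-data part of $\mathcal{T}$. The correction term $-E\{R\varrho\mid Z,A,\boverbar{M}_K,Y\}\{R\gamma-1+R\}$ is exactly the $\calL_2$-projection of this reweighted term onto the nuisance tangent space generated by the missingness mechanism: the factor $\{R\gamma-1+R\}$ is proportional to the missingness score direction $R-1/(1+\gamma)$, while $E\{R\varrho\mid Z,A,\boverbar{M}_K,Y\}$ is built from the Riesz representer $\varrho$ of \Cref{ass:representer} that renders the residual orthogonal to all admissible directions. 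The crucial ingredient is the symmetric completeness \Cref{ass:completeness_inverse}: swapping the roles of $X$ and $Z$, it guarantees injectivity of the operator $g\mapsto E\{g(Z)\mid X,A,\boverbar{M}_K,Y\}$, so that the projection onto the constrained missingness tangent space is uniquely determined and realized by the $\varrho$-term. Subtracting this projection leaves $IF$ orthogonal to the orthocomplement of $\mathcal{T}$, so $IF\in\mathcal{T}$ and $IF$ is therefore the efficient influence function.

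The hardest step will be the explicit characterization of the missingness-mechanism nuisance tangent space under the shadow restriction and the verification that the representer term realizes the orthogonal projection onto it. This is where \Cref{ass:completeness_inverse} is indispensable: without the injectivity it supplies, the projection need not be unique and the $\varrho$-term could fail to coincide with it, whereas with it the residual reweighted score is annihilated by the completeness operator only at the true value, pinning down $IF$ as the unique minimal-variance gradient and thereby establishing the efficiency of $\wh\psi$.
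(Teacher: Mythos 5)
Your overall skeleton---show that the influence function $IF$ from \Cref{thm:asy_normal} lies in the tangent space of $\calM_{sp}$---is the same as the paper's, but the route you take to the tangent space has a genuine gap. You factor the \emph{full-data} law into $f(X,A,\boverbar{M}_K,Y)$, $f(Z\mid X,A,\boverbar{M}_K,Y)$ and $f(R\mid X,A,\boverbar{M}_K,Y)$ and assert that this "yields $\mathcal{T}$ as a direct sum of three mutually orthogonal pieces." That is true for the full-data tangent space, but the observed data are $(R,RX,Z,A,\boverbar{M}_K,Y)$ with $X$ coarsened, so each observed-data score is the conditional expectation of a full-data score given the observed data, and conditional expectation does not preserve the orthogonality of the three factors. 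The paper sidesteps this by factorizing the \emph{observed-data} likelihood directly, as $f(Z,A,\boverbar{M}_K,Y)\{f(X\mid R=1,Z,A,\boverbar{M}_K,Y)f(R=1\mid Z,A,\boverbar{M}_K,Y)\}^R\{f(R=0\mid Z,A,\boverbar{M}_K,Y)\}^{1-R}$, whose factors are variation independent except for the single coupling imposed by the restriction \eqref{eq:iden_gamma}; differentiating that restriction along a submodel produces the constraint that defines $\Lambda_2$. Without an analogous derivation your characterization of the missingness-mechanism tangent space is not established.

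Two further points. First, you assert that the correction term $-E\{R\varrho\mid Z,A,\boverbar{M}_K,Y\}\{R\gamma-1+R\}$ is "exactly the $\calL_2$-projection" of the reweighted full-data term onto the missingness tangent space; this is the crux and is left unproven. The representer $\varrho$ is defined via the Riesz representation of $\wt\gamma\mapsto E\{R\phi\wt\gamma\}$ under the pseudo-metric $\|\cdot\|_w$, and connecting that to an $\calL_2$ projection onto a constrained score space requires an argument. The paper avoids the projection language entirely: it exhibits explicit choices of $S(X\mid R=1,Z,A,\boverbar{M}_K,Y)$ and $S(R=1\mid Z,A,\boverbar{M}_K,Y)$ whose induced observed-data score equals $IF$ minus its $\Lambda_1$ component, and checks the defining conditions of $\Lambda_2$ directly. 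Second, you mischaracterize the role of \Cref{ass:completeness_inverse}: it is not used to make a projection unique, but to show that the adjoint $\mathcal{T}'$ of the operator $\mathcal{T}:g(X,A,\boverbar{M}_K,Y)\mapsto E\{g\mid R=1,Z,A,\boverbar{M}_K,Y\}$ has trivial null space, whence $\operatorname{cl}(\mathcal{R}(\mathcal{T}))=\mathcal{N}(\mathcal{T}')^{\perp}=\calL_2(Z,A,\boverbar{M}_K,Y)$ and the range condition in the definition of $\Lambda_2$ becomes vacuous. In other words, completeness enlarges the tangent space so that membership of $IF$ is automatic once the other two conditions are verified; it does not pin down a projection. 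You have identified the right operator, but drawn the wrong conclusion from its injectivity, and this is precisely the step where your argument would stall.
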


\begin{proof}
In order to demonstrate that the proposed estimator $\wh\psi$ is efficient, we present a proof in two steps. Firstly, we derive the tangent space, and secondly, we show the influence function derived in \Cref{thm:asy_normal} belongs to the tangent space.
The distribution of observed data $O=(R,RX,Z,A,\boverbar{M}_K,Y)$ is 
\begin{align*}
f(O)= f(Z,A,\boverbar{M}_K,Y)\bigbc{f(X\mid R=1,Z,A,\boverbar{M}_K,Y)f(R=1\mid Z,A,\boverbar{M}_K,Y)}^R \\ 
\times\bigbc{f(R=0\mid Z,A,\boverbar{M}_K,Y)}^{1-R}.
\end{align*}
Consider a regular parametric submodel $f_\tau(O)$ in $\calM_{sp}$ indexed by $\tau$ that equals $f(O)$ when $\tau=0$.
Then the corresponding score is given by 
\begin{align*}
S(O) =&~ S(Z,A,\boverbar{M}_K,Y) + R\big\{ S(X\mid R=1,Z,A,\boverbar{M}_K,Y) + S(R=1\mid Z,A,\boverbar{M}_K,Y) \big\} \nonumber\\ 
& ~ - \frac{1-R}{\beta(Z,A,\boverbar{M}_K,Y)}S(R=1\mid Z,A,\boverbar{M}_K,Y),
\end{align*}
where 
\beqrs
&& S(Z,A,\boverbar{M}_K,Y) = \partial \log f_\tau(Z,A,\boverbar{M}_K,Y)/\partial \tau \vert_{\tau=0}, \\
&& S(R=1\mid Z,A,\boverbar{M}_K,Y) = \partial \log f_\tau(R=1\mid Z,A,\boverbar{M}_K,Y) /\partial \tau \vert_{\tau=0},\\
&& S(X\mid R=1,Z,A,\boverbar{M}_K,Y) = \partial \log f_\tau(X\mid R=1,Z,A,\boverbar{M}_K,Y) /\partial \tau \vert_{\tau=0},\\ 
&& \beta(Z,A,\boverbar{M}_K,Y) = f(R=0\mid Z,A,\boverbar{M}_K,Y)/f(R=1\mid Z,A,\boverbar{M}_K,Y).
\eeqrs
Recalling that
\beqrs
E\left\{\gamma(X,A,\boverbar{M}_{K},Y) \mid R=1,Z,A,\boverbar{M}_{K},Y\right\} = \beta(Z,A,\boverbar{M}_{K},Y),\eeqrs
we have 
\beqrs
\frac{\partial}{\partial t} E_t\left\{\gamma_t(X,A,\boverbar{M}_{K},Y)-\beta_t(Z,A,\boverbar{M}_{K},Y)\mid R=1,Z,A,\boverbar{M}_{K},Y\right\}\vert_{t=0} = 0.
\eeqrs
Equivalently,
\beqrs
\int \frac{\partial \left\{\gamma_t(x,A,\boverbar{M}_{K},Y)-\beta_t(Z,A,\boverbar{M}_{K},Y)\right\}f_t(x\mid R=1,Z,A,\boverbar{M}_{K},Y)}{\partial t} \vert_{t=0}dx = 0.
\eeqrs
It follows that 
\begin{align*}
& E\{\gamma(X,A,\boverbar{M}_{K},Y) S(X \mid R=1, Z,A,\boverbar{M}_{K},Y) \mid R=1, Z,A,\boverbar{M}_{K},Y\}\nonumber\\ 
&~+\frac{S(R=1 \mid Z,A,\boverbar{M}_{K},Y)}{f(R=1 \mid Z,A,\boverbar{M}_{K},Y)}= - E\left\{\frac{\partial}{\partial t} \gamma_t(X,A,\boverbar{M}_{K},Y) \mid R=1, Z,A,\boverbar{M}_{K},Y\right\} .
\end{align*}
Equivalently,
\begin{align}
\label{eq:app_deriv_gamma}
& E\{R \gamma(X,A,\boverbar{M}_{K},Y) S(X \mid R=1, Z,A,\boverbar{M}_{K},Y) \mid Z,A,\boverbar{M}_{K},Y\}\nonumber\\ 
&~+ S(R=1 \mid Z,A,\boverbar{M}_{K},Y) = - E\left\{R\frac{\partial}{\partial t} \gamma_t(X,A,\boverbar{M}_{K},Y) \mid Z,A,\boverbar{M}_{K},Y\right\} .
\end{align}
From \eqref{eq:app_deriv_gamma}, we obtain that the tangent space is $\Lambda_1\bigoplus\Lambda_2$ with
\begin{align*}
\Lambda_1 = &~ \Big\{S(Z,A,\boverbar{M}_{K},Y) \in \calL_2(Z,A,\boverbar{M}_{K},Y): E\{S(Z,A,\boverbar{M}_{K},Y)\}=0\Big\}, \\
\Lambda_2 = &~ \bigg\{R S(X \mid R=1, Z,A,\boverbar{M}_{K},Y)\\
&\quad+\big\{R- {(1-R)}/{\beta(Z,A,\boverbar{M}_{K},Y)}\big\} S(R=1 \mid Z,A,\boverbar{M}_{K},Y)\in\Lambda_1^\bot: \\
&\qquad E\{S(X \mid R=1, Z,A,\boverbar{M}_{K},Y) \mid R=1, Z,A,\boverbar{M}_{K},Y\}=0, \text { and } \\
&\qquad E\{R \gamma(X,A,\boverbar{M}_{K},Y) S(X \mid R=1, Z,A,\boverbar{M}_{K},Y) \mid Z,A,\boverbar{M}_{K},Y\} \\ 
&\qquad\qquad + S(R=1 \mid Z,A,\boverbar{M}_{K},Y) \in \operatorname{cl}(\mathcal{R}(\mathcal{T}))\bigg\},
\end{align*}
where $\mathcal{T}$ is the operator that maps $g(X,A,\boverbar{M}_{K},Y)\in\calL_2(X,A,\boverbar{M}_{K},Y)$ to $E \{g(X,A,\boverbar{M}_{K},Y)\mid R=1,Z,A,\boverbar{M}_{K},Y \}\in\calL_2(Z,A,\boverbar{M}_{K},Y)$, $\mathcal{R}(\mathcal{T})$ is the range space of $\mathcal{T}$, $A^\bot$ denotes the orthogonal complement of $A$, and $cl(A)$ refers to the closure of $A$. 

In order to show $\wh\psi$ is efficient, we only need to show that the influence function derived in \Cref{thm:asy_normal} belongs to the tangent space above. For notational simplicity, below we use $\gamma$, $\phi$, $\beta$ and $\nu$ to denote $\gamma(X,A,\boverbar{M}_K,Y)$, $\phi(X,A,\boverbar{M}_K,Y)$, $\beta(Z,A,\boverbar{M}_K,Y)$ and $E(R\varrho_0\mid Z,A,\boverbar{M}_K,Y)$, respectively. Then we have the following decomposition of the influence function 
\begin{align*}
IF =&~ R(1+\gamma)\phi -\psi - \nu(R\gamma-1+R) \\ 
&=~ \frac{1}{\beta+1} E\{(1+\gamma)(\phi-\nu)\mid R=1,Z,A,\boverbar{M}_K,Y\}+\nu-\psi \\ 
&~ + R(1+\gamma)(\phi-\nu) - R E\{(1+\gamma)(\phi-\nu)\mid R=1,Z,A,\boverbar{M}_K,Y\} \\ 
&~ + \left(R-\frac{1-R}{\beta}\right) \frac{\beta}{\beta+1} E\{(1+\gamma)(\phi-\nu)\mid R=1,Z,A,\boverbar{M}_K,Y\}.
\end{align*}
Note that 
\beqrs
\frac{1}{\beta+1} = f(R=1\mid Z,A,\boverbar{M}_K,Y),
\eeqrs
then we have 
\beqrs
&& E\left[\frac{1}{\beta+1} E\{(1+\gamma)(\phi-\nu)\mid R=1,Z,A,\boverbar{M}_K,Y\}+\nu-\psi\right] \\ 
&=& E\Big[ E\{R(1+\gamma)(\phi-\nu)\mid Z,A,\boverbar{M}_K,Y\}+\nu-\psi\Big] \\ 
&=& E[R(1+\gamma)\phi + \nu (R\gamma-1+R) -\psi ] = E(IF)=0.
\eeqrs
It implies that 
\beqrs
\frac{1}{\beta+1} E[(1+\gamma)(\phi-\nu)\mid R=1,Z,A,\boverbar{M}_K,Y]+\nu-\psi \in \Lambda_1.
\eeqrs
According to \Cref{ass:completeness_inverse}, we have $\calN(\mathcal{T}\pprime)={0}$ and hence $cl(\calR(\mathcal{T}))=\calN(\mathcal{T}\pprime)^{\perp}=\calL_2(Z,A,\boverbar{M}_K,Y)$. Here $\mathcal{T}\pprime$ is the adjoint operator of $\mathcal{T}$ and $\calN(\mathcal{T}\pprime)$ is its null space. Take 
\begin{align*}
&S(X\mid R=1,Z,A,\boverbar{M}_K,Y) = (1+\gamma)(\phi-\nu) - E[(1+\gamma)(\phi-\nu)\mid R=1,Z,A,\boverbar{M}_K,Y],\\ 
&S(R=1\mid Z,A,\boverbar{M}_K,Y) = \frac{\beta}{\beta+1} E[(1+\gamma)(\phi-\nu)\mid R=1,Z,A,\boverbar{M}_K,Y].
\end{align*}
It is obvious that 
\beqrs
&&E[S(X\mid R=1,Z,A,\boverbar{M}_K,Y)\mid R=1,Z,A,\boverbar{M}_K,Y] = 0, \mbox{ and }\\ 
&&E\left[R(1+\gamma)(T,M,Y,X) S(X\mid R=1,Z,A,\boverbar{M}_K,Y) \mid Z,A,\boverbar{M}_K,Y\right] \\ 
&&\qquad\qquad+ S(R=1\mid Z,A,\boverbar{M}_K,Y) \in \calL_2(Z,A,\boverbar{M}_K,Y).
\eeqrs
It implies that 
\begin{align*}
& R(1+\gamma)(\phi-\nu) - R E[(1+\gamma)(\phi-\nu)\mid R=1,Z,A,\boverbar{M}_K,Y] \\ 
&\quad + \left(R-\frac{1-R}{\beta}\right) \frac{\beta}{\beta+1} E[(1+\gamma)(\phi-\nu)\mid R=1,Z,A,\boverbar{M}_K,Y] \in\Lambda_2.
\end{align*}
This completes the proof of \Cref{thm:EIF}.
\end{proof}

\subsection{Proof of Lemma \ref{lemma:iden_coef}}\label{ssec:app_pf_lemma1}
\begin{proof}
We first show that $\omega_1(x)\triangleq f_{A\mid X}(a_{1}\mid x)^{-1}$ satisfies the conditional moment restriction
\begin{align*}
E\big[\big\{1+\gamma(X,A,\boverbar{M}_K,Y)\big\}\big\{I(A=a_1)\omega_1(X)-1\big\}\mid R=1,X\big] =0.
\end{align*}
Note that it is equivalent to 
\begin{align*}
E\big[\big\{1+\gamma(X,A,\boverbar{M}_K,Y)\big\}I(A=a_1) \mid R=1,X\big]\omega_1(X) =	E\big\{1+\gamma(X,A,\boverbar{M}_K,Y)\mid R=1,X\big\}.
\end{align*}
So we only need to show that 
\beqrs
\omega_1(X) = \frac{E\big\{1+\gamma(X,A,\boverbar{M}_K,Y)\mid R=1,X\big\}}{E\big[\big\{1+\gamma(X,A,\boverbar{M}_K,Y)\big\}I(A=a_1)\mid R=1,X\big]}.
\eeqrs
By the definition of $\gamma(X,A,\boverbar{M}_K,Y)$, we obtain that 
\begin{align}
\label{app_lem1-1-2}
&E\big[\big\{1+\gamma(X,A,\boverbar{M}_K,Y)\big\}I(A=a_1) \mid R=1,X\big] \nonumber\\ 
=&~ \iint \frac{f(A = a_1, \boverbar{m}_K, y \mid R=1, X)}{f(R=1 \mid X, A=a_1, \boverbar{m}_K, y)} d\boverbar{m}_K dy \nonumber\\ 
=&~ f(R=1 \mid X)^{-1} \iint \frac{f(A = a_1, \boverbar{m}_K, y, R=1 \mid X)}{f(R=1 \mid X, A=a_1, \boverbar{m}_K, y)} d\boverbar{m}_K dy \nonumber\\ 
=&~ f(R=1 \mid X)^{-1} \iint f(A=a_1,\boverbar{m}_K, y \mid X) d\boverbar{m}_K dy \nonumber\\ 
=&~ f(R=1 \mid X)^{-1} f(A=a_1 \mid X).
\end{align} 
Similarly, we can deduce that 
\beqr\label{app_lem1-1-3}
E\big\{1+\gamma(X,A,\boverbar{M}_K,Y)\mid R=1,X\big\} = f(R=1 \mid X)^{-1}.
\eeqr
Combining \eqref{app_lem1-1-2} and \eqref{app_lem1-1-3} yields that 
\beqrs
\frac{E\big\{1+\gamma(X,A,\boverbar{M}_K,Y)\mid R=1,X\big\}}{E\big[\big\{1+\gamma(X,A,\boverbar{M}_K,Y)\big\}I(A=a_1)\mid R=1,X\big]}=f_{A\mid X}(a_{1}\mid X)^{-1}=\omega_1(X),
\eeqrs
which is the result we aim for.

Next we show that for $k\in\{2,\ldots,K+1\}$,
\beqrs
\omega_k(x,\boverbar{m}_{k-1}) \triangleq \frac{f_{A\mid X,\boverbar{M}_{k-1}}(a_{k-1}\mid x,\boverbar{m}_{k-1})}{f_{A\mid X,\boverbar{M}_{k-1}}(a_k\mid x,\boverbar{m}_{k-1})} 
\eeqrs
satisfies the conditional moment restriction
\begin{align*}
E\big[\big\{1+\gamma(X,A,\boverbar{M}_K,Y)\big\}\big\{I(A=a_k)\omega_k(X,\boverbar{M}_{k-1})-I(A=a_{k-1})\big\}\mid R=1,X,\boverbar{M}_{k-1}\big] =0.
\end{align*}
In the same way, the above moment restriction is equivalent to
\begin{align*}
E\big[\big\{1+\gamma(X,A,\boverbar{M}_K,Y)\big\}I(A=a_k)\mid R=1,X,\boverbar{M}_{k-1}\big]\omega_k(X,\boverbar{M}_{k-1}) \\
= E\big[\big\{1+\gamma(X,A,\boverbar{M}_K,Y)\big\}I(A=a_{k-1})\mid R=1,X,\boverbar{M}_{k-1}\big].
\end{align*}
So we only need to show that 
\begin{align*}
\omega_k(X,\boverbar{M}_{k-1}) = \frac{E\big[\big\{1+\gamma(X,A,\boverbar{M}_K,Y)\big\}I(A=a_{k-1})\mid R=1,X,\boverbar{M}_{k-1}\big]}{E\big[\big\{1+\gamma(X,A,\boverbar{M}_K,Y)\big\}I(A=a_k)\mid R=1,X,\boverbar{M}_{k-1}\big]}.
\end{align*}
We have
\beqrs
&&E\big[\big\{1+\gamma(X,A,\boverbar{M}_K,Y)\big\}I(A=a_k)\mid R=1,X,\boverbar{M}_{k-1}\big] \nonumber\\ 
&=& \iint \frac{f(A=a_k, \bunderline{m}_k, y \mid R=1, X, \boverbar{M}_{k-1})}{f(R=1 \mid X, A=a_k, \boverbar{M}_{k-1},\bunderline{m}_k, y)} d\bunderline{m}_k dy \nonumber\\ 
&=& f(R=1 \mid X, \boverbar{M}_{k-1})^{-1} \iint \frac{f(A=a_k, \bunderline{m}_k, y, R=1 \mid X, \boverbar{M}_{k-1})}{f(R=1 \mid X, A=a_k, \boverbar{M}_{k-1},\bunderline{m}_k, y)} d\bunderline{m}_k dy \nonumber\\ 
&=& f(R=1 \mid X, \boverbar{M}_{k-1})^{-1} \iint f(A=a_k, \bunderline{m}_k, y \mid X, \boverbar{M}_{k-1}) d\bunderline{m}_k dy \nonumber\\ 
&=& f(R=1 \mid X, \boverbar{M}_{k-1})^{-1} f(A=a_k \mid X, \boverbar{M}_{k-1}).
\eeqrs
Similarly, 
\beqrs 
&& E\big[\big\{1+\gamma(X,A,\boverbar{M}_K,Y)\big\}I(A=a_{k-1})\mid R=1,X,\boverbar{M}_{k-1}\big] \\
&=& f(R=1 \mid X, \boverbar{M}_{k-1})^{-1} f(A=a_{k-1} \mid X, \boverbar{M}_{k-1}).
\eeqrs
It follows that 
\beqrs
&& \frac{E\big[\big\{1+\gamma(X,A,\boverbar{M}_K,Y)\big\}I(A=a_{k-1})\mid R=1,X,\boverbar{M}_{k-1}\big]}{E\big[\big\{1+\gamma(X,A,\boverbar{M}_K,Y)\big\}I(A=a_k)\mid R=1,X,\boverbar{M}_{k-1}\big]}\\
&=&\frac{f_{A\mid X,\boverbar{M}_{k-1}}(a_{k-1}\mid x,\boverbar{m}_{k-1})}{f_{A\mid X,\boverbar{M}_{k-1}}(a_k\mid x,\boverbar{m}_{k-1})} 
= \omega_k(x,\boverbar{m}_{k-1}) .
\eeqrs
This completes the proof.
\end{proof}

\section{Auxillary lemmas}\label{sec:app_Aux_lemmas}

\begin{lemma}\label{lemma:lemmaC2_of_ChenPouzo2012}
Suppose \Cref{ass:supp_dgp,ass:supp_sieve_delta_inter,ass:supp_sieve_delta_exter} hold. Then there exists finite constants $C,C\pprime>0$ such that
\begin{align*}
&C E[\{E(R\wt\gamma-1+R\mid Z,A,\boverbar{M}_K,Y)\}^2] - O_p(l_n/n+l_n^{-2\alpha_2}) \\ 
&\qquad \le \frac{1}{n}\sum_{i=1}^n \{\wh{E}(R\wt\gamma-1+R\mid Z_i,A_i,\boverbar{M}_{K,i},Y_i)\}^2 \\ 
&\qquad\qquad \le C\pprime E[\{E(R\wt\gamma-1+R\mid Z,A,\boverbar{M}_K,Y)\}^2] + O_p(l_n/n+l_n^{-2\alpha_2}),
\end{align*}
uniformly over $\wt\gamma\in\Gamma_n$.
\end{lemma}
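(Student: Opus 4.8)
The plan is to read the left-hand quantity as the squared empirical $\calL_2$-norm of a series-regression fit and to compare it with the squared population $\calL_2$-norm of the true conditional mean through the standard equivalence of sample and population Gram matrices. Write $W=(Z,A,\boverbar{M}_K,Y)$, $g_{\wt\gamma}=R\wt\gamma(X,A,\boverbar{M}_K,Y)-1+R$, and $\Pi_{\wt\gamma}=E(g_{\wt\gamma}\mid W)$, so the target is $E[\Pi_{\wt\gamma}^2]=\|\Pi_{\wt\gamma}\|_{\calL_2}^2$. Since $\wh{E}(g_{\wt\gamma}\mid W_i)=[\bm H_n\bm g_{\wt\gamma}]_i$ with $\bm H_n=\P(\P\trans\P)^{-1}\P\trans$ the orthogonal projection onto the columns of $\P$, a direct computation gives $\frac1n\sum_{i}\{\wh{E}(g_{\wt\gamma}\mid W_i)\}^2=\frac1n\|\bm H_n\bm g_{\wt\gamma}\|^2=\bm m_n\trans\bm Q_n^{-1}\bm m_n$, where $\bm Q_n=\frac1n\P\trans\P$ and $\bm m_n=\frac1n\sum_i\bbar{p}_{l_n}(W_i)g_{\wt\gamma,i}$. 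Introducing the population analogues $\bm Q=E\{\bbar{p}_{l_n}(W)\bbar{p}_{l_n}(W)\trans\}$ and $\bm m=E\{\bbar{p}_{l_n}(W)g_{\wt\gamma}\}=E\{\bbar{p}_{l_n}(W)\Pi_{\wt\gamma}\}$, the quantity $\bm m\trans\bm Q^{-1}\bm m=\|\Pi_{l_n,\wt\gamma}\|_{\calL_2}^2$ is the squared norm of the population $l_n$-term series projection of $\Pi_{\wt\gamma}$.

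First I would establish the Gram-matrix equivalence $\|\bm Q^{-1/2}\bm Q_n\bm Q^{-1/2}-\bm I\|_{\mathrm{op}}=o_p(1)$ by a matrix concentration inequality under \Cref{ass:supp_sieve_delta_exter}(i),(iii) (notably $\zeta_{1n}^2 l_n/n=o(1)$); this holds simultaneously for all $\wt\gamma$ since $\bm Q_n,\bm Q$ do not depend on $\wt\gamma$. Passing to whitened coordinates $\widetilde{\bm m}=\bm Q^{-1/2}\bm m$ and $\widetilde{\bm m}_n=\bm Q^{-1/2}\bm m_n$, it yields $\bm m_n\trans\bm Q_n^{-1}\bm m_n=(1+o_p(1))\|\widetilde{\bm m}_n\|^2$. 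Second, I would control the sampling error: because $|g_{\wt\gamma}|$ is bounded (\Cref{ass:supp_dgp}(i)), the whitened score covariance satisfies $\bm Q^{-1/2}\var(\bbar{p}_{l_n}g_{\wt\gamma})\bm Q^{-1/2}\preceq \sup|g_{\wt\gamma}|^2\,\bm I$, so $E\|\widetilde{\bm m}_n-\widetilde{\bm m}\|^2\lesssim l_n/n$ and $\|\widetilde{\bm m}_n-\widetilde{\bm m}\|^2=O_p(l_n/n)$; Young's inequality applied to the cross term then gives, for any fixed $\epsilon\in(0,1)$, the two-sided bound $(1-\epsilon)\|\widetilde{\bm m}\|^2-O_p(l_n/n)\le\|\widetilde{\bm m}_n\|^2\le(1+\epsilon)\|\widetilde{\bm m}\|^2+O_p(l_n/n)$. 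Third, I would pass from $\|\widetilde{\bm m}\|^2=\bm m\trans\bm Q^{-1}\bm m=\|\Pi_{l_n,\wt\gamma}\|_{\calL_2}^2$ to the target: by \Cref{ass:supp_sieve_delta_exter}(ii) there is $\bm\pi_{\wt\gamma}$ with $\|\Pi_{\wt\gamma}-\bbar{p}_{l_n}\trans\bm\pi_{\wt\gamma}\|_\infty=O(l_n^{-\alpha_2})$ uniformly, so Pythagoras gives $\big|\|\Pi_{l_n,\wt\gamma}\|_{\calL_2}^2-\|\Pi_{\wt\gamma}\|_{\calL_2}^2\big|=\|\Pi_{\wt\gamma}-\Pi_{l_n,\wt\gamma}\|_{\calL_2}^2=O(l_n^{-2\alpha_2})$. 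Chaining the three displays, and absorbing the $\epsilon\|\Pi_{\wt\gamma}\|_{\calL_2}^2$ terms into the multiplicative constants $C,C'$, produces the stated inequalities with additive remainder $O_p(l_n/n+l_n^{-2\alpha_2})$.

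The main obstacle is the uniformity over $\wt\gamma\in\Gamma_n$ of the sampling-error bound $\|\widetilde{\bm m}_n-\widetilde{\bm m}\|^2=O_p(l_n/n)$, which is essential because in the application (passing from \eqref{eq:app_weak_rate_delta_pf1} to \eqref{eq:app_weak_rate_delta_pf2}) the lemma is evaluated at the data-dependent $\wh\gamma$. I would treat this by an empirical-process argument: $g_{\wt\gamma}$ is affine in $\wt\gamma$, hence the centered whitened score is linear in $\wt\gamma$, and $\wt\gamma$ ranges over the Hölder ball $C_B^\alpha$ of \Cref{ass:supp_dgp}(i) with $\alpha>(d_x+K+1)/2$, whose entropy integral converges; a maximal inequality for the resulting $l_n$-dimensional process then delivers $\sup_{\wt\gamma\in\Gamma}\|\widetilde{\bm m}_n-\widetilde{\bm m}\|=O_p(\sqrt{l_n/n})$ under $\zeta_{1n}^2 l_n/n=o(1)$. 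The other two steps are uniform for free: the matrix concentration because $\bm Q_n,\bm Q$ are $\wt\gamma$-free, and the bias bound because \Cref{ass:supp_sieve_delta_exter}(ii) is already stated uniformly over $\Gamma$. Throughout I read \Cref{ass:supp_sieve_delta_exter}(i) in its intended normalized form, that the eigenvalues of $\bm Q=E\{\bbar{p}_{l_n}\bbar{p}_{l_n}\trans\}$ (equivalently of $\frac1n\P\trans\P$) are bounded and bounded away from zero, so that $\bm Q^{\pm1/2}$ are well defined with bounded operator norm.
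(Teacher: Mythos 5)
Your proof is correct, but it takes a different route from the paper's: the paper does not derive the two-sided bound from scratch, it verifies the hypotheses of Lemma C.2(ii) of Chen and Pouzo (2012) --- boundedness of $R\wt\gamma-1+R$ over $\Gamma_n$ from the H\"older-ball assumption, the sieve approximation rate $l_n^{-2\alpha_2}$, and a uniform bracketing-entropy bound $\max_{1\le j\le l_n}J_{[]}(1,\calO_j,\|\cdot\|_{\calL_2})\lesssim 1$ for the classes $\calO_j=\{p_j(w)(r\wt\gamma-1+r):\wt\gamma\in\Gamma\}$ --- and then cites that lemma. Your argument re-derives essentially the same content directly: the identity $\tfrac1n\sum_i\{\wh{E}(g_{\wt\gamma}\mid W_i)\}^2=\bm m_n\trans\bm Q_n^{-1}\bm m_n$, Gram-matrix equivalence under $\zeta_{1n}^2l_n/n=o(1)$, a uniform $O_p(l_n/n)$ bound on the whitened score error, and the Pythagorean bias bound $O(l_n^{-2\alpha_2})$. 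The one place where you should be as explicit as the paper is the uniformity over $\wt\gamma$ of the $l_n$-dimensional sampling-error bound: summing squared coordinatewise empirical-process suprema to get $O_p(l_n/n)$ requires the entropy (hence the maximal-inequality constant) to be controlled \emph{uniformly in the coordinate index} $j$, which is exactly the $\max_j$ entropy bound the paper records; your appeal to the H\"older-ball entropy covers this, but the coordinate-uniformity deserves a sentence. Your reading of Assumption \ref{ass:supp_sieve_delta_exter}(i) in normalized form is the intended one. What your approach buys is a self-contained proof that makes the role of each assumption visible; what the paper's buys is brevity and a ready-made treatment of the uniformity issue.
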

\begin{proof}
The proof proceeds by verifying the conditions of Lemma C.2(ii) of \cite{ChenPouzo2012}. Assumption C.2(i) of \cite{ChenPouzo2012} is satisfied since $|R\gamma-1+R|$ is uniformly bounded over $\gamma\in\Gamma_n\subset\Gamma$ by \Cref{ass:supp_dgp}(i) and \Cref{ass:supp_sieve_delta_inter}(i). Assumption C.2(i) of \cite{ChenPouzo2012} is satisfied by \Cref{ass:supp_sieve_delta_exter}(ii) with their $b^2_{m,J_n}$ being $l_n^{-2\alpha_2}$ here. Denote 
\beqr
\calO_j = \{f_j = p_j(z,a,\boverbar{m}_K,y)(r\gamma-1+r):\gamma\in\Gamma\},\quad j=1,\ldots,l_n.
\eeqr
Note that $f_j$ is linear and hence Lipschitz continuous in $\gamma\in\Gamma$ under $\|\cdot\|_\infty$, and $$\max_{1\le j\le l_n}E[p_j(Z,A,\boverbar{M}_K,Y)^2]\lesssim 1,$$ by \Cref{ass:supp_sieve_delta_exter}(i). Then Theorem 2.7.11 of \cite{Vaart_ep_1996} implies that $N_{[]}(\epsilon,\calO_j,\|\cdot\|_{\calL_2})\le N(\epsilon/C,\Gamma,\|\cdot\|_\infty)$ for some constant $C>0$. Thus, by \Cref{ass:supp_dgp} and then Theorem 2.5.6 and 2.7.1 of \cite{Vaart_ep_1996}, we have 
\beqr\label{eq:app_lemmaC2_of_ChenPouzo2012_pf1}
\max_{1\le j\le l_n} J_{[]}(1,\calO_j,\|\cdot\|_{\calL_2}) \lesssim 1.
\eeqr
Thus, \eqref{eq:app_lemmaC2_of_ChenPouzo2012_pf1} implies that Assumption C.2(iii) of \cite{ChenPouzo2012} is satisfied with their $C_n\lesssim 1$. In addition, Assumption C.1 of \cite{ChenPouzo2012} is satisfied by \Cref{ass:supp_dgp}(iii) and \Cref{ass:supp_sieve_delta_exter}(i)(iii). Thus, the result of the lemma follows by Lemma C.2(ii) of \cite{ChenPouzo2012}. 
\end{proof}

\begin{lemma}\label{lemma:app_delta}
Suppose \Cref{ass:supp_dgp,ass:supp_sieve_delta_exter} hold. We have 
\beqrs
\frac{1}{n}\sum_{i=1}^{n} \{\wh{E}(R\wt\gamma-1+R\mid Z_i,A_i,\boverbar{M}_{K,i},Y_i)\}^2 = O_p(l_n/n+l_n^{-2\alpha_2})+o_p(n^{-1/2}),
\eeqrs
uniformly over $\wt\gamma\in\{\wt\gamma\in\Gamma:\|\wt\gamma-\gamma\|_w=o_p(n^{-1/4})\}$. 
\end{lemma}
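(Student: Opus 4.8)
The plan is to reduce the claim to a comparison between the empirical second moment of a series-estimated conditional expectation and its population target, and then to exploit that the target is negligible on the localizing set. First I would record the structural fact, coming from the identification equation \eqref{eq:app_iden-1-1}, that $E(R\gamma-1+R\mid Z,A,\boverbar{M}_K,Y)=0$ almost surely. Writing $\delta=\wt\gamma-\gamma$ and $g_{\wt\gamma}\triangleq E(R\wt\gamma-1+R\mid Z,A,\boverbar{M}_K,Y)=E\{R\delta\mid Z,A,\boverbar{M}_K,Y\}$, this shows that the population target of the fitted values is exactly $g_{\wt\gamma}$, and by the definition of the pseudo-metric, $\|g_{\wt\gamma}\|_{\calL_2}^2=\|\wt\gamma-\gamma\|_w^2$. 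Consequently, on the set $\{\wt\gamma\in\Gamma:\|\wt\gamma-\gamma\|_w=o_p(n^{-1/4})\}$ the target satisfies $\|g_{\wt\gamma}\|_{\calL_2}^2=o_p(n^{-1/2})$ by construction, which is where the $o_p(n^{-1/2})$ term in the conclusion comes from.

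The main step is to control the gap between the empirical second moment $n^{-1}\sum_i\{\wh{E}(R\wt\gamma-1+R\mid Z_i,A_i,\boverbar{M}_{K,i},Y_i)\}^2$ and $\|g_{\wt\gamma}\|_{\calL_2}^2$, uniformly over $\wt\gamma$. This is precisely the upper-bound half of \Cref{lemma:lemmaC2_of_ChenPouzo2012}: there is a finite constant $C\pprime$ with $n^{-1}\sum_i\{\wh{E}(R\wt\gamma-1+R\mid Z_i,\dots)\}^2\le C\pprime\|g_{\wt\gamma}\|_{\calL_2}^2+O_p(l_n/n+l_n^{-2\alpha_2})$, where the variance contribution $l_n/n$ arises from projecting onto $l_n$ basis functions and the squared sieve-approximation bias $l_n^{-2\alpha_2}$ is furnished by \Cref{ass:supp_sieve_delta_exter}(ii), while the well-conditioning of the empirical Gram matrix is guaranteed by \Cref{ass:supp_sieve_delta_exter}(i) and (iii). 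Equivalently, one can argue directly by splitting $R\wt\gamma-1+R=R\delta+(R\gamma-1+R)$: the series projection of the mean-zero noise $R\gamma-1+R$ contributes $O_p(l_n/n)$ uniformly through the trace of the hat matrix, while the projection of $R\delta$ reproduces $g_{\wt\gamma}$ up to a series estimation error of order $O_p(l_n/n+l_n^{-2\alpha_2})$ plus an empirical-process fluctuation of $n^{-1}\sum_i g_{\wt\gamma}(Z_i,\dots)^2$ around $\|g_{\wt\gamma}\|_{\calL_2}^2$.

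The one point that needs care, and the main obstacle, is that the bound must hold uniformly over the full neighborhood in $\Gamma$, whereas \Cref{lemma:lemmaC2_of_ChenPouzo2012} is phrased over $\Gamma_n$. I would handle this by noting that the empirical-process control underlying that lemma is carried out over Donsker classes indexed by all of $\Gamma$ (the classes $\calO_j$ in its proof), a property secured by the smoothness and boundedness in \Cref{ass:supp_dgp}(i); hence the inequality in fact holds uniformly over every $\wt\gamma\in\Gamma$, and a fortiori over the localizing set. A related subtlety is that the fluctuation of $n^{-1}\sum_i g_{\wt\gamma}(Z_i,\dots)^2$ about $\|g_{\wt\gamma}\|_{\calL_2}^2$ is of order $\{\|g_{\wt\gamma}\|_\infty^2\|g_{\wt\gamma}\|_{\calL_2}^2/n\}^{1/2}$, which is $o_p(n^{-1/2})$ on the set because $\|g_{\wt\gamma}\|_\infty$ is uniformly bounded (by boundedness of $\Gamma$ under $\|\cdot\|_\infty$) and $\|g_{\wt\gamma}\|_{\calL_2}^2=o_p(n^{-1/2})$; making this uniform requires a localized maximal inequality over the small-norm band. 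Combining the two pieces and substituting $\|g_{\wt\gamma}\|_{\calL_2}^2=\|\wt\gamma-\gamma\|_w^2=o_p(n^{-1/2})$ yields the stated rate $O_p(l_n/n+l_n^{-2\alpha_2})+o_p(n^{-1/2})$.
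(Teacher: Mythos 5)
Your proposal is correct and rests on the same two pillars as the paper's proof --- the identity $E[\{E(R\wt\gamma-1+R\mid Z,A,\boverbar{M}_K,Y)\}^2]=\|\wt\gamma-\gamma\|_w^2$, which renders the population term $o_p(n^{-1/2})$ on the localizing set, and a uniform $O_p(l_n/n+l_n^{-2\alpha_2})$ bound on the series-estimation error --- but it packages them differently. The paper does not route the argument through \Cref{lemma:lemmaC2_of_ChenPouzo2012}. Instead it applies the triangle inequality to split $\{\wh{E}(R\wt\gamma-1+R\mid\cdot)\}^2$ into (i) the oracle term $\{E(R\wt\gamma-1+R\mid\cdot)\}^2$, whose empirical average is tied to its expectation by showing that the class of squared conditional means is Lipschitz in $\wt\gamma$ under $\|\cdot\|_\infty$ and hence Donsker by the compactness in \Cref{ass:supp_dgp}; (ii) the error $\{\wh{E}(R\wt\gamma\mid\cdot)-E(R\wt\gamma\mid\cdot)\}^2$, which is $O_p(l_n/n+l_n^{-2\alpha_2})$ uniformly over all of $\Gamma$ by \Cref{lemma:app_delta_exter_bound}; and (iii) the error $\{\wh{E}(1-R\mid\cdot)-E(1-R\mid\cdot)\}^2$, handled by standard series-regression rates. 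This decomposition makes the uniformity over $\Gamma$ explicit and so avoids the one soft spot in your main route: \Cref{lemma:lemmaC2_of_ChenPouzo2012} is stated only over $\Gamma_n$, and although your observation that its empirical-process ingredients (the classes $\calO_j$ and the approximation bound in \Cref{ass:supp_sieve_delta_exter}(ii)) are indexed by all of $\Gamma$ makes the extension plausible, it still requires re-examining the Chen--Pouzo machinery rather than citing the lemma as stated. Likewise, the ``localized maximal inequality'' you flag for the fluctuation of $n^{-1}\sum_i g_{\wt\gamma}^2$ about $\|g_{\wt\gamma}\|_{\calL_2}^2$ is obtained more cheaply in the paper from asymptotic equicontinuity of the Donsker class combined with the vanishing second moment on the small-$\|\cdot\|_w$ band. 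Your alternative ``direct'' split of $R\wt\gamma-1+R$ into $R(\wt\gamma-\gamma)$ plus the conditionally mean-zero residual is essentially the paper's decomposition in disguise, so either of your two routes can be completed; the first buys brevity at the cost of an extension argument, while the paper's (and your second) buys self-containment.
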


\begin{proof}
For any $\wt\gamma\in\Gamma$, by the triangle inequality, we have 
\begin{align} 
\label{eq:pf_lemma_app_delta_1}
&\frac{1}{n}\sum_{i=1}^{n} \{\wh{E}(R\wt\gamma-1+R\mid Z_i,A_i,\boverbar{M}_{K,i},Y_i)\}^2 \nonumber\\
\le&~ \frac{2}{n}\sum_{i=1}^{n} \big\{E(R\wt\gamma-1+R\mid Z_i,A_i,\boverbar{M}_{K,i},Y_i)\big\}^2 \nonumber\\ 
&\quad +\frac{4}{n}\sum_{i=1}^{n} \big\{\wh{E}(R\wt\gamma\mid Z_i,A_i,\boverbar{M}_{K,i},Y_i)-E(R\wt\gamma\mid Z_i,A_i,\boverbar{M}_{K,i},Y_i)\big\}^2 \nonumber\\
&\quad +\frac{4}{n}\sum_{i=1}^{n} \big\{\wh{E}(1-R\mid Z_i,A_i,\boverbar{M}_{K,i},Y_i)-E(1-R\mid Z_i,A_i,\boverbar{M}_{K,i},Y_i)\big\}^2.
\end{align}
Define the class of functions $\calG=\{E(R\wt\gamma-1+R\mid z,a,\boverbar{m}_K,y)^2: \wt\gamma\in\Gamma\}$. Note that for every $\gamma_1,\gamma_2\in\Gamma$, we have 
\beqrs
&& \Abs{E(R\gamma_1-1+R\mid z,a,\boverbar{m}_K,y)^2-E(R\gamma_2-1+R\mid z,a,\boverbar{m}_K,y)^2} \\ 
&\le& \Abs{E\{R(\gamma_1-\gamma_2)\mid z,a,\boverbar{m}_K,y\}}\times \Abs{E\{R(\gamma_1+\gamma_2)-2+2R\mid z,a,\boverbar{m}_K,y\}} \\ 
&\lesssim& \|\gamma_1-\gamma_2\|_\infty.
\eeqrs
Then by \Cref{ass:supp_dgp} and Theorem 2.5.6, 2.7.1 and 2.7.11 of \cite{Vaart_ep_1996}, we conclude $\calG$ is a Donsker class. It follows that 
\begin{align}
\label{eq:pf_lemma_app_delta_2}
\frac{1}{n}\sum_{i=1}^{n} \big\{E(R\wt\gamma-1+R\mid Z_i,A_i,\boverbar{M}_{K,i},Y_i)\big\}^2 - E[\{E(R\wt\gamma-1+R\mid Z,A,\boverbar{M}_K,Y)\}^2] = o_p(n^{-1/2}),
\end{align}
uniformly over $\wt\gamma\in\Gamma$. By the definition of $\|\cdot\|_w$ and \eqref{eq:iden_gamma}, we know 
\beqr\label{eq:pf_lemma_app_delta_3}
E[\{E(R\wt\gamma-1+R\mid Z,A,\boverbar{M}_K,Y)\}^2] = \|\wt\gamma-\gamma\|^2_w.
\eeqr
Combining \eqref{eq:pf_lemma_app_delta_2} and \eqref{eq:pf_lemma_app_delta_3} yields that 
\beqr\label{eq:pf_lemma_app_delta_4}
\frac{1}{n}\sum_{i=1}^{n} \{E(R\wt\gamma-1+R\mid Z_i,A_i,\boverbar{M}_{K,i},Y_i)\}^2 = o_p(n^{-1/2}),
\eeqr
uniformly over $\wt\gamma\in\{\wt\gamma\in\Gamma:\|\wt\gamma-\gamma\|_w=o_p(n^{-1/4})\}$. By \Cref{lemma:app_delta_exter_bound}, we have 
\beqr\label{eq:pf_lemma_app_delta_5}
\frac{1}{n}\sum_{i=1}^{n} \big\{\wh{E}(R\wt\gamma\mid Z_i,A_i,\boverbar{M}_{K,i},Y_i)-E(R\wt\gamma\mid Z_i,A_i,\boverbar{M}_{K,i},Y_i)\big\}^2 = O_p(l_n/n+l_n^{-2\alpha_2}),
\eeqr
uniformly over $\wt\gamma\in\Gamma$. Following \cite{Newey1997}, we have 
\begin{align}
\label{eq:pf_lemma_app_delta_6}
\frac{1}{n}\sum_{i=1}^{n} \big\{\wh{E}(1-R\mid Z_i,A_i,\boverbar{M}_{K,i},Y_i)-E(1-R\mid Z_i,A_i,\boverbar{M}_{K,i},Y_i)\big\}^2 = O_p(l_n/n+l_n^{-2\alpha_2}).
\end{align}
Then the proof is completed by combining \eqref{eq:pf_lemma_app_delta_1}, \eqref{eq:pf_lemma_app_delta_4}--\eqref{eq:pf_lemma_app_delta_6}.
\end{proof}

\begin{lemma}\label{lemma:app_rho}
Suppose \Cref{ass:supp_dgp,ass:supp_sieve_delta_inter,ass:supp_sieve_delta_exter} hold. We have 
\beqrs
\frac{1}{n}\sum_{i=1}^{n} \big\{\wh{E}(R\varrho_n\mid Z_i,A_i,\boverbar{M}_{K,i},Y_i) - E(R\varrho_0\mid Z_i,A_i,\boverbar{M}_{K,i},Y_i)\big\}^2 = O_p(\sqrt{l_n/n}+l_n^{-\alpha_2}+s_n^{-\alpha_1}).
\eeqrs
\end{lemma}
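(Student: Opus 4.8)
The plan is to split the summand into a \emph{series-estimation error} and a \emph{sieve-approximation bias} by inserting the true conditional expectation of $R\varrho_n$. Writing $\bbar{W}_i=(Z_i,A_i,\boverbar{M}_{K,i},Y_i)$ for brevity, I would add and subtract $E(R\varrho_n\mid \bbar{W}_i)$ inside each term and apply the elementary inequality $(a+b)^2\le 2a^2+2b^2$ to obtain
\beqrs
\frac{1}{n}\sum_{i=1}^{n}\big\{\wh{E}(R\varrho_n\mid \bbar{W}_i) - E(R\varrho_0\mid \bbar{W}_i)\big\}^2
&\le& \frac{2}{n}\sum_{i=1}^{n}\big\{\wh{E}(R\varrho_n\mid \bbar{W}_i) - E(R\varrho_n\mid \bbar{W}_i)\big\}^2 \\
&& {}+ \frac{2}{n}\sum_{i=1}^{n}\big\{E\big(R(\varrho_n-\varrho_0)\mid \bbar{W}_i\big)\big\}^2,
\eeqrs
which reduces the claim to bounding the two pieces separately.

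For the first piece I would invoke \Cref{lemma:app_delta_exter_bound}. The key observation is that $\varrho_n=\Pi_n\varrho_0\in\Gamma_n\subset\Gamma$ by \Cref{ass:supp_sieve_delta_inter}(i), so $\varrho_n$ is an admissible argument for the uniform-over-$\Gamma$ series bound established there; taking $\wt\gamma=\varrho_n$ yields directly $\frac{1}{n}\sum_{i=1}^{n}\{\wh{E}(R\varrho_n\mid \bbar{W}_i) - E(R\varrho_n\mid \bbar{W}_i)\}^2 = O_p(l_n/n + l_n^{-2\alpha_2})$. For the second (bias) piece I would use the conditional Jensen inequality together with $0\le R\le 1$: pointwise, $|E(R(\varrho_n-\varrho_0)\mid \bbar{W}_i)|\le E(|\varrho_n-\varrho_0|\mid \bbar{W}_i)\le \|\varrho_n-\varrho_0\|_\infty$, and $\|\varrho_n-\varrho_0\|_\infty=O(s_n^{-\alpha_1})$ by \Cref{ass:supp_sieve_delta_inter}(ii). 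Hence the second piece is bounded deterministically by $O(s_n^{-2\alpha_1})$.

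Combining the two bounds gives $O_p(l_n/n + l_n^{-2\alpha_2} + s_n^{-2\alpha_1})$, which is in fact sharper than what is claimed. The final step is the routine observation that each of $l_n/n$, $l_n^{-\alpha_2}$ and $s_n^{-\alpha_1}$ tends to zero under \Cref{ass:supp_sieve_delta_inter,ass:supp_sieve_delta_exter}, so every squared term is eventually dominated by its square-root counterpart, whence the derived rate is no larger than the stated $O_p(\sqrt{l_n/n} + l_n^{-\alpha_2} + s_n^{-\alpha_1})$. There is no substantive obstacle here; the only points requiring care are confirming that $\varrho_n$ lies in $\Gamma$ so that the uniform series bound of \Cref{lemma:app_delta_exter_bound} applies, and that the Jensen step correctly converts the $\|\cdot\|_\infty$ approximation rate of the projection operator $\Pi_n$ into a bound on the conditional-mean bias.
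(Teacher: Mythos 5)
Your proposal is correct and follows essentially the same route as the paper's proof: the identical add-and-subtract decomposition via $(a+b)^2\le 2a^2+2b^2$, the bias term controlled by $\|\varrho_n-\varrho_0\|_\infty^2=O(s_n^{-2\alpha_1})$ from \Cref{ass:supp_sieve_delta_inter}(ii), and the estimation term controlled by applying \Cref{lemma:app_delta_exter_bound} to $\varrho_n\in\Gamma_n\subset\Gamma$. Your observation that the resulting bound $O_p(l_n/n+l_n^{-2\alpha_2}+s_n^{-2\alpha_1})$ is sharper than the stated rate is also consistent with what the paper actually derives before loosening to the displayed claim.
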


\begin{proof}
By Cauchy Schwarz inequality, we have 
\begin{align}
\label{eq:pf_lemma_app_rho_1}
&\frac{1}{n}\sum_{i=1}^{n} \big\{\wh{E}(R\varrho_n\mid Z_i,A_i,\boverbar{M}_{K,i},Y_i) - E(R\varrho_0\mid Z_i,A_i,\boverbar{M}_{K,i},Y_i)\big\}^2 \nonumber\\
\le&~ \frac{2}{n}\sum_{i=1}^{n} \big\{\wh{E}(R\varrho_n\mid Z_i,A_i,\boverbar{M}_{K,i},Y_i)-E(R\varrho_n\mid Z_i,A_i,\boverbar{M}_{K,i},Y_i)\big\}^2 \nonumber\\ 
&\quad + \frac{2}{n}\sum_{i=1}^{n} \big[E\{R(\varrho_n-\varrho_0)\mid Z_i,A_i,\boverbar{M}_{K,i},Y_i\}\big]^2.
\end{align}
By the definition of $\varrho_n=\Pi_n\varrho_0$ and \Cref{ass:supp_sieve_delta_inter}(ii), we have
\beqr\label{eq:pf_lemma_app_rho_2}
\frac{2}{n}\sum_{i=1}^{n} \big[E\{R(\varrho_n-\varrho_0)\mid Z_i,A_i,\boverbar{M}_{K,i},Y_i\}\big]^2 \lesssim \|\varrho_n-\varrho_0\|^2_\infty =O_p(s_n^{-2\alpha_1}).
\eeqr
By \Cref{lemma:app_delta_exter_bound}, $\varrho_n\in\Gamma_n\subset\Gamma$, and \cref{ass:supp_sieve_delta_exter}, we have 
\begin{align}
\label{eq:pf_lemma_app_rho_3}
\frac{1}{n}\sum_{i=1}^{n} \big\{\wh{E}(R\varrho_n\mid Z_i,A_i,\boverbar{M}_{K,i},Y_i)-E(R\varrho_n\mid Z_i,A_i,\boverbar{M}_{K,i},Y_i)\big\}^2 = O_p(l_n/n+l_n^{-2\alpha_2}).
\end{align}
Combining \eqref{eq:pf_lemma_app_rho_1}, \eqref{eq:pf_lemma_app_rho_2} and \eqref{eq:pf_lemma_app_rho_3} yields the result of the lemma.
\end{proof}

\begin{lemma}\label{lemma:app_rho_hat}
Suppose \Cref{ass:supp_dgp,ass:supp_sieve_delta_exter} hold. We have 
\begin{align*}
&\frac{1}{n}\sum_{i=1}^n \{\wh{E}^{(d)}(R\varrho_0\mid Z_i,A_i,\boverbar{M}_{K,i},Y_i) - E(R\varrho_0\mid Z_i,A_i,\boverbar{M}_{K,i},Y_i)\} \{R_i\wt\gamma(X_i,A_i,\boverbar{M}_{K,i},Y_i)-1+R_i\}\\
&\qquad =O_p\{l_n^{-\alpha_2}(\sqrt{l_n/n}+l_n^{-\alpha_2})\}+o_p(n^{-1/2}), 
\end{align*}
uniformly over $\wt\gamma\in\{\wt\gamma\in\Gamma:\|\wt\gamma-\gamma\|_w=o_p(n^{-1/4})\}$, where the notation $\wh{E}^{(d)}$ is defined in \eqref{eq:app_delta_5}.
\end{lemma}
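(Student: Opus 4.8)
The plan is to recognize that $\wh E^{(d)}(R\varrho_0\mid\cdot)$ is nothing but the (infeasible) series projection of the \emph{fixed} function $g_0(W)\triangleq E(R\varrho_0\mid Z,A,\boverbar{M}_K,Y)$, where $W=(Z,A,\boverbar{M}_K,Y)$, onto $\mathrm{span}(\bbar{p}_{l_n})$ at the sample points; since the "response" $g_0$ carries no estimation noise, the residual is purely an approximation error of the order of the sieve bias of $\Gamma$. I would first apply \Cref{ass:supp_sieve_delta_exter}(ii) with $\wt\gamma=\varrho_0\in\Gamma$ to obtain $\bm\pi_{\varrho_0}\in\mR^{l_n}$ with $\|e\|_\infty=O(l_n^{-\alpha_2})$, where $e(W)\triangleq g_0(W)-\bbar{p}_{l_n}(W)\trans\bm\pi_{\varrho_0}$. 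Because $\bbar{p}_{l_n}\trans\bm\pi_{\varrho_0}$ lies in the span reproduced by the projection matrix $\bm H=\P(\P\trans\P)^{-1}\P\trans$, the series residual of $g_0$ coincides with that of $e$: writing $\bm e=(e(W_1),\dots,e(W_n))\trans$, one has $g_0(W_i)-\wh E^{(d)}(R\varrho_0\mid W_i)=\{(\bm I-\bm H)\bm e\}_i$, so the quantity of interest equals $n^{-1}\bm\xi\trans(\bm I-\bm H)\bm e$ with $\xi_i\triangleq R_i\wt\gamma_i-1+R_i$.

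Next I would split $\bm\xi=\bm\eta+\bm u$ into its conditional mean $\eta_i\triangleq E(R\wt\gamma-1+R\mid W_i)$ and a conditionally centered part $u_i\triangleq\xi_i-\eta_i$. The moment identity $E(R\gamma-1+R\mid W)=0$ from \Cref{sec:app_identification} gives $\eta_i=E\{R(\wt\gamma-\gamma)\mid W_i\}$, whence $n^{-1}\sum_i\eta_i^2=O_p(\|\wt\gamma-\gamma\|_w^2)$ by the very definition of $\|\cdot\|_w$. For the $\eta$-part, using that $\bm I-\bm H$ is a contraction together with Cauchy--Schwarz,
\[
\Big|\tfrac1n\bm\eta\trans(\bm I-\bm H)\bm e\Big|\le \tfrac1{\sqrt n}\|\bm\eta\|\cdot\tfrac1{\sqrt n}\|\bm e\|=O_p(\|\wt\gamma-\gamma\|_w)\,O_p(l_n^{-\alpha_2})=o_p(n^{-1/4})\,O_p(l_n^{-\alpha_2}),
\]
which by the elementary inequality $n^{-1/4}l_n^{-\alpha_2}\le\tfrac12(n^{-1/2}+l_n^{-2\alpha_2})$ is absorbed into the target $o_p(n^{-1/2})+O_p(l_n^{-2\alpha_2})$.

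For the centered part I would write $n^{-1}\bm u\trans(\bm I-\bm H)\bm e=n^{-1}\bm u\trans\bm e-(n^{-1}\P\trans\bm u)\trans(n^{-1}\P\trans\P)^{-1}(n^{-1}\P\trans\bm e)$ and control the three sample moments separately. The term $n^{-1}\bm u\trans\bm e=n^{-1}\sum_i u_ie(W_i)$ is conditionally mean zero with variance $\le n^{-1}\|e\|_\infty^2 E(u^2)=O(l_n^{-2\alpha_2}/n)$, hence $O_p(l_n^{-\alpha_2}n^{-1/2})$. For the cross term, \Cref{ass:supp_sieve_delta_exter}(i) gives $(n^{-1}\P\trans\P)^{-1}=O_p(1)$; a standard series variance computation using $E\|\bbar{p}_{l_n}\|^2=\mathrm{tr}(E[\bbar{p}_{l_n}\bbar{p}_{l_n}\trans])=O(l_n)$ and the uniform boundedness of $\wt\gamma$ (hence of $E(u^2\mid W)$) from \Cref{ass:supp_dgp}(i) yields $\|n^{-1}\P\trans\bm u\|=O_p(\sqrt{l_n/n})$; and the sharp operator-norm bound $\|E[\bbar{p}_{l_n}e]\|\le\lambda_{\max}^{1/2}(E[\bbar{p}_{l_n}\bbar{p}_{l_n}\trans])\,\|e\|_{\calL_2}=O(l_n^{-\alpha_2})$, together with a negligible variance term, gives $\|n^{-1}\P\trans\bm e\|=O_p(l_n^{-\alpha_2})$. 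Multiplying these yields the leading order $O_p(l_n^{-\alpha_2}\sqrt{l_n/n})$, and collecting all pieces gives the claim.

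The main obstacle I anticipate is making each bound hold \emph{uniformly} over the shrinking set $\{\wt\gamma\in\Gamma:\|\wt\gamma-\gamma\|_w=o_p(n^{-1/4})\}$, since $\bm\xi$, $\bm\eta$, and $\bm u$ all depend on $\wt\gamma$. I would handle this exactly as in the proofs of \Cref{lemma:lemmaC2_of_ChenPouzo2012,lemma:app_delta}: the dependence on $\wt\gamma$ enters linearly through $R\wt\gamma$, and by \Cref{ass:supp_dgp} the class $\{r\wt\gamma(x,a,\boverbar{m}_K,y):\wt\gamma\in\Gamma\}$ is Lipschitz in $\wt\gamma$ under $\|\cdot\|_\infty$ and Donsker by Theorems 2.5.6, 2.7.1 and 2.7.11 of \cite{Vaart_ep_1996}; the accompanying maximal inequalities upgrade the pointwise variance bounds (in particular $\sup_{\wt\gamma}\|n^{-1}\P\trans\bm u\|=O_p(\sqrt{l_n/n})$ and the uniform law for $n^{-1}\sum_i\eta_i^2$) to hold uniformly. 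The one point requiring genuine care beyond the earlier lemmas is the sharp $O(l_n^{-\alpha_2})$ bound on $\|E[\bbar{p}_{l_n}e]\|$---a crude coordinatewise bound gives only $O(\sqrt{l_n}\,l_n^{-\alpha_2})$---which must be obtained through the operator-norm argument above so that the cross term achieves the advertised rate $l_n^{-\alpha_2}\sqrt{l_n/n}$ rather than a slower one.
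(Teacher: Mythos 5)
Your proposal is correct and follows essentially the same route as the paper's proof: you split $R_i\wt\gamma_i-1+R_i$ into its conditional mean $\eta_i=E(R\wt\gamma-1+R\mid W_i)$ and the centered remainder $u_i$, absorb the $\eta$-part via Cauchy--Schwarz using $\tfrac1n\sum_i\eta_i^2=o_p(n^{-1/2})$ (the paper's \eqref{eq:pf_lemma_app_delta_4}) together with the $O_p(l_n^{-2\alpha_2})$ bound on the series residual of $g_0$, and control the $u$-part through the same three sample moments ($n^{-1}\bm u\trans\bm e$, $n^{-1}\P\trans\bm u$, $n^{-1}\P\trans\bm e$) with Donsker-class uniformity over $\Gamma$. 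Your projection-matrix identity $(\bm I-\bm H)\bm e$ is just a repackaging of the paper's coefficient bound $\|\wh{\bm\pi}_g-{\bm\pi}_g\|=O_p(l_n^{-\alpha_2})$, and your operator-norm bound on $\|E[\bbar{p}_{l_n}e]\|$ plays the same role, so the argument matches the paper's in substance and in the resulting rates.
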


\begin{proof}
For notational simplicity, we denote $g(Z,A,\boverbar{M}_K,Y)\triangleq E(R\varrho_0\mid Z,A,\boverbar{M}_K,Y)$ and $\wh{g}(Z,A,\boverbar{M}_K,Y)\triangleq \wh{E}^{(d)}(R\varrho_0\mid Z,A,\boverbar{M}_K,Y)$. 
By \Cref{ass:supp_sieve_delta_exter}(ii), there exists $g_n(z,a,\boverbar{m}_K,y)=\bbar{p}_{l_n}(z,a,\boverbar{m}_K,y)\trans{\bm\pi}_g$ such that 
\beqr\label{eq:lemma_app_rho_hat_pf1}
\sup_{(z,a,\boverbar{m}_K,y)\in\calZ\times\{0,1\}\times\boverbar\calM_K\times\calY} \Abs{g(z,a,\boverbar{m}_K,y)-g_n(z,a,\boverbar{m}_K,y)} = O(l_n^{-\alpha_2}).
\eeqr
Note that $\wh{g}(z,a,\boverbar{m}_K,y)=\bbar{p}_{l_n}(z,a,\boverbar{m}_K,y)\trans\wh{\bm\pi}_g$ with 
\begin{small}
\begin{align}
\wh{\bm\pi}_g = \Big\{\sum_{i=1}^n \bbar{p}_{l_n}(Z_i,A_i,\boverbar{M}_{K,i},Y_i) \bbar{p}_{l_n}(Z_i,A_i,\boverbar{M}_{K,i},Y_i)\trans\Big\}^{-1}\sum_{i=1}^n \bbar{p}_{l_n}(Z_i,A_i,\boverbar{M}_{K,i},Y_i)g(Z_i,A_i,\boverbar{M}_{K,i},Y_i), \label{eq:lemma_app_rho_hat_pf2}
\end{align}
\end{small}
Then by \Cref{ass:supp_sieve_delta_exter}(i), with probability approaching one 
$$\Big\{\sum_{i=1}^n \bbar{p}_{l_n}(Z_i,A_i,\boverbar{M}_{K,i},Y_i) \bbar{p}_{l_n}(Z_i,A_i,\boverbar{M}_{K,i},Y_i)\trans\Big\}$$ is invertible and
\begin{align}
\|\wh{\bm\pi}_g-{\bm\pi}_g\| \lesssim \sup_{(z,a,\boverbar{m}_K,y)\in\calZ\times\{0,1\}\times\boverbar\calM_K\times\calY} \Abs{g(z,a,\boverbar{m}_K,y)-g_n(z,a,\boverbar{m}_K,y)} = O_p(l_n^{-\alpha_2}). \label{eq:lemma_app_rho_hat_pf4}
\end{align}
It follows that 
\begin{align}
\label{eq:lemma_app_rho_hat_pf6}
&\frac{1}{n}\sum_{i=1}^{n} \bigbc{\wh{g}(Z_i,A_i,\boverbar{M}_{K,i},Y_i) - g(Z_i,A_i,\boverbar{M}_{K,i},Y_i)}^2 \nonumber\\ 
\lesssim&~ \|\wh{\bm\pi}_g-{\bm\pi}_g\|^2 + \sup_{(z,a,\boverbar{m}_K,y)\in\calZ\times\{0,1\}\times\boverbar\calM_K\times\calY} \Abs{g(z,a,\boverbar{m}_K,y)-g_n(z,a,\boverbar{m}_K,y)}^2=O_p(l_n^{-2\alpha_2}) 
\end{align}
Note that 
\begin{align}
\label{eq:lemma_app_rho_hat_pf7}
&\frac{1}{n}\sum_{i=1}^n \left\{\wh{g}(Z_i,A_i,\boverbar{M}_{K,i},Y_i) - g(Z_i,A_i,\boverbar{M}_{K,i},Y_i)\right\} (R_i\wt\gamma_i-1+R_i)\nonumber\\ 
=&~ \frac{1}{n}\sum_{i=1}^n \left\{\wh{g}(Z_i,A_i,\boverbar{M}_{K,i},Y_i) - g(Z_i,A_i,\boverbar{M}_{K,i},Y_i)\right\} E(R\wt\gamma-1+R\mid Z_i,A_i,\boverbar{M}_{K,i},Y_i)\nonumber\\ 
&+ \frac{1}{n}\sum_{i=1}^n \left\{\wh{g}(Z_i,A_i,\boverbar{M}_{K,i},Y_i) - g(Z_i,A_i,\boverbar{M}_{K,i},Y_i)\right\} \nonumber\\ 
&\qquad\qquad\quad\times\{R_i\wt\gamma_i-1+R_i-E(R\wt\gamma-1+R\mid Z_i,A_i,\boverbar{M}_{K,i},Y_i)\}.
\end{align}
Combining \eqref{eq:pf_lemma_app_delta_4} and \eqref{eq:lemma_app_rho_hat_pf6}, by Cauchy-Schwarz inequality, yields that 
\beqr\label{eq:lemma_app_rho_hat_pf8}
&& \frac{1}{n}\sum_{i=1}^n \left\{\wh{g}(Z_i,A_i,\boverbar{M}_{K,i},Y_i) - g(Z_i,A_i,\boverbar{M}_{K,i},Y_i)\right\} E(R\wt\gamma-1+R\mid Z_i,A_i,\boverbar{M}_{K,i},Y_i)\nonumber\\ 
&&\quad = o_p(n^{-1/2}) + O_p(l_n^{-2\alpha_2}),
\eeqr
uniformly over $\wt\gamma\in\{\wt\gamma\in\Gamma:\|\wt\gamma-\gamma\|_w=o_p(n^{-1/4})\}$.
Note that 
\begin{small}
\begin{align}
\label{eq:lemma_app_rho_hat_pf9}
&\Abs{\frac{1}{n}\sum_{i=1}^n \left\{\wh{g}(Z_i,A_i,\boverbar{M}_{K,i},Y_i) - g(Z_i,A_i,\boverbar{M}_{K,i},Y_i)\right\} \{R_i\wt\gamma_i-1+R_i-E(R\wt\gamma-1+R\mid Z_i,A_i,\boverbar{M}_{K,i},Y_i)\}} \nonumber\\ 
\le&~ \left\|\wh{\bm\pi}_g-{\bm\pi}_g\right\| \Big\| \frac{1}{n}\sum_{i=1}^n \bbar{p}_{l_n}(Z_i,A_i,\boverbar{M}_{K,i},Y_i) \{R_i\wt\gamma_i-1+R_i-E(R\wt\gamma-1+R\mid Z_i,A_i,\boverbar{M}_{K,i},Y_i)\} \Big\| \nonumber\\
&+ \Abs{\frac{1}{n}\sum_{i=1}^n \left\{g(Z_i,A_i,\boverbar{M}_{K,i},Y_i) - g_n(Z_i,A_i,\boverbar{M}_{K,i},Y_i)\right\} \{R_i\wt\gamma_i-1+R_i-E(R\wt\gamma-1+R\mid Z_i,A_i,\boverbar{M}_{K,i},Y_i)\}}.
\end{align}
\end{small}
By \Cref{ass:supp_dgp} and Theorem 2.5.6, 2.7.1 and 2.7.11 of \cite{Vaart_ep_1996}, we conclude $\calG_j=\{tp_j(z,a,\boverbar{m}_K,y)\{r\wt\gamma-E(R\wt\gamma\mid z,a,\boverbar{m}_K,y)\}:\wt\gamma\in\Gamma\}$, $j=1,\ldots,l_n$, are Donsker. It implies that
\begin{align*}
&&\frac{1}{n}\sum_{i=1}^n p_j(M_i,Y_i,Z_i) \{R_i\wt\gamma_i-1+R_i-E(R\wt\gamma-1+R\mid Z_i,A_i,\boverbar{M}_{K,i},Y_i)\} = O_p(n^{-1/2}),
\end{align*}
uniformly over $\wt\gamma\in\Gamma$. Then we obtain 
\begin{small}
\begin{align}
\label{eq:lemma_app_rho_hat_pf11}
\Big\| \frac{1}{n}\sum_{i=1}^n \bbar{p}_{l_n}(Z_i,A_i,\boverbar{M}_{K,i},Y_i) \{R_i\wt\gamma_i-1+R_i-E(R\wt\gamma-1+R\mid Z_i,A_i,\boverbar{M}_{K,i},Y_i)\} \Big\| = O_p(\sqrt{l_n/n}),
\end{align}
\end{small}
uniformly over $\wt\gamma\in\Gamma$. By Chebyshev's inequality, we obtain 
\begin{small}
\begin{align}
\label{eq:lemma_app_rho_hat_pf13}
&\Abs{\frac{1}{n}\sum_{i=1}^n \left\{g(Z_i,A_i,\boverbar{M}_{K,i},Y_i) - g_n(Z_i,A_i,\boverbar{M}_{K,i},Y_i)\right\} \{R_i\wt\gamma_i-1+R_1-E(R\wt\gamma-1+R\mid Z_i,A_i,\boverbar{M}_{K,i},Y_i)\}} \nonumber\\ 
&\quad = O_p(n^{-1/2}l_n^{-\alpha_2}).
\end{align}
\end{small}
Combining \eqref{eq:lemma_app_rho_hat_pf7}, \eqref{eq:lemma_app_rho_hat_pf11} and \eqref{eq:lemma_app_rho_hat_pf13} yields that
\begin{small}
\begin{align}
\label{eq:lemma_app_rho_hat_pf14}
&\Abs{\frac{1}{n}\sum_{i=1}^n \left\{\wh{g}(Z_i,A_i,\boverbar{M}_{K,i},Y_i) - g(Z_i,A_i,\boverbar{M}_{K,i},Y_i)\right\} \{R_i\wt\gamma_i-1+R_1-E(R\wt\gamma-1+R\mid Z_i,A_i,\boverbar{M}_{K,i},Y_i)\}}\nonumber\\ 
&\quad = O_p(l_n^{-\alpha_2}\sqrt{l_n/n}),
\end{align}
\end{small}
uniformly over $\wt\gamma\in\Gamma$. Combining \eqref{eq:lemma_app_rho_hat_pf7}, \eqref{eq:lemma_app_rho_hat_pf8} and \eqref{eq:lemma_app_rho_hat_pf14} yields the result of the lemma.
\end{proof}

\begin{lemma}\label{lemma:app_delta_exter_bound}
Suppose \Cref{ass:supp_dgp,ass:supp_sieve_delta_exter} hold. Then we have 
\beqrs
&&\sup_{(z,a,\boverbar{m}_K,y)\in\calZ\times\{0,1\}\times\boverbar\calM_K\times\calY} \Abs{\wh{E}(R\wt\gamma\mid z,a,\boverbar{m}_K,y)- E(R\wt\gamma\mid z,a,\boverbar{m}_K,y) } \\ 
&&\qquad\qquad\qquad\qquad\qquad\qquad\qquad\qquad= O_p( \zeta_{1n}(\sqrt{l_n/n}+l_n^{-\alpha_2})+l_n^{-\alpha_2}),
\eeqrs
and 
\beqrs
\frac{1}{n}\sum_{i=1}^{n} \bigbc{\wh{E}(R\wt\gamma\mid Z_i,A_i,\boverbar{M}_{K,i},Y_i)- E(R\wt\gamma\mid Z_i,A_i,\boverbar{M}_{K,i},Y_i) }^2 = O_p( l_n/n + l_n^{-2\alpha_2}),
\eeqrs
uniformly over $\wt\gamma\in\Gamma$.

\end{lemma}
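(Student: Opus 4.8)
The plan is to read $\wh{E}(R\wt\gamma\mid\cdot)$ as a series least-squares estimator of $g_{\wt\gamma}(z,a,\boverbar{m}_K,y)\triangleq E(R\wt\gamma\mid z,a,\boverbar{m}_K,y)$ with regressand $R\wt\gamma$, and to establish a \emph{uniform-over-$\Gamma$} version of the rates of \cite{Newey1997}. Writing $\wh{E}(R\wt\gamma\mid\cdot)=\bbar{p}_{l_n}(\cdot)\trans\wh{\bm\pi}_{\wt\gamma}$ with $\wh{\bm\pi}_{\wt\gamma}=(\P\trans\P)^{-1}\P\trans(R_1\wt\gamma_1,\ldots,R_n\wt\gamma_n)\trans$, and letting $\bm\pi_{\wt\gamma}$ be the sup-norm approximating coefficient from \Cref{ass:supp_sieve_delta_exter}(ii), I would first record the split
$$\wh{E}(R\wt\gamma\mid\cdot)-g_{\wt\gamma}(\cdot)=\bbar{p}_{l_n}(\cdot)\trans(\wh{\bm\pi}_{\wt\gamma}-\bm\pi_{\wt\gamma})+\{\bbar{p}_{l_n}(\cdot)\trans\bm\pi_{\wt\gamma}-g_{\wt\gamma}(\cdot)\},$$
whose second (bias) term is $O(l_n^{-\alpha_2})$ uniformly in its argument and in $\wt\gamma\in\Gamma$ by \Cref{ass:supp_sieve_delta_exter}(ii). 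Both displayed conclusions then reduce to a single uniform bound $\|\wh{\bm\pi}_{\wt\gamma}-\bm\pi_{\wt\gamma}\|=O_p(\sqrt{l_n/n}+l_n^{-\alpha_2})$: the supremum bound follows by multiplying the first term by $\zeta_{1n}=\sup\|\bbar{p}_{l_n}\|$ and adding the bias, and the empirical-$\calL_2$ bound follows from $\tfrac1n\sum_i\{\bbar{p}_{l_n}(\cdot_i)\trans(\wh{\bm\pi}_{\wt\gamma}-\bm\pi_{\wt\gamma})\}^2\lesssim\|\wh{\bm\pi}_{\wt\gamma}-\bm\pi_{\wt\gamma}\|^2$, where by \Cref{ass:supp_sieve_delta_exter}(i),(iii) the matrix $\P\trans\P/n$ has eigenvalues bounded above and away from zero with probability approaching one.

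To control the coefficient error I would pass through the population $\calL_2$-projection coefficient $\bm\pi^*_{\wt\gamma}=\{E(\bbar{p}_{l_n}\bbar{p}_{l_n}\trans)\}^{-1}E(\bbar{p}_{l_n}g_{\wt\gamma})$, for which $E\{\bbar{p}_{l_n}(g_{\wt\gamma}-\bbar{p}_{l_n}\trans\bm\pi^*_{\wt\gamma})\}=0$ exactly and whose $\calL_2$ approximation error is no larger than the sup error, hence $O(l_n^{-\alpha_2})$. Then
$$\|\wh{\bm\pi}_{\wt\gamma}-\bm\pi^*_{\wt\gamma}\|\lesssim\Big\|\frac1n\sum_{i=1}^n\bbar{p}_{l_n}(\cdot_i)\{R_i\wt\gamma_i-\bbar{p}_{l_n}(\cdot_i)\trans\bm\pi^*_{\wt\gamma}\}\Big\|,$$
and I would split the right-hand side into a conditionally mean-zero residual part $e_{\wt\gamma,i}=R_i\wt\gamma_i-g_{\wt\gamma}(\cdot_i)$ and a projection-error part $a^*_{\wt\gamma,i}=g_{\wt\gamma}(\cdot_i)-\bbar{p}_{l_n}(\cdot_i)\trans\bm\pi^*_{\wt\gamma}$. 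Finally $\|\wh{\bm\pi}_{\wt\gamma}-\bm\pi_{\wt\gamma}\|\le\|\wh{\bm\pi}_{\wt\gamma}-\bm\pi^*_{\wt\gamma}\|+\|\bm\pi^*_{\wt\gamma}-\bm\pi_{\wt\gamma}\|$, and the last difference is $O(l_n^{-\alpha_2})$ because both coefficients approximate $g_{\wt\gamma}$ in $\calL_2$ to that order and the eigenvalues are bounded below.

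The residual part is where the uniformity over the infinite-dimensional $\Gamma$ does the real work, and I regard it as the main obstacle. The target is $\|\tfrac1n\sum_i\bbar{p}_{l_n}(\cdot_i)e_{\wt\gamma,i}\|=O_p(\sqrt{l_n/n})$ uniformly over $\wt\gamma\in\Gamma$, whose coordinates are $\tfrac1n\sum_i p_j(\cdot_i)e_{\wt\gamma,i}$. Since $e_{\wt\gamma}$ is linear, hence Lipschitz, in $\wt\gamma$ under $\|\cdot\|_\infty$, and $\Gamma\subset C_B^{\alpha}$ with $\alpha>(d_x+K+1)/2$ is Donsker on the compact support by \Cref{ass:supp_dgp} (via Theorems 2.5.6, 2.7.1 and 2.7.11 of \cite{Vaart_ep_1996}), each class $\{p_j(\cdot)e_{\wt\gamma}(\cdot):\wt\gamma\in\Gamma\}$ is Donsker with an integrable envelope, so $\tfrac1n\sum_i p_j(\cdot_i)e_{\wt\gamma,i}=O_p(n^{-1/2})$ uniformly in $\wt\gamma$; summing the $l_n$ squared coordinates via a maximal inequality, exactly as in \eqref{eq:lemma_app_rho_hat_pf11}, yields the $O_p(\sqrt{l_n/n})$ rate without losing a factor of $\sqrt{l_n}$. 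The projection-error part is negligible: because $E\{\bbar{p}_{l_n}a^*_{\wt\gamma}\}=0$ and $\|\bbar{p}_{l_n}\|\le\zeta_{1n}$, its second moment is $O(\zeta_{1n}^2l_n^{-2\alpha_2}/n)=o(l_n^{-2\alpha_2})$ under $\zeta_{1n}^2l_n/n=o(1)$, again uniformly by the same Donsker device applied to $\{p_j a^*_{\wt\gamma}:\wt\gamma\in\Gamma\}$. Collecting the residual bound, the negligible projection-error bound, and the coefficient-matching bound gives $\|\wh{\bm\pi}_{\wt\gamma}-\bm\pi_{\wt\gamma}\|=O_p(\sqrt{l_n/n}+l_n^{-\alpha_2})$ uniformly, and substituting back delivers the two stated rates.
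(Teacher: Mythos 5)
Your proposal is correct and follows essentially the same route as the paper's proof: both read $\wh{E}(R\wt\gamma\mid\cdot)$ as a series least-squares estimator, split the error into a coefficient term scaled by $\zeta_{1n}$ (resp.\ the design eigenvalues) plus the sieve bias from \Cref{ass:supp_sieve_delta_exter}(ii), and control $\|\wh{\bm\pi}_{\wt\gamma}-\bm\pi_{\wt\gamma}\|$ uniformly over $\Gamma$ via the Newey (1997) argument combined with the Donsker property of the coordinate classes $\{p_j(\cdot)\,r\wt\gamma:\wt\gamma\in\Gamma\}$. The only difference is that you spell out the intermediate population $\calL_2$-projection coefficient, which the paper absorbs into its citation of Newey; this is a presentational refinement, not a different argument.
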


\begin{proof}
For any $\wt\gamma\in\Gamma$, let
\beqrs 
\wh{\bm\pi}_{\wt\gamma}=(\P\trans \P)^{-1} \sum_{i=1}^{n} \bbar{p}_{l_n}(Z_i,A_i,\boverbar{M}_{K,i},Y_i)R_i\wt\gamma(X_i,A_i,\boverbar{M}_{K,i},Y_i).
\eeqrs
Then $\wh{E}(R\wt\gamma\mid z,a,\boverbar{m}_K,y) = \bbar{p}_{l_n}(z,a,\boverbar{m}_K,y)\trans \wh{\bm\pi}_{\wt\gamma}$. By \Cref{ass:supp_sieve_delta_exter}(ii), there exists ${\bm\pi}_{\wt\gamma}\in\mR^{l_n}$ such that 
\begin{align}
\label{eq:supp_pf_delta_exter_bound_1}
\sup_{(z,a,\boverbar{m}_K,y)\in\calZ\times\{0,1\}\times\boverbar\calM_K\times\calY} \Abs{E(R\wt\gamma\mid z,a,\boverbar{m}_K,y) - \bbar{p}_{l_n}(z,a,\boverbar{m}_K,y)\trans {\bm\pi}_{\wt\gamma} } = O(l_n^{-\alpha_2}),
\end{align}
uniformly over $\wt\gamma\in\Gamma$. By the triangle inequality and \Cref{ass:supp_sieve_delta_exter}(i)(iii), it follows 
\begin{align}
\label{eq:supp_pf_delta_exter_bound_2.1}
&\sup_{(z,a,\boverbar{m}_K,y)\in\calZ\times\{0,1\}\times\boverbar\calM_K\times\calY} \Abs{\wh{E}(R\wt\gamma\mid z,a,\boverbar{m}_K,y)- E(R\wt\gamma\mid z,a,\boverbar{m}_K,y) } \nonumber\\ 
\le&~ \|\wh{\bm\pi}_{\wt\gamma}-{\bm\pi}_{\wt\gamma} \|\zeta_{1n}
+ \sup_{(z,a,\boverbar{m}_K,y)\in\calZ\times\{0,1\}\times\boverbar\calM_K\times\calY} \Abs{E(R\wt\gamma\mid z,a,\boverbar{m}_K,y) - \bbar{p}_{l_n}(z,a,\boverbar{m}_K,y)\trans {\bm\pi}_{\wt\gamma} },
\end{align}
and 
\begin{align}
\label{eq:supp_pf_delta_exter_bound_2.2}
&\frac{1}{n}\sum_{i=1}^{n} \bigbc{\wh{E}(R\wt\gamma\mid Z_i,A_i,\boverbar{M}_{K,i},Y_i)- E(R\wt\gamma\mid Z_i,A_i,\boverbar{M}_{K,i},Y_i) }^2 \nonumber\\ 
\lesssim&~ \|\wh{\bm\pi}_{\wt\gamma}-{\bm\pi}_{\wt\gamma} \|^2 \mbox{Eig}_{\max}(E\{\bbar{p}_{l_n}(Z,A,\boverbar{M}_K,Y)\bbar{p}_{l_n}(Z,A,\boverbar{M}_K,Y)\trans\}) \nonumber\\
&+ \sup_{(z,a,\boverbar{m}_K,y)\in\calZ\times\{0,1\}\times\boverbar\calM_K\times\calY} \Abs{E(R\wt\gamma\mid z,a,\boverbar{m}_K,y) - \bbar{p}_{l_n}(z,a,\boverbar{m}_K,y)\trans {\bm\pi}_{\wt\gamma}}^2.
\end{align}
Following \cite{Newey1997}, by \Cref{ass:supp_sieve_delta_exter}(i), with probability approaching one $(\P\trans \P)/n$ is invertible and 
\begin{align}
\label{eq:supp_pf_delta_exter_bound_3}
&\|\wh{\bm\pi}_{\wt\gamma}-{\bm\pi}_{\wt\gamma} \| \lesssim \sup_{(z,a,\boverbar{m}_K,y)\in\calZ\times\{0,1\}\times\boverbar\calM_K\times\calY} \Abs{E(R\wt\gamma\mid z,a,\boverbar{m}_K,y) - \bbar{p}_{l_n}(z,a,\boverbar{m}_K,y)\trans {\bm\pi}_{\wt\gamma}} \nonumber\\ 
&\quad+ \Big\| \frac{1}{n}\sum_{i=1}^{n} \bbar{p}_{l_n}(Z_i,A_i,\boverbar{M}_{K,i},Y_i)\left\{R_i\wt\gamma(X_i,A_i,\boverbar{M}_{K,i},Y_i)-E(R\wt\gamma\mid Z_i,A_i,\boverbar{M}_{K,i},Y_i)\right\} \Big\|.
\end{align}
uniformly over $\wt\gamma\in\Gamma$. Note that $p_j(z,a,\boverbar{m}_K,y)r\wt\gamma$ is linear and hence Lipschitz continuous in $\wt\gamma\in\Gamma$ under $\|\cdot\|_\infty$. Then by \Cref{ass:supp_dgp} and Theorem 2.5.6, 2.7.1 and 2.7.11 of \cite{Vaart_ep_1996}, we conclude $\calG_j=\{p_j(z,a,\boverbar{m}_K,y)r\wt\gamma: \wt\gamma\in\Gamma\}$, $j=1,\ldots,l_n$ is Donsker. It follows that 
\begin{small}
\begin{align}
\label{eq:supp_pf_delta_exter_bound_4}
\Big\| \frac{1}{n}\sum_{i=1}^{n} T_i \bbar{p}_{l_n}(Z_i,A_i,\boverbar{M}_{K,i},Y_i)\left\{R_i\wt\gamma(X_i,A_i,\boverbar{M}_{K,i},Y_i)-E(R\wt\gamma\mid Z_i,A_i,\boverbar{M}_{K,i},Y_i)\right\} \Big\| = O_p( \sqrt{l_n/n}),
\end{align}
\end{small}
uniformly over $\wt\gamma\in\Gamma$. Combining \eqref{eq:supp_pf_delta_exter_bound_1}, \eqref{eq:supp_pf_delta_exter_bound_2.1}, \eqref{eq:supp_pf_delta_exter_bound_3} and \eqref{eq:supp_pf_delta_exter_bound_4} yields
\beqrs
&&\sup_{(z,a,\boverbar{m}_K,y)\in\calZ\times\{0,1\}\times\boverbar\calM_K\times\calY} \Abs{\wh{E}(R\wt\gamma\mid z,a,\boverbar{m}_K,y)- E(R\wt\gamma\mid z,a,\boverbar{m}_K,y) } \\ 
&&\qquad\qquad\qquad\qquad\qquad\qquad\qquad\qquad= O_p( \zeta_{1n}(\sqrt{l_n/n}+l_n^{-\alpha_2})+l_n^{-\alpha_2}),
\eeqrs
uniformly over $\wt\gamma\in\Gamma$. And Combining \eqref{eq:supp_pf_delta_exter_bound_1}, \eqref{eq:supp_pf_delta_exter_bound_2.2}, \eqref{eq:supp_pf_delta_exter_bound_3} and \eqref{eq:supp_pf_delta_exter_bound_4} yields
\beqrs\label{eq:supp_pf_delta_exter_bound_0.2}
\frac{1}{n}\sum_{i=1}^{n} \bigbc{\wh{E}(R\wt\gamma\mid Z_i,A_i,\boverbar{M}_{K,i},Y_i)- E(R\wt\gamma\mid Z_i,A_i,\boverbar{M}_{K,i},Y_i) }^2 = O_p( l_n/n + l_n^{-2\alpha_2}),
\eeqrs
uniformly over $\wt\gamma\in\Gamma$.
This completes the proof of the lemma.

\end{proof}

\begin{lemma}\label{lemma:supp_stoc_ec}
Suppose \Cref{ass:supp_dgp,ass:supp_sieve_strengthened_rate,ass:supp_sieve_mu} hold. We have
\begin{align*} 
&\frac{1}{n}\sum_{i=1}^{n} R_i\big\{\wh\mu_1(X_i)-\mu_1(X_i)\big\}+\frac{1}{n}\sum_{i=1}^{n} R_i\big\{\wt\gamma(X_i,A_i,\boverbar{M}_{K,i},Y_i)\wh\mu_1(X_i)-\gamma(X_i,A_i,\boverbar{M}_{K,i},Y_i)\mu_1(X_i)\big\} \nonumber\\ 
&- E\Big[ R\big\{\wh\mu_1(X)-\mu_1(X)\big\}+ R\big\{\wt\gamma(X,A,\boverbar{M}_{K},Y)\wh\mu_1(X)-\gamma(X,A,\boverbar{M}_{K},Y)\mu_1(X)\big\}\Big] \nonumber\\ 
& + \frac{1}{n}\sum_{i=1}^{n} \varpi(R_i,Z_i,X_i,A_i,\boverbar{M}_{K,i},Y_i;\wt\gamma,\wh\mu) - \frac{1}{n}\sum_{i=1}^{n} \varpi(R_i,Z_i,X_i,A_i,\boverbar{M}_{K,i},Y_i;\gamma,\mu) \nonumber\\ 
& - E\Big[\varpi(R,Z,X,A,\boverbar{M}_{K},Y;\wt\gamma,\wh\mu) - \varpi(R,Z,X,A,\boverbar{M}_{K},Y;\gamma,\mu)\Big]= o_p(n^{-1/2}).
\end{align*}
uniformly over $\wt\gamma\in\{\gamma\in\Gamma:\|\wt\gamma-\gamma\|_\infty=o_p(1)\}$.

\end{lemma}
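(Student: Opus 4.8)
The plan is to recognise the entire left-hand side as a single centered empirical process and then split it by the type of nuisance perturbation. Writing $\mathbb{P}_n f=n^{-1}\sum_{i=1}^n f(O_i)$ with $O_i=(R_i,R_iX_i,Z_i,A_i,\boverbar{M}_{K,i},Y_i)$, and collecting all four lines, define
$$\Xi(\gamma,\mu)\triangleq R\mu_1(X)+R\gamma(X,A,\boverbar{M}_K,Y)\mu_1(X)+\varpi(R,Z,X,A,\boverbar{M}_K,Y;\gamma,\mu).$$
The quantity in the lemma is then exactly $(\mathbb{P}_n-E)\{\Xi(\wt\gamma,\wh\mu)-\Xi(\gamma,\mu)\}$, where $E$ integrates against the true law with the plug-in arguments held fixed, so the goal is $\sqrt{n}(\mathbb{P}_n-E)\{\Xi(\wt\gamma,\wh\mu)-\Xi(\gamma,\mu)\}=o_p(1)$ uniformly over $\{\wt\gamma\in\Gamma:\|\wt\gamma-\gamma\|_\infty=o_p(1)\}$. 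First I would expand $\Xi(\wt\gamma,\wh\mu)-\Xi(\gamma,\mu)$ into finitely many summands using the bilinearity of the products $\gamma\mu_1$ and $R\{1+\gamma\}\{\mu_{k+1}-\mu_k\}$ appearing in $\varpi$, grouping them as: (i) terms linear in $\wt\gamma-\gamma$ with fixed bounded coefficients built from the true $\mu_k$, $\omega_j$ and $E\{R\varrho_0\mid Z,A,\boverbar{M}_K,Y\}$; (ii) terms linear in $\wh\mu_k-\mu_k$ with coefficients of the form $R\{1+\gamma\}I(A=a_k)\prod_j\omega_j$; and (iii) the cross terms $R(\wt\gamma-\gamma)(\wh\mu_k-\mu_k)$ multiplied by bounded indicators and weights.

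For the type-(i) terms I would use that $\Gamma\subset C_B^{\alpha}$ with $\alpha>(d_x+K+1)/2$ by \Cref{ass:supp_dgp}(i); by the Hölder-ball entropy bound together with Theorems 2.5.6, 2.7.1 and 2.7.11 of \cite{Vaart_ep_1996}, $\Gamma$ is $P$-Donsker, and multiplying by a fixed bounded coefficient preserves this. Asymptotic equicontinuity of a Donsker class combined with $\|\wt\gamma-\gamma\|_{\calL_2}\le\|\wt\gamma-\gamma\|_\infty=o_p(1)$ then forces every type-(i) contribution to be $o_p(n^{-1/2})$, uniformly in $\wt\gamma$.

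The type-(ii) and type-(iii) terms are the crux, and a direct Donsker argument is unavailable because $\{\Lambda_{kn}\}$ is a growing-dimensional sieve rather than a fixed class. I would instead linearise each estimator in its coefficients, writing $\wh\mu_k-\mu_k=\bbar{u}_k^{\trans}(\wh{\bm\pi}_k-{\bm\pi}_k)+r_{kn}$ with $\|r_{kn}\|_\infty=O(t_{kn}^{-\alpha_{3k}})$ by \Cref{ass:supp_sieve_mu}(ii). The empirical-process contribution of the leading part factors as a vector empirical process $(\mathbb{P}_n-E)\{c\,\bbar{u}_k\}$ dotted with $\wh{\bm\pi}_k-{\bm\pi}_k$; bounding the vector by $O_p(\sqrt{t_{kn}/n})$ (each coordinate lying, uniformly in $\wt\gamma$, in a Donsker class indexed by $\wt\gamma$) and invoking the coefficient rate from \Cref{thm:convergence_rate_gamma_eta}, the product is $o_p(n^{-1/2})$ under \Cref{ass:supp_sieve_strengthened_rate}(ii), while the remainder $r_{kn}$ contributes $O(\sqrt{n}\,t_{kn}^{-\alpha_{3k}})\cdot n^{-1/2}=o_p(n^{-1/2})$ via the same assumption. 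The cross terms (iii) are treated identically, the extra factor $\|\wt\gamma-\gamma\|_\infty=o_p(1)$ only sharpening the bound.

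The main obstacle I anticipate is controlling the empirical process over the growing sieve for the $\mu_k$ pieces: since $\Lambda_{kn}$ is not a fixed Donsker class, one must trade the $\calL_2$-smallness of $\wh\mu_k-\mu_k$ against the $\sqrt{t_{kn}}$ metric entropy of the sieve, and the cancellation yielding $o_p(n^{-1/2})$ rests precisely on the strengthened constraints $\zeta_{2kn}^2\sqrt{t_{kn}^2/n}=o(1)$ and $\sqrt{n}\,t_{kn}^{-\alpha_{3k}}=o(1)$ of \Cref{ass:supp_sieve_strengthened_rate}(ii) together with the rates of \Cref{thm:convergence_rate_gamma_eta}. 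A secondary delicacy is making the Donsker-based equicontinuity of the $\wt\gamma$-indexed coordinate processes hold uniformly over $\{\|\wt\gamma-\gamma\|_\infty=o_p(1)\}$, which is exactly where the smoothness and boundedness of $\Gamma$ imposed in \Cref{ass:supp_dgp} are indispensable.
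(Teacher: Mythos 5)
Your proposal is correct and follows essentially the same route as the paper's proof: the paper likewise splits the centered process into a Donsker-class equicontinuity argument for the part indexed only by $\wt\gamma$ (its term \eqref{eq:supp_stoc_ec_T3}) and, for the $\wh\mu_k$ parts, the linearization $\wh\mu_k-\mu_{kn}=\bbar{u}_k\trans(\wh{\bm\pi}_k-{\bm\pi}_k)$ with the centered vector process bounded by $O_p(\sqrt{t_{kn}/n})$ and multiplied by the coefficient rate from \Cref{thm:convergence_rate_gamma_eta}, closing with \Cref{ass:supp_sieve_strengthened_rate}(ii). The only difference is cosmetic regrouping (the paper keeps the weight $R(1+\wt\gamma)$ on the $\mu$-difference terms rather than separating out your cross terms (iii)), which does not change the argument.
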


\begin{proof}
We first notice that 
\begin{small}
\begin{align} 
& \frac{1}{n}\sum_{i=1}^{n} R_i\big\{\wh\mu_1(X_i)-\mu_1(X_i)\big\}+\frac{1}{n}\sum_{i=1}^{n} R_i\big\{\wt\gamma(X_i,A_i,\boverbar{M}_{K,i},Y_i)\wh\mu_1(X_i)-\gamma(X_i,A_i,\boverbar{M}_{K,i},Y_i)\mu_1(X_i)\big\} \nonumber\\ 
&- E\Big[ R\big\{\wh\mu_1(X)-\mu_1(X)\big\}+ R\big\{\wt\gamma(X,A,\boverbar{M}_{K},Y)\wh\mu_1(X)-\gamma(X,A,\boverbar{M}_{K},Y)\mu_1(X)\big\}\Big] \nonumber\\ 
& + \frac{1}{n}\sum_{i=1}^{n} \varpi(R_i,Z_i,X_i,A_i,\boverbar{M}_{K,i},Y_i;\wt\gamma,\wh\mu) - \frac{1}{n}\sum_{i=1}^{n} \varpi(R_i,Z_i,X_i,A_i,\boverbar{M}_{K,i},Y_i;\gamma,\mu) \nonumber\\ 
& - E\Big[\varpi(R,Z,X,A,\boverbar{M}_{K},Y;\wt\gamma,\wh\mu) - \varpi(R,Z,X,A,\boverbar{M}_{K},Y;\gamma,\mu)\Big] \nonumber\\ 
=&~ \frac{1}{n}\sum_{i=1}^{n} R_i\{1+\wt\gamma(X_i,A_i,\boverbar{M}_{K,i},Y_i)\}\left\{1-I(A_i=a_1)\omega_1(X_i)\right\}\big\{\wh\mu_1(X_i)-\mu_1(X_i)\big\} \nonumber\\ 
&~ - E\big[R\{1+\wt\gamma(X,A,\boverbar{M}_{K},Y)\}\left\{1-I(A=a_1)\omega_1(X)\right\}\big\{\wh\mu_1(X)-\mu_1(X)\big\}\big] \label{eq:supp_stoc_ec_T1}\\ 
&~ +\sum_{k=2}^{K+1} \Bigg(
\frac{1}{n}\sum_{i=1}^{n} R_i\{1+\wt\gamma(X_i,A_i,\boverbar{M}_{K,i},Y_i)\}\Big\{\prod_{j=1}^{k-1} \omega_j(X_i,\boverbar{M}_{j-1,i})\Big\} \nonumber\\ 
&\qquad\quad \times\left\{I(A_i=a_{k-1})-I(A_i=a_k)\omega_k(X_i,\boverbar{M}_{k-1,i})\right\}\big\{\wh\mu_k(X_i,\boverbar{M}_{k-1})-\mu_k(X_i,\boverbar{M}_{k-1,i})\big\} \nonumber\\
&~ - E\Big[R\{1+\wt\gamma(X,A,\boverbar{M}_{K},Y)\}\Big\{\prod_{j=1}^{k-1} \omega_j(X,\boverbar{M}_{j-1})\Big\} \nonumber\\ 
&\qquad\quad \times\left\{I(A=a_{k-1})-I(A=a_k)\omega_k(X,\boverbar{M}_{k-1})\right\}\big\{\wh\mu_k(X,\boverbar{M}_{k-1})-\mu_k(X,\boverbar{M}_{k-1})\big\}\Big]
\Bigg ) \label{eq:supp_stoc_ec_T2}\\ 
&~ + \frac{1}{n}\sum_{i=1}^{n} R_i\{\wt\gamma(X_i,A_i,\boverbar{M}_{K,i},Y_i)-\gamma(X_i,A_i,\boverbar{M}_{K,i},Y_i)\} \{\phi(X_i,A_i,\boverbar{M}_{K,i},Y_i)-E(R\varrho_0\mid Z_i,A_i,\boverbar{M}_{K,i},Y_i)\} \nonumber\\ 
&~ - E\big[R\{\wt\gamma(X,A,\boverbar{M}_{K},Y)-\gamma(X,A,\boverbar{M}_{K},Y)\} \{\phi(X,A,\boverbar{M}_{K},Y)-E(R\varrho_0\mid Z,A,\boverbar{M}_{K},Y)\}\big].\label{eq:supp_stoc_ec_T3}
\end{align}
\end{small}
In what follows, we will show that \eqref{eq:supp_stoc_ec_T1}--\eqref{eq:supp_stoc_ec_T3} are all $o_p(n^{-1/2})$.
We consider \eqref{eq:supp_stoc_ec_T1}. By \Cref{ass:supp_sieve_mu}(ii), there exist ${\bm\pi}_{11}\in\mR^{t_{1n}}$ such that $\mu_{1n}(x)=\bbar{u}_1(x)\trans{\bm\pi}_{11}$ and 
\beqrs
\sup_{x\in\calX} \Abs{\mu_{1}(x)-\mu_{1n}(x)} = O_p(t_{1n}^{-\alpha_{31}}).
\eeqrs
Let $\epsilon=1-I(A=a_1)\omega_1(X)$, then we have 
\begin{align}
&\Big|\frac{1}{n}\sum_{i=1}^{n} R_i\{1+\wt\gamma(X_i,A_i,\boverbar{M}_{K,i},Y_i)\}\left\{1-I(A_i=a_1)\omega_1(X_i)\right\}\big\{\wh\mu_1(X_i)-\mu_1(X_i)\big\} \nonumber\\ 
&~ - E\big[R\{1+\wt\gamma(X,A,\boverbar{M}_{K},Y)\}\left\{1-I(A=a_1)\omega_1(X)\right\}\big\{\wh\mu_1(X)-\mu_1(X)\big\}\big]\Big|\nonumber\\ 
\lesssim &~ \sup_{x\in\calX} \Abs{\mu_{1}(x)-\mu_{1n}(x)}\nonumber\\ 
&~ + \Big\|\frac{1}{n}\sum_{i=1}^{n} \Big[R_i(1+\wt\gamma_i)\epsilon_i\bbar{u}_1(X_i)-E\{R(1+\wt\gamma)\epsilon \bbar{u}_1(X)\}\Big]\Big\|\cdot\big\|\wh{\bm\pi}_{11}-{\bm\pi}_{11}\big\|.\label{eq:supp_stoc_ec_T1.1}
\end{align}
Because $r\epsilon u_{1j}(x)(1+\wt\gamma)$ for $j=1,\ldots,t_{1n}$ is linear and hence Lipschitz continuous in $\wt\gamma\in\Gamma$ under $\|\cdot\|_\infty$, by \Cref{ass:supp_dgp} and Theorem 2.5.6, 2.7.1 and 2.7.11 of \cite{Vaart_ep_1996}, we conclude $\calG_j=\{r\epsilon u_{1j}(x)(1+\wt\gamma): \wt\gamma\in\Gamma\}$ is Donsker. It follows that 
\begin{align}
\Big\|\frac{1}{n}\sum_{i=1}^{n} \Big[R_i(1+\wt\gamma_i)\epsilon_i\bbar{u}_1(X_i)-E\{R(1+\wt\gamma)\epsilon \bbar{u}_1(X)\}\Big]\Big\| = O_p(\sqrt{t_{1n}/n}), \label{eq:supp_stoc_ec_T1.2}
\end{align}
uniformly over $\wt\gamma\in\Gamma$. By \Cref{ass:supp_sieve_strengthened_rate}, combining \eqref{eq:supp_pf_coeff_eta_01}, \eqref{eq:supp_stoc_ec_T1.1} and \eqref{eq:supp_stoc_ec_T1.2} yield that
\begin{align*}
\eqref{eq:supp_stoc_ec_T1} = O_p(t_{1n}^{-\alpha_{31}})+O_p(\sqrt{t_{1n}/n})O_p\big(\sqrt{\lambda_1}+\textstyle{\sum_{j=2}^{K+1}}t_{jn}^{-\alpha_{3j}}\big) = o_p(n^{-1/2}).
\end{align*}
Similarly, we can obtain $\eqref{eq:supp_stoc_ec_T2}=o_p(n^{-1/2})$. As for \eqref{eq:supp_stoc_ec_T3}, note that $r\wt\gamma\{\phi(x,a,\boverbar{m}_{K},y)-E(R\varrho_0\mid z,a,\boverbar{m}_{K},y)\}$ is linear and hence Lipschitz continuous in $\wt\gamma\in\Gamma$ under $\|\cdot\|_\infty$. Then by \Cref{ass:supp_dgp} and Theorem 2.5.6, 2.7.1 and 2.7.11 of \cite{Vaart_ep_1996}, we conclude $\calG=\{r\wt\gamma\{\phi(x,a,\boverbar{m}_{K},y)-E(R\varrho_0\mid z,a,\boverbar{m}_{K},y)\}: \wt\gamma\in\Gamma\}$ is a Donsker class. Thus the stochastic equicontinuity implies that $\eqref{eq:supp_stoc_ec_T3}=o_p(n^{-1/2})$ uniformly over $\{\wt\gamma\in\Gamma:\|\wt\gamma-\gamma\|_\infty=o_p(1)\}$. Then we complete the proof.
\end{proof}

\begin{lemma}\label{lemma:second_order_residual}
Suppose \Cref{ass:supp_dgp,ass:supp_sieve_strengthened_rate,ass:supp_sieve_mu} hold. We have
\begin{align*} 
&E\big[R\{\wt\gamma(X,A,\boverbar{M}_{K},Y)-\gamma(X,A,\boverbar{M}_{K},Y)\}\left\{1-I(A=a_1)\omega_1(X)\right\}\big\{\wh\mu_1(X)-\mu_1(X)\big\}\big] \\ 
&~ + \sum_{k=2}^{K+1}E\Big[R\{\wt\gamma(X,A,\boverbar{M}_{K},Y)-\gamma(X,A,\boverbar{M}_{K},Y)\}\Big\{\prod_{j=1}^{k-1} \omega_j(X,\boverbar{M}_{j-1})\Big\} \nonumber\\ 
&\quad \times\left\{I(A=a_{k-1})-I(A=a_k)\omega_k(X,\boverbar{M}_{k-1})\right\}\big\{\wh\mu_k(X,\boverbar{M}_{k-1})-\mu_k(X,\boverbar{M}_{k-1})\big\}\Big]\\
=&~ o_p(n^{-1/2}),
\end{align*}
uniformly over $\wt\gamma\in\{\gamma\in\Gamma:\|\wt\gamma-\gamma\|_\infty=o_p(1)\}$.

\end{lemma}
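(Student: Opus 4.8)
The plan is to show that each of the $K+1$ summands is individually $o_p(n^{-1/2})$, uniformly over the stated neighborhood, by exploiting the orthogonality of the weight functions and then applying the Cauchy--Schwarz inequality to split each summand into a product of the two nuisance errors $\wt\gamma-\gamma$ and $\wh\mu_k-\mu_k$. Throughout I abbreviate the cumulative weight by $W_{k-1}(X,\boverbar M_{k-2})=\prod_{j=1}^{k-1}\omega_j(X,\boverbar M_{j-1})$ and the residual weight by $\rho_1=1-I(A=a_1)\omega_1(X)$ and $\rho_k=I(A=a_{k-1})-I(A=a_k)\omega_k(X,\boverbar M_{k-1})$ for $k\ge2$. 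When $a_{k-1}=a_k$, \eqref{eq:def_wk} gives $\omega_k\equiv1$, so $\rho_k\equiv0$ and the corresponding summand vanishes; hence I may assume $a_{k-1}\neq a_k$. By \Cref{ass:positivity} and the compact-support condition in \Cref{ass:supp_dgp}(ii), every $\omega_j$, and therefore $W_{k-1}$ and $\rho_k$, is uniformly bounded, a fact I will use repeatedly.

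The key structural input is \Cref{lemma:iden_coef}, which I would first recast in unconditional form: multiplying each conditional moment restriction by $f(R=1\mid X,\boverbar M_{k-1})>0$ yields $E\big[R\{1+\gamma\}\rho_k\mid X,\boverbar M_{k-1}\big]=0$. Since $W_{k-1}$ and $\wh\mu_k-\mu_k$ are measurable with respect to $(X,\boverbar M_{k-1})$, conditioning the generic summand on $(X,\boverbar M_{k-1})$ and pulling these factors out gives
\[
T_k:=E\big[R(\wt\gamma-\gamma)W_{k-1}\rho_k(\wh\mu_k-\mu_k)\big]=E\big[W_{k-1}(\wh\mu_k-\mu_k)\,\delta_k\big],\quad \delta_k:=E\big[R(\wt\gamma-\gamma)\rho_k\mid X,\boverbar M_{k-1}\big].
\]
Writing $\wt\gamma-\gamma=(1+\wt\gamma)-(1+\gamma)$ and invoking the unconditional identity shows that the $(1+\gamma)$ piece integrates to zero, so $\delta_k$ depends only on the perturbation $\wt\gamma-\gamma$ and vanishes when $\wt\gamma=\gamma$; this is precisely what renders $T_k$ a genuine second-order remainder rather than a first-order term.

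The remaining step is Cauchy--Schwarz, $|T_k|\le\|W_{k-1}(\wh\mu_k-\mu_k)\|_{\calL_2}\,\|\delta_k\|_{\calL_2}$. For the first factor, boundedness of $W_{k-1}$ reduces it to $\|\wh\mu_k-\mu_k\|_{\calL_2}$, which \Cref{thm:convergence_rate_gamma_eta} controls at rate $o_p(n^{-1/4})$ under \Cref{ass:supp_sieve_strengthened_rate}. It then suffices to bound $\delta_k$. Using $E[R\mid X,A,\boverbar M_K,Y]=(1+\gamma)^{-1}$ together with the explicit form of $\rho_k$, I would reduce $\delta_k$ to $f(a_{k-1}\mid X,\boverbar M_{k-1})$ times the difference of the $A=a_{k-1}$ and $A=a_k$ conditional means of $(\wt\gamma-\gamma)/(1+\gamma)$, and then relate its norm to the convergence of $\wh\gamma$ supplied by \Cref{thm:weak_rate_delta}.

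Controlling $\delta_k$ at rate $o_p(n^{-1/4})$ is the main obstacle, and the subtlety is that $\wh\gamma$ attains this rate only under the weak pseudo-metric $\|\cdot\|_w$ (\Cref{thm:weak_rate_delta}), not under $\|\cdot\|_{\calL_2}$, owing to the ill-posedness of \eqref{eq:iden_gamma}. A naive bound $\|\delta_k\|_{\calL_2}\lesssim\|\wt\gamma-\gamma\|_{\calL_2}$ is therefore insufficient, since the neighborhood only guarantees $\|\wt\gamma-\gamma\|_\infty=o_p(1)$. The route I would pursue is to argue that the linear functional $\wt\gamma\mapsto E[R\wt\gamma\,W_{k-1}\rho_k(\wh\mu_k-\mu_k)]$ is continuous in $\|\cdot\|_w$ and hence admits a Riesz representer in the $w$-closure of $\mathrm{span}(\Gamma)$ (exactly as for $\varrho$ preceding \Cref{ass:representer}), whose $w$-norm is itself of order $\|\wh\mu_k-\mu_k\|_{\calL_2}$ because the representer inherits the factor $\wh\mu_k-\mu_k$; the representation $T_k=\langle\mathrm{rep}_k,\wt\gamma-\gamma\rangle_w$ then gives $|T_k|\lesssim\|\wh\mu_k-\mu_k\|_{\calL_2}\,\|\wt\gamma-\gamma\|_w=o_p(n^{-1/2})$. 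Verifying the requisite $w$-continuity (a source condition tied to \Cref{ass:completeness}) and the uniformity over the shrinking neighborhood through a Donsker/stochastic-equicontinuity argument in the spirit of \Cref{lemma:supp_stoc_ec} is where the real work lies; summing the finitely many $T_k$ then completes the proof.
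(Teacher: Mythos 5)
Your setup is sound up to the Cauchy--Schwarz step: the reduction $T_k=E\{W_{k-1}(\wh\mu_k-\mu_k)\delta_k\}$ with $\delta_k=E\{R(\wt\gamma-\gamma)\rho_k\mid X,\boverbar{M}_{k-1}\}$ is correct, and your observation that the $(1+\gamma)$ part integrates to zero by \Cref{lemma:iden_coef} is exactly the mechanism that makes $T_k$ second order. The gap is in how you close the rate. First, the lemma is claimed uniformly over $\{\wt\gamma\in\Gamma:\|\wt\gamma-\gamma\|_\infty=o_p(1)\}$, which gives only $\|\wt\gamma-\gamma\|_w\le\|\wt\gamma-\gamma\|_\infty=o_p(1)$; so even granting your representer bound, the product $\|\wh\mu_k-\mu_k\|_{\calL_2}\,\|\wt\gamma-\gamma\|_w$ is $o_p(n^{-1/4})\cdot o_p(1)$, not $o_p(n^{-1/2})$. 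Second, the representer step itself is unjustified: the $w$-norm of the Riesz representer of $\wt\gamma\mapsto E\{R\wt\gamma\,W_{k-1}\rho_k(\wh\mu_k-\mu_k)\}$ equals the operator norm of that functional under $\|\cdot\|_w$, and in an ill-posed problem this is \emph{not} controlled by $\|W_{k-1}\rho_k(\wh\mu_k-\mu_k)\|_{\calL_2}$ — that control is precisely a source condition. \Cref{ass:representer} imposes such a condition only for the fixed direction $\phi$, not for the random, $n$-dependent direction built from $\wh\mu_k-\mu_k$, so "the representer inherits the factor $\wh\mu_k-\mu_k$" does not follow from anything assumed.

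The paper's proof avoids a rate-times-rate argument altogether by exploiting the closed form of $\wh\mu_k$ as a weighted sieve least-squares estimator. It writes $\wh\mu_1-\mu_1=\bbar{u}_1\trans(\wh{\bm\pi}_{11}-{\bm\pi}_{11})+(\mu_{1n}-\mu_1)$; the approximation bias is $O(t_{1n}^{-\alpha_{31}})=o(n^{-1/2})$ by \Cref{ass:supp_sieve_strengthened_rate}(ii), and for the coefficient part the normal equations let one pair $E\{R(\wt\gamma-\gamma)\epsilon\,\bbar{u}_1(X)\trans\}$ with $\wh{\bm\pi}_{11}-{\bm\pi}_{11}$ through the projection vector $\bm\Upsilon$, after which the leading term is a centered sample average (the weighted residual $\mu_2-\mu_1$ has conditional mean zero) whose variance carries the factor $\|\wt\gamma-\gamma\|_\infty^2=o_p(1)$; Chebyshev then gives $o_p(1)\cdot O_p(n^{-1/2})=o_p(n^{-1/2})$ using only sup-norm consistency of $\wt\gamma$, with the $\wh\gamma-\gamma$ and $\wh\mu_2-\mu_2$ contributions handled as in the proof of \Cref{thm:convergence_rate_gamma_eta}. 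To salvage your route you would have to either strengthen the neighborhood to $\|\wt\gamma-\gamma\|_w=o_p(n^{-1/4})$ \emph{and} assume the additional source condition for these directions, or switch to the projection argument above.
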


\begin{proof}
By \Cref{ass:supp_sieve_mu}(ii), there exist ${\bm\pi}_{11}\in\mR^{t_{1n}}$ such that $\mu_{1n}(x)=\bbar{u}_1(x)\trans{\bm\pi}_{11}$ and 
\beqrs
\sup_{x\in\calX} \Abs{\mu_{1}(x)-\mu_{1n}(x)} = O_p(t_{1n}^{-\alpha_{31}}).
\eeqrs
Let $\epsilon=1-I(A=a_1)\omega_1(X)$, then we have 
\begin{align}
&\Big| E\big[R\{\wt\gamma(X,A,\boverbar{M}_{K},Y)-\gamma(X,A,\boverbar{M}_{K},Y)\}\left\{1-I(A=a_1)\omega_1(X)\right\}\big\{\wh\mu_1(X)-\mu_1(X)\big\}\big]\Big|\nonumber\\ 
\lesssim &~ \sup_{x\in\calX} \Abs{\mu_{1}(x)-\mu_{1n}(x)}
+ \Big|E\{R(\wt\gamma-\gamma)\epsilon \bbar{u}_1(X)\trans\}\big\{\wh{\bm\pi}_{11}-{\bm\pi}_{11}\big\} \Big|. \label{eq:supp_lemma_second_order_residual_T1.1}
\end{align}
Let $\bm\Upsilon_1\triangleq E[I(A=a_1)R(1+\gamma) \bbar{u}_1(X)\bbar{u}_1(X)\trans]^{-1} E[R(\wt\gamma-\gamma)\epsilon \bbar{u}_1(X)]$. Then 
\beqrs
&& \sqrt{f_{A\mid X}(a_1\mid X)}\bbar{u}_1(x)\trans\bm\Upsilon \\ 
&&\qquad = E\left[\frac{E\{R(\wt\gamma-\gamma)\epsilon\mid X\}}{\sqrt{f_{A\mid X}(a_1\mid X)}} \sqrt{f_{A\mid X}(a_1\mid X)}\bbar{u}_1(X)\right] \\ 
&&\qquad\qquad \times E\left[f(a_1\mid X) \bbar{u}_1(X)\bbar{u}_1(X)\trans\right]^{-1}\sqrt{f_{A\mid X}(a_1\mid X)}\bbar{u}_1(x)
\eeqrs
is the $\calL_2$-projection of $\frac{E\{R(\wt\gamma-\gamma)\epsilon\mid X\}}{\sqrt{f_{A\mid X}(a_1\mid X)}}$ on the space linearly spanned by $\{\sqrt{f_{A\mid X}(a_1\mid X)} \bbar{u}_1(X)\}$.
Thus we obtain 
\beqr\label{eq:lemma:second_order_residual_pf2}
\sup_{x\in\calX} \Abs{\sqrt{f_{A\mid X}(a_1\mid X)}\bbar{u}_1(x)\trans\bm\Upsilon-\frac{E\{R(\wt\gamma-\gamma)\epsilon\mid X\}}{\sqrt{f_{A\mid X}(a_1\mid X)}} } =o(1).
\eeqr
Note that
\begin{subequations}
\begin{align}
& \Abs{E[R(\wt\gamma-\gamma) \epsilon \bbar{u}_1(X)\trans] (\wh{\bm\pi}_{11}-{\bm\pi}_{11})} \nonumber\\ 
\lesssim&~ \Abs{ \frac{1}{n}\sum_{i=1}^{n} I(A_i=a_1)R_i(1+\wh\gamma_i) \{\wh\mu_2(X_i,M_{1i})-\mu_{1n}(X_i)\}\bbar{u}_1(X_i)\trans\bm\Upsilon} \nonumber\\ 
\le&~ \Abs{ \frac{1}{n}\sum_{i=1}^{n} I(A_i=a_1)R_i(1+\gamma_i) \{\mu_{1}(X_i)-\mu_{1n}(X_i)\}\bbar{u}_1(X_i)\trans\bm\Upsilon } \label{eq:lemma:second_order_residual_pf3.1} \\
&+ \Abs{ \frac{1}{n}\sum_{i=1}^{n} I(A_i=a_1)R_i(1+\gamma_i) \{\mu_2(X_i,M_{1i})-\mu_{1}(X_i)\}\bbar{u}_1(X_i)\trans \bm\Upsilon} \label{eq:lemma:second_order_residual_pf3.2}\\
&+ \Abs{ \frac{1}{n}\sum_{i=1}^{n} I(A_i=a_1)R_i(\wh\gamma_i-\gamma_i) \{\mu_2(X_i,M_{1i})-\mu_{1}(X_i)\}\bbar{u}_1(X_i)\trans \bm\Upsilon} \label{eq:lemma:second_order_residual_pf3.3} \\ 
&+ \Abs{ \frac{1}{n}\sum_{i=1}^{n} I(A_i=a_1)R_i(\wh\gamma_i-\gamma_i) \{\mu_{1}(X_i)-\mu_{1n}(X_i)\}\bbar{u}_1(X_i)\trans \bm\Upsilon} \label{eq:lemma:second_order_residual_pf3.4}\\ 
&+ \Abs{ \frac{1}{n}\sum_{i=1}^{n} I(A_i=a_1)R_i(1+\wh\gamma_i) \{\wh\mu_2(X_i,M_{1i})-\mu_2(X_i,M_{1i})\}\bbar{u}_1(X_i)\trans\bm\Upsilon} . \label{eq:lemma:second_order_residual_pf3.5} 
\end{align}
\end{subequations}
\Cref{ass:supp_sieve_mu}(ii), \ref{ass:supp_sieve_strengthened_rate}(ii) and \eqref{eq:lemma:second_order_residual_pf2} implies that 
\beqrs
\eqref{eq:lemma:second_order_residual_pf3.1} \lesssim \sup_{(m,x)\in\calM\times\calX} \Abs{\mu_1(x) - \mu_{1n}(x) }\cdot \|\wt\gamma-\gamma\|_\infty = O_p(t_{1n}^{-\alpha_{31}})o_p(1)=o_p(n^{-1/2}),
\eeqrs
uniformly over $\wt\gamma\in\{\gamma\in\Gamma:\|\wt\gamma-\gamma\|_\infty=o_p(1)\}$. Note that \eqref{eq:lemma:second_order_residual_pf2} implies that 
\begin{align*}
& E|\eqref{eq:lemma:second_order_residual_pf3.2}|^2 \\
&~~= n^{-1} E\left[\left\{I(A=a_1)R(1+\gamma) \{\mu_2(X,M_{1})-\mu_{1}(X)\}\bbar{u}_1(X)\trans \bm\Upsilon\right\}^2\right] \\
&~~= n^{-1} E\left[\left\{ \frac{I(A=a_1)R(1+\gamma) \{\mu_2(X,M_{1})-\mu_{1}(X)\}}{\sqrt{f_{A\mid X}(a_1\mid X)}}\cdot \frac{E\{R(\wt\gamma-\gamma)\epsilon\mid X\}}{\sqrt{f_{A\mid X}(a_1\mid X)}}\right\}^2\right]\left\{1+o_p(1)\right\} \\ 
&~~\lesssim n^{-1} E\left[ \frac{E\left[I(A=a_1)R(1+\gamma)^2 \{\mu_2(X,M_{1})-\mu_{1}(X)\}^2\mid X\right]}{f_{A\mid X}(a_1\mid X)^2} R(\wt\gamma-\gamma)^2\epsilon^2\right] \\ 
&~~\lesssim n^{-1} \|\wt\gamma-\gamma\|_\infty^2 =o(n^{-1}),
\end{align*}
uniformly over $\wt\gamma\in\{\gamma\in\Gamma:\|\wt\gamma-\gamma\|_\infty=o_p(1)\}$. Then the Chebyshev inequality implies that 
$
\eqref{eq:lemma:second_order_residual_pf3.2} = o_p(n^{-1/2}).
$
Similar to \eqref{eq:supp_pf_coeff_gamma_T2}, we can obtain
$
\eqref{eq:lemma:second_order_residual_pf3.3} = o_p(n^{-1/2})
$ under \Cref{ass:supp_dgp,ass:supp_sieve_delta_inter,ass:supp_sieve_delta_exter,ass:supp_technique_mu}.
Similarly to \eqref{eq:supp_pf_coeff_gamma_T3}, we obtain $
\eqref{eq:lemma:second_order_residual_pf3.4} = o_p(t_{1n}^{-\alpha_{31}}) = o_p(n^{-1/2})
$ by \Cref{ass:supp_sieve_strengthened_rate}.
As for \eqref{eq:lemma:second_order_residual_pf3.5}, it can be calculated similar to the last term of \eqref{eq:supp_pf_coeff_eta_1}. Then we obtain 
\begin{align*}
\eqref{eq:lemma:second_order_residual_pf3.5}\lesssim
\|\wh\gamma-\gamma\|_{\infty}\cdot O_p(n^{-1/2}+\textstyle{\sum_{j=2}^{K+1}}t_{jn}^{-\alpha_{3j}})=o_p(n^{-1/2}),
\end{align*}
uniformly over $\wt\gamma \in \{\wt\gamma\in\Gamma:\|\wt\gamma-\gamma\|_\infty= o_p(1)\}$.
Combing above arguments implies that
\beqr
\Big|E\{R(\wt\gamma-\gamma)\epsilon \bbar{u}_1(X)\trans\}\big\{\wh{\bm\pi}_{11}-{\bm\pi}_{11}\big\} \Big|=o_p(n^{-1/2}). \label{eq:supp_lemma_second_order_residual_T1.2}
\eeqr
uniformly over $\wt\gamma\in\{\gamma\in\Gamma:\|\wt\gamma-\gamma\|_\infty=o_p(1)\}$. Combining \eqref{eq:supp_lemma_second_order_residual_T1.1} and \eqref{eq:supp_lemma_second_order_residual_T1.2} yields that 
\begin{align*} 
E\big[R\{\wt\gamma(X,A,\boverbar{M}_{K},Y)-\gamma(X,A,\boverbar{M}_{K},Y)\}\left\{1-I(A=a_1)\omega_1(X)\right\}\big\{\wh\mu_1(X)-\mu_1(X)\big\}\big] = o_p(n^{-1/2}).
\end{align*}
uniformly over $\wt\gamma\in\{\gamma\in\Gamma:\|\wt\gamma-\gamma\|_\infty=o_p(1)\}$. Similarly we can obtain
\begin{align*} 
& E\Big[R\{\wt\gamma(X,A,\boverbar{M}_{K},Y)-\gamma(X,A,\boverbar{M}_{K},Y)\}\Big\{\prod_{j=1}^{k-1} \omega_j(X,\boverbar{M}_{j-1})\Big\} \nonumber\\ 
&\quad \times\left\{I(A=a_{k-1})-I(A=a_k)\omega_k(X,\boverbar{M}_{k-1})\right\}\big\{\wh\mu_k(X,\boverbar{M}_{k-1})-\mu_k(X,\boverbar{M}_{k-1})\big\}\Big]= o_p(n^{-1/2}).
\end{align*}
for $k\in\{2,\ldots,K+1\}$.
Then we complete the proof of the lemma.
\end{proof}

\titleformat{\section}{\normalfont\Large\bfseries}{\thesection}{1em}{}
\addcontentsline{toc}{section}{References}
\putbib
\end{bibunit}
\end{document}